\pdfsuppresswarningpagegroup=1
\documentclass[sigconf,edbt]{acmart-edbt2024}

\usepackage{algorithm}
\usepackage[noend]{algorithmic}
\usepackage{balance}
\usepackage{booktabs}
\usepackage{enumitem}
\usepackage{subcaption}

\newtheorem{problem}{Problem}

\newcommand{\parax}[1]{\noindent\textbf{#1.}}

\DeclareMathOperator*{\argmax}{arg\,max}

\setcopyright{rightsretained}

\settopmatter{printacmref=false, printccs=false, printfolios=false}

\pagestyle{empty}

\begin{document}
\title{Balancing Utility and Fairness in Submodular Maximization}

\author{Yanhao Wang}
\orcid{0000-0002-7661-3917}
\affiliation{%
  \institution{East China Normal University}
  \city{Shanghai}
  \country{China}
}
\email{yhwang@dase.ecnu.edu.cn}

\author{Yuchen Li}
\orcid{0000-0001-9646-291X}
\affiliation{%
  \institution{Singapore Management University}
  \city{Singapore}
  \country{Singapore}
}
\email{yuchenli@smu.edu.sg}

\author{Francesco Bonchi}
\orcid{0000-0001-9464-8315}
\affiliation{%
  \institution{CENTAI Institute \city{Turin} \country{Italy}}
  \institution{Eurecat \city{Barcelona} \country{Spain}}
}
\email{bonchi@centai.eu}

\author{Ying Wang}
\affiliation{%
  \institution{East China Normal University}
  \city{Shanghai}
  \country{China}
}
\email{yingwang007@stu.ecnu.edu.cn}

\begin{abstract}
Submodular function maximization is a fundamental combinatorial optimization problem with plenty of applications -- including data summarization, influence maximization, and recommendation. In many of these problems, the goal is to find a solution that maximizes the average utility over all users, for each of whom the utility is defined by a monotone submodular function. However, when the population of users is composed of several demographic groups, another critical problem is whether the utility is fairly distributed across different groups. Although the \emph{utility} and \emph{fairness} objectives are both desirable, they might contradict each other, and, to the best of our knowledge, little attention has been paid to optimizing them jointly.

To fill this gap, we propose a new problem called \emph{Bicriteria Submodular Maximization} (BSM) to balance utility and fairness. Specifically, it requires finding a fixed-size solution to maximize the utility function, subject to the value of the fairness function not being below a threshold. Since BSM is inapproximable within any constant factor, we focus on designing efficient instance-dependent approximation schemes. Our algorithmic proposal comprises two methods, with different approximation factors, obtained by converting a BSM instance into other submodular optimization problem instances. Using real-world and synthetic datasets, we showcase applications of our proposed methods in three submodular maximization problems: maximum coverage, influence maximization, and facility location.
\end{abstract}

\maketitle

\section{Introduction}
\label{sec:intro}

The awareness that decisions informed by data analytics could have inadvertent discriminatory effects against particular demographic groups has grown substantially over the last few years \cite{JagadishBEGGS19}.
Researchers have been studying how to inject \emph{group fairness}, along the lines of \emph{demographic parity} \cite{Dwork12} or \emph{equal opportunity} \cite{HardtPNS16}, into algorithmic decision-making processes.
The bulk of this algorithmic fairness literature has mainly focused on avoiding discrimination against a sensitive attribute (i.e., a protected social group) in supervised
machine learning \cite{KearnsRW17}, while less attention has been devoted to combinatorial optimization problems such as ranking \cite{AsudehJS019}, selection \cite{Gummadi_selection}, allocation \cite{MashiatGRFD22}, and matching \cite{dgs2020, EsmaeiliDNSD22}.

In this paper, we study how to incorporate group fairness into \emph{submodular maximization} \cite{KrauseG14} (SM), one of the most fundamental combinatorial optimization problems.
Specifically, a set function is submodular if it satisfies the ``\emph{diminishing returns}'' property, whereby the marginal gain of adding any new item decreases as the set of existing items increases. More formally, a set function $f: 2^V \rightarrow \mathbb{R}_{\geq 0}$ defined on a ground set $V$ of items is submodular if $f(S \cup  \{v\}) - f(S) \geq f(T \cup  \{v\}) - f(T)$ for any two sets $S \subseteq T \subseteq V$ and item $v \in V \setminus T$.
This property occurs in numerous data science applications, such as data summarization \cite{BadanidiyuruMKK14}, influence maximization \cite{KempeKT03}, and recommendation \cite{ParambathVC18}.

Submodular maximization with a cardinality constraint is the most widely studied optimization problem on submodular functions.
For a monotone submodular function $f$ and a cardinality constraint $k \in \mathbb{Z}^{+}$, it aims to find a size-$k$ subset $S \subseteq V$ of items such that $f(S)$ is maximized.
Although this problem is generally NP-hard \cite{Feige98}, the classic greedy algorithm \cite{NemhauserWF78} yields a best possible $(1-1/e)$-approximation factor.

In many scenarios, the goal of a submodular maximization problem is to find a good solution to serve a large number of users, for each of whom the utility is measured by a monotone submodular function.
For such problems, the classic greedy algorithm is effective in maximizing the \emph{average utility} over all users (called the \emph{utility objective}) because it is a non-negative linear combination of monotone submodular functions, which is still monotone and submodular \cite{KrauseG14}.
However, when users are composed of different demographic groups, the utilities should be \emph{fairly distributed} across groups to avoid biased outcomes \cite{10.1093/jla/laz001}.
A typical choice for the \emph{fairness objective} is to improve the welfare of the least well-off group, as per Rawls's theory of justice \cite{Rawls71}, which advocates arranging social and financial inequalities so that they are to the greatest benefit of the worst-off.
The problem of maximizing the minimum utility among all groups coincides with \emph{robust submodular maximization} \cite{Krause08, Udwani18, TsangWRTZ19} (RSM), which is inapproximable within any constant factor in general but admits approximate solutions by violating the cardinality constraints (i.e., providing solutions of sizes larger than $k$).

Although the two objectives of \emph{utility} and \emph{fairness}, when taken in isolation, can be treated as SM and RSM instances, respectively, optimizing them jointly results to be a much more challenging task, as maximizing one can come at the expense of significantly reducing the other. To our best knowledge, there has been no prior investigation into optimizing both objectives jointly.

To fill this gap, we propose a new problem called \emph{Bicriteria Submodular Maximization} (BSM) to balance utility and fairness.
As is often the case when facing two opposite objectives \cite{OhsakaM21, GershteinMY21, WeiIWBB15}, the BSM problem requires finding a fixed-size solution to maximize the utility function subject to that the value of the fairness function is at least a $\tau$-fraction of its optimum for a given factor $\tau \in [0,1]$.

Our proposal can find application in many real-world submodular maximization problems, such as:
\begin{itemize}
  \item \textbf{Maximum Coverage (MC)} is a well-known NP-hard problem with broad applications in summarization and recommendation \cite{SahaG09, SerbosQMPT17}. Given a collection of sets over the same ground set, it selects a set of $k$ sets such that their union contains as many elements as possible. When the ground set is a set of individuals divided into different population groups, we also need the solution to cover individuals from every group proportionally \cite{AsudehBD20, BandyapadhyayBB21} for equitable representation. By combining both objectives, we formulate a new problem of maximizing the total coverage of all users and the minimum average coverage for all groups simultaneously.
  \item \textbf{Influence Maximization (IM)} arises naturally in the design of viral marketing campaigns in social networks \cite{KempeKT03}. The IM problem \cite{KempeKT03, LiFWT18} asks to find a set of $k$ ``seed'' users in a social network so that, if they are targeted by the campaign, the expected influence spread in the network is maximized. Recent studies \cite{TsangWRTZ19, BeckerCDG20} introduced group fairness into IM, aimed at balancing the spread of influence among different population groups to reduce information inequality. In such context, BSM corresponds to a new problem of achieving a better trade-off between the influence spread among all users and the influence spread within the least well-off group.
  \item \textbf{Facility Location  (FL)} aims at finding a set of items to maximize the total benefits for all users: it finds application in data clustering and location-based services (LBSs) \cite{BadanidiyuruMKK14, ThejaswiOG21, LindgrenWD16}. In submodular facility location problems, the benefit of an item set to a user is defined as the maximum benefit among all items in the set to the user. As an illustrative example, a set of service points deployed in a district can lead to higher benefits for citizens whose residences are closer to their nearest service points. To achieve group-level fairness, it should ensure that each group of users receives approximately the same benefits on average. As such, BSM finds a set of items that maximizes the benefits for all users and the least well-off group to balance utility and fairness in facility location.
\end{itemize}

\smallskip\parax{Contributions \& Roadmap}
We first formally define the bicriteria submodular maximization (BSM) problem: Given two functions $f, g$ to measure the \emph{average utility} for all users and the \emph{minimum utility} among all groups, respectively, and a balance factor $\tau \in [0, 1]$, it asks for a size-$k$ set $S \subseteq V$ to maximize $f(S)$, subject to that $g(S) \geq \tau \cdot \mathtt{OPT}_g$, where $\mathtt{OPT}_g = \max_{S' \subseteq V, \lvert S' \rvert = k} g(S')$. We then prove that BSM cannot be approximated within any constant factor for all $\tau > 0$ and $k \geq 1$ (Section~\ref{sec:def}).

Due to the inapproximability of BSM, we focus on designing efficient instance-dependent approximation schemes for BSM. Specifically, we first propose a two-stage greedy algorithm combining the two greedy algorithms for \emph{submodular maximization} \cite{NemhauserWF78} and \emph{submodular cover} \cite{Wolsey82}, for maximizing $f$ and $g$ in turn to obtain a BSM solution (Section~\ref{subsec:alg2}).
Then, we devise an improved algorithm with a better approximation ratio at the expense of violating the cardinality constraint by converting a BSM instance into several \emph{submodular cover} \cite{Wolsey82} instances and running the \textsc{Saturate} algorithm \cite{Krause08} for submodular cover to find a BSM solution (Section~\ref{subsec:alg1}). We also analyze the approximation factors and computational complexities of both algorithms.
Moreover, we formulate specific classes of BSM problems, i.e., \emph{maximum coverage} and \emph{facility location}, as integer linear programs (ILPs), thus acquiring their optimal solutions on small instances using any ILP solver (Appendix~\ref{app:ilp}). This can help quantify the gap between the optimal and approximate solutions to BSM in practice.

Finally, we conduct extensive experiments to evaluate the proposed algorithms on three submodular maximization problems, namely \emph{maximum coverage}, \emph{influence maximization}, and \emph{facility location}, using real-world and synthetic datasets in comparison to non-trivial baselines, including approximation algorithms for SM \cite{NemhauserWF78}, RSM \cite{Krause08}, and submodular maximization under submodular cover \cite{OhsakaM21} (SMSC). Our experimental results demonstrate that our algorithms offer better trade-offs between the utility and fairness objectives than the competing algorithms and are scalable against large datasets (Section~\ref{sec:exp}).

Our main technical contributions are summarized as follows:
\begin{enumerate}
  \item The definition of the bicriteria submodular maximization (BSM) problem to balance utility and fairness in submodular maximization.
  \item The proof of the inapproximability of BSM.
  \item Two instance-dependent approximation algorithms for BSM with theoretical analyses of their approximation factors and complexities as well as problem-specific ILP formulations for BSM.
  \item Comprehensive experimental results to verify the effectiveness and efficiency of our proposed BSM framework and algorithms in real-world applications.
\end{enumerate}

\section{Related Work}
\label{sec:literature}

Before presenting our technical results, we first collocate our contribution to the literature by discussing the related work.

\smallskip\parax{Submodular Maximization (SM)}
Maximizing a monotone submodular function with a cardinality constraint $k$ is one of the most widely studied problems in combinatorial optimization.
A simple greedy algorithm \cite{NemhauserWF78} runs in $\mathcal{O}(nk)$ time and provides the best possible $(1-1/e)$-approximate solution for this problem unless $P=NP$ \cite{Feige98}.
Several optimization techniques, e.g., \emph{lazy forward} \cite{LeskovecKGFVG07} and \emph{subsampling} \cite{MirzasoleimanBK15}, were proposed to improve the efficiency of the greedy algorithm.
Any of the above algorithms can be used as a subroutine in our algorithmic frameworks for BSM.
Furthermore, there have been approximation algorithms for submodular maximization problems with more general constraints beyond cardinality, such as \emph{knapsack} \cite{TangTLHLY21}, \emph{matroid} \cite{CalinescuCPV11}, \emph{$k$-system} \cite{CuiHZZWH21}, and more \cite{MirzasoleimanBK16}.
In addition, submodular maximization problems were also studied in various settings, including the \emph{streaming} \cite{BadanidiyuruMKK14}, \emph{distributed} \cite{MirzasoleimanKS16}, \emph{dynamic} \cite{NEURIPS2020_6fbd841e}, and \emph{sliding-window} \cite{EpastoLVZ17, WangLT19, WangFLT17} models.
However, unlike our BSM problem, all the above problems only consider maximizing a single submodular objective function.

Previous work \cite{NEURIPS2020_9d752cb0, WangFM21} has also introduced fairness into submodular maximization problems.
Specifically, they considered that the items to select represent persons and are partitioned into several demographic groups based on sensitive attributes and proposed efficient approximation algorithms to find a set of items for maximizing a submodular function subject to the constraint that the number of items chosen from each group must fall within the pre-defined lower and upper bounds to achieve an equal (or proportional) group representation in the solution.
Nevertheless, such a notion of fairness is distinct from ours. In our BSM problem, the items are assumed to be non-sensitive; thus, no specific restriction is imposed on the attributes of items to pick in the solution. We alternatively focus on picking a fixed-size set of items to distribute utilities fairly across sensitive user groups. Due to the differences in data models and fairness notions, the algorithms in \cite{NEURIPS2020_9d752cb0, WangFM21} are not comparable to those in this paper and thus ignored in the experimental evaluation.

\smallskip\parax{Robust Submodular Maximization (RSM)}
The problem of \emph{robust (or multi-objective) submodular maximization} aims to maximize the minimum of $c > 1$ submodular functions.
The RSM problem is inapproximable within any constant factor in polynomial time unless $P = NP$ \cite{Krause08}.
The algorithms that provide approximate solutions for RSM by violating the cardinality constraints were proposed in \cite{Krause08, AnariHNPST19}.
The \emph{multiplicative weight updates} (MWU) algorithms for RSM, which achieves constant approximation factors when $c = o(k\log^{-3}{k})$, were developed in \cite{Udwani18, FuBBP21}.
In this paper, our second algorithm for BSM adopts a similar scheme to that for RSM in \cite{Krause08}.
But the analyses are different because an additional objective of maximizing the average utility is considered.

The deletion-robust submodular maximization problem \cite{MirzasoleimanKK17, KazemiZK18}, which aims to maintain a ``robust'' solution when a part of the data may be adversarially deleted, is also identified as RSM in the literature. However, this is a different problem, and the algorithms in \cite{MirzasoleimanKK17, KazemiZK18} do not apply to BSM.

\smallskip\parax{Submodular Optimization with More Than One Objective}
There exist several other variants of submodular optimization problems to deal with more than one objective.
Iyer and Bilmes \cite{IyerB12} studied the problem of minimizing the difference between two submodular functions.
Approximation algorithms proposed in \cite{JinYYSHX21, NikolakakiET21} aim to maximize the difference between a submodular utility function and a modular cost function.
Bai \emph{et al.} \cite{BaiIWB16} considered minimizing the ratio of two submodular functions.
Iyer and Bilmes \cite{IyerB13} investigated the problem of maximizing a submodular utility function while minimizing a submodular cost function.

The most related problem to BSM is \emph{submodular maximization under submodular cover} \cite{OhsakaM21} (SMSC), which finds a fixed-size solution to maximize one submodular function under the constraint that the value of the other submodular function is not below a given threshold.
A similar problem to SMSC was also defined by Gershtein \emph{et al.}~\cite{GershteinMY21} in the context of influence maximization.
BSM differs from SMSC as the minimum of $c > 1$ submodular functions is not a submodular function, and thus the theoretical results for SMSC \cite{OhsakaM21, GershteinMY21} do not hold for BSM.
Nonetheless, SMSC can be treated as a special case of BSM when $c = 2$, i.e., there are only two groups of users.
Thus, our experiments empirically compare SMSC with BSM in the case of $c = 2$.
The simultaneous optimization of the average and robust submodular objectives was also considered by Wei \emph{et al.}~\cite{WeiIWBB15}.
This work is different from BSM in two aspects: (1) the objective function is a linear combination of average and robust objective functions, to which an approximate solution might be unbounded for each of them; (2) the algorithms are specific for submodular data partitioning and not directly applicable for submodular maximization.

\smallskip\parax{Balancing Utility and Fairness in ML and Optimization}
The problem of balancing utility and fairness has also been widely considered for many machine learning and optimization problems other than submodular optimization. In supervised learning, the utility is typically measured by the difference in accuracy \cite{Dwork12, MehrabiMSLG21} or distribution of model output \cite{CalmonWVRV17} before and after fairness intervention, and the fairness is defined by \emph{demographic parity} \cite{Dwork12} or \emph{equal opportunity} \cite{HardtPNS16} for different groups. The overall goal is thus to improve the fairness metrics at little utility expense. In unsupervised learning, e.g., center-based clustering \cite{MakarychevV21, GhadiriSV21}, and optimization, e.g., allocation \cite{MashiatGRFD22} and matching \cite{EsmaeiliDNSD22}, similar to BSM, the utility is the same as for the original (fairness-unaware) problem, and the fairness is measured by the utilities for different protected groups. Various schemes have also been proposed to balance both objectives in those problems. However, although the above works are similar to our work in formulation, their algorithmic techniques do not apply to BSM, and vice versa, since none of them is submodular.

\section{Problem Statement}
\label{sec:def}

In this section, we first introduce the basic notions, then give the formal definition of \emph{Bicriteria Submodular Maximization} (BSM), and finally analyze the theoretical hardness of BSM.

For a positive integer $n$, we use $[n]$ to denote the set of integers $\{1,\ldots,n\}$.
Let $V$ be a set of $n$ items indexed by $[n]$ and $U$ be a set of $m$ users indexed by $[m]$.
For each user $u \in [m]$, we define a non-negative set function $f_u: 2^V \rightarrow \mathbb{R}_{\geq 0}$ to measure the utility of any subset $S \subseteq V$ of items for user $u$.
We assume that $f_u$ is normalized, i.e., $f_u(\emptyset) = 0$, monotone, i.e., $f_u(S) \leq f_u(T)$ for any $S \subseteq T \subseteq V$, and submodular, i.e., $f_u(S \cup \{v\}) - f_u(S) \geq f_u(T \cup \{v\}) - f_u(T)$ for any $S \subseteq T \subseteq V$ and $v \in V \setminus T$.
We define the following function to measure the utility of a set $S$ for each user in $U$ on \emph{average}:
\begin{equation}\label{eq:average}
  f(S) := \frac{1}{m} \sum_{u \in [m]} f_u(S)
\end{equation}
Following existing submodular maximization literature \cite{NemhauserWF78, BadanidiyuruMKK14}, we assume that the value of $f_u(S)$ is given by an oracle in $\mathcal{O}(1)$ time. Hence, the time complexity of computing $f(S)$ is $\mathcal{O}(m)$.

We consider the case in which the set $U$ of users is partitioned into $c$ (disjoint) demographic groups according to a certain sensitive attribute, e.g., \emph{gender} or \emph{race}, and $U_i \subseteq U$ is the set of users from the $i$-th group for $i \in [c]$.
We want to distribute the utilities across groups fairly to achieve group-level fairness.
Specifically, the average utility function of the $i$-th group is $f_i(S) := \frac{1}{m_i} \sum_{u \in U_i} f_u(S)$, where $m_i = \lvert U_i \rvert$.
Then, the widely adopted \emph{maximin fairness} \cite{dgs2020, dgs2021, TsangWRTZ19, FuBBP21}, which improves the conditions for the \emph{least well-off} group by maximizing the minimum average utility for any of the $i$ groups, is used as the notion of fairness.
This leads to the following function for measuring how fairly the utilities are distributed among groups:
\begin{equation}\label{eq:maximin}
  g(S) := \min_{i \in [c]} f_i(S) = \min_{i \in [c]} \frac{1}{m_i} \sum_{u \in U_i} f_u(S)
\end{equation}

Towards the balance between \emph{utility} and \emph{fairness}, we should consider optimizing both objective functions jointly: on the one hand, we should maximize $f$ for the effectiveness of the solution; on the other hand, we should maximize $g$ for the group-level fairness.
To achieve both goals, we generalize a common framework for bicriteria optimization \cite{OhsakaM21, GershteinMY21}.
Specifically, the first objective, i.e., maximizing $f$, is considered the primary objective of the new problem. Meanwhile, the second objective, i.e., maximizing $g$, is modeled as the constraint of the new problem, where a factor $\tau \in [0, 1]$ is used to restrict that the function value of $g$ must be at least $\tau$-fraction of its optimum to seek the balance between both objectives.
We define the \emph{Bicriteria Submodular Maximization} (BSM) problem as follows.
\begin{problem}\label{def:bs}[Bicriteria Submodular Maximization]
  Given an item set $V$, a user set $U$ partitioned into $c$ groups $U_1, \ldots, U_c$ with $\bigcup_{i=1}^c U_i = U$ and $U_i \cap U_j = \emptyset$ for any $i, j \in [c]$ ($i \neq j$), two functions $f$ and $g$ in Eqs.~\ref{eq:average} and~\ref{eq:maximin}, the cardinality constraint $k \in \mathbb{Z}^{+}$, and the balance factor $\tau \in [0, 1]$, return a set $S^* \subseteq V$ such that $\lvert S^* \rvert =k$, $f(S^*)$ is maximized, and $g(S^*) \geq \tau \cdot \mathtt{OPT}_g$, where $\mathtt{OPT}_g = \max_{S' \subseteq V, \lvert S' \rvert = k} g(S')$. Formally,
  \begin{align*}
   \max_{S \subseteq V, \lvert S \rvert = k} \;\; & \;\; f(S) \\
   \textnormal{subject to} \;\; & \;\; g(S) \geq \tau \cdot \mathtt{OPT}_g
  \end{align*}
\end{problem}
Here, $S^*$ is called an $\alpha^*$-approximate solution if $f(S^*) = \alpha^* \cdot \mathtt{OPT}_f$ where $\mathtt{OPT}_f = \max_{S \subseteq V, \lvert S \rvert = k} f(S)$.
Accordingly, $\alpha^*$ is called the best achievable factor for the BSM instance.

\begin{figure}[t]
  \centering
  \includegraphics[width=0.6\linewidth]{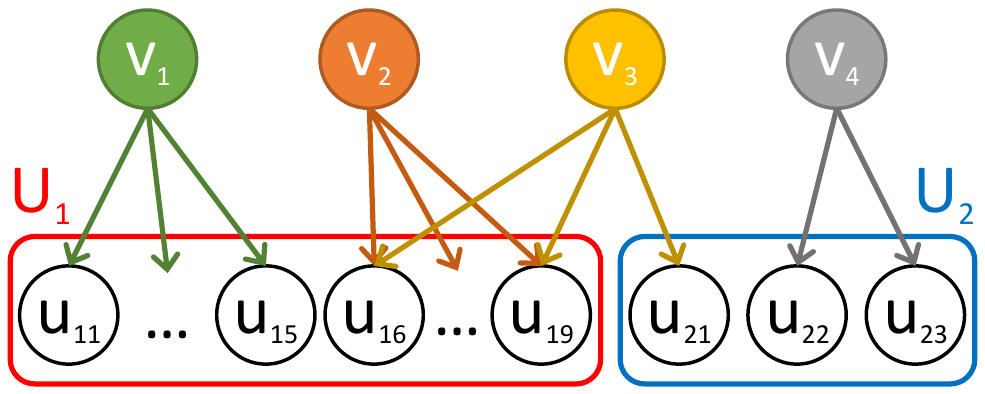}
  \caption{Running example for BSM based on an instance of maximum coverage.}
  \label{fig:example}
\end{figure}

\begin{example}
  \label{example:BSM}
  Let us consider the BSM instance for maximum coverage in Figure~\ref{fig:example}. It consists of four items $V = \{v_1, \ldots, v_4\}$ and a ground set $U$ of twelve users divided into two groups $U_1 = \{u_{11}, \ldots, u_{19}\}$ and $U_2 = \{u_{21}, u_{22}, u_{23}\}$. The sets of users covered by $v_1$, $v_2$, $v_3$, and $v_4$ are $S(v_1) = \{u_{11}, \ldots, u_{15}\}$, $S(v_2) = \{u_{16}, \ldots, u_{19}\}$, $S(v_3) = \{u_{16}, u_{19},$ $u_{21}\}$, and $S(v_4) = \{u_{22}, u_{23}\}$.

  The maximum coverage problem with a cardinality constraint $k = 2$ aims to select two items to cover the most number of users. Intuitively, it returns $S_{12} = \{v_1, v_2\}$ since $f(S_{12}) = \frac{1}{12} \cdot |S(v_1) \cup S(v_2)| = 0.75$ is the largest among all combinations of size-$2$ sets in $V$. The robust maximum coverage problem with the same constraint $k = 2$ aims to maximize the minimum of average coverage between $U_1$ and $U_2$. It alternatively returns $S_{14} = \{v_1, v_4\}$ with $\mathtt{OPT}_g = g(S_{14}) = \min_{i \in \{1, 2\}} f_i(S_{14}) = \min \{\frac{5}{9}, \frac{2}{3}\} \approx 0.556$. A BSM instance with balance factor $\tau \in [0, 1]$ finds a solution $S$ that maximizes $f(S)$ while ensuring that $g(S) \geq \tau \cdot \mathtt{OPT}_g$. When $\tau = 0$ (i.e., no constraint on $g$), the same solution $S_{12}$ as the vanilla maximum coverage problem is returned for BSM; when $0 < \tau \leq 0.6$, $S_{13} = \{v_1, v_3\}$ is returned for BSM since $g(S_{13}) = \frac{1}{3} \geq \tau \cdot \mathtt{OPT}_g$ and $f(S_{13}) = 8$ is the maximum among all size-$2$ sets satisfying the constraint on $g$; when $0.6 < \tau \leq 1$, the solution for BSM becomes $S_{14}$ since it is the size-$2$ set that has the largest coverage on $U$ and can satisfy the constraint on $g$.
\end{example}

We next analyze the hardness of BSM.
First, when $\tau = 0$, BSM is equivalent to maximizing $f$ subject to a cardinality constraint $k$, which is NP-hard and cannot be approximated within a factor better than $1-1/e$ (i.e., $\alpha^* \leq 1-1/e$) unless $P=NP$ \cite{Feige98}.
Furthermore, when $\tau = 1$, we need to compute the optimum $\mathtt{OPT}_g$ for maximizing $g$ with a cardinality constraint $k$ to ensure the satisfaction of the constraint.
This is an instance of \emph{robust submodular maximization}, which is generally NP-hard to approximate within any constant factor \cite{Krause08}.
Due to the intractability of computing $\mathtt{OPT}_f$ and $\mathtt{OPT}_g$, we consider a bicriteria approximation scheme for BSM.
Specifically, a set $S$ is called an $(\alpha, \beta)$-approximate solution to a BSM instance for some factors $\alpha, \beta \in (0, 1)$ if $f(S) \geq \alpha \cdot \mathtt{OPT}_f$ and $g(S) \geq \beta \tau \cdot \mathtt{OPT}_g$.
Finally, we will show by the following lemma that the objectives of maximizing $f$ and $g$ might contradict each other. A BSM instance with any $\tau > 0$ and $k \geq 1$ can be inapproximable within any constant factors $\alpha, \beta > 0$ even when $\mathtt{OPT}_f$ and $\mathtt{OPT}_g$ are known.

\begin{figure}[t]
  \centering
  \includegraphics[width=0.6\linewidth]{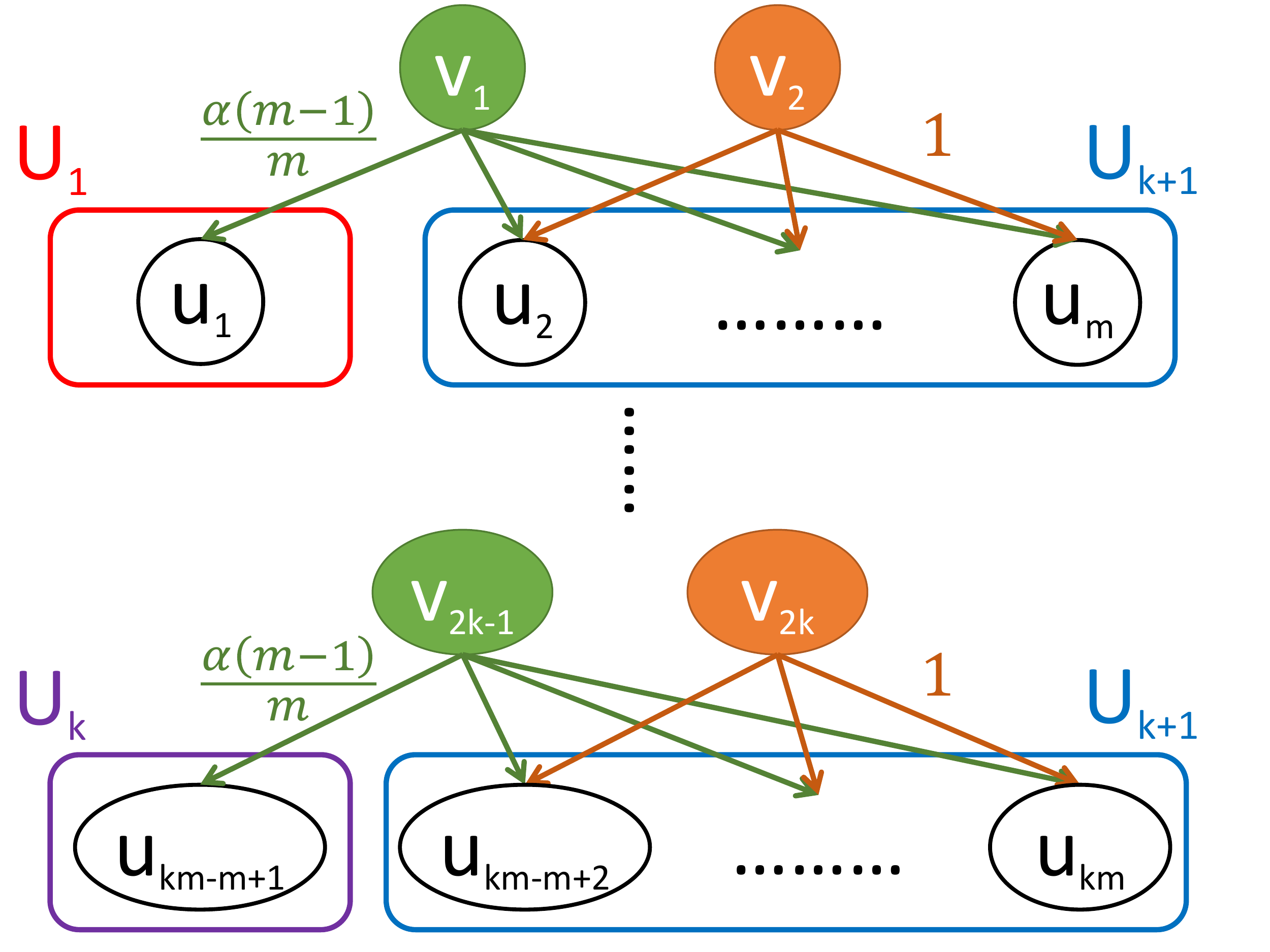}
  \caption{BSM instance for the proof of Lemma~\ref{lm:inapprox}.}
  \label{fig:hardness}
\end{figure}

\begin{lemma}\label{lm:inapprox}
  For any $\tau > 0$ and $k \geq 1$, there exists a BSM instance without any $(\alpha, \beta)$-approximate solution for constants $\alpha, \beta > 0$.
\end{lemma}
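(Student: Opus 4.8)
The plan is to exhibit, for each fixed $\tau>0$ and $k\ge 1$, a single BSM instance in which the only sets that satisfy the fairness constraint are forced to forgo almost all of the achievable utility, and then to make this gap unboundedly large by tuning one scalar parameter $N$. Driving the ratio $f(S)/\mathtt{OPT}_f$ to $0$ uniformly over all feasible $S$ is what rules out every constant pair $\alpha,\beta>0$ at once, rather than just those above some threshold.

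Concretely, I would take a ground set $V=\{a^\ast\}\cup\{b_1,\dots,b_k\}$ of $k+1$ items and $c=k+1$ groups. The group $U_1$ carries the bulk of the utility: its (modular, hence monotone submodular) function rewards the single ``utility item'' $a^\ast$ with a huge value $N$ and each $b_j$ with a tiny value $\epsilon$, so that $f_1(S)=N\cdot\mathbf 1[a^\ast\in S]+\epsilon\,\lvert S\cap\{b_1,\dots,b_k\}\rvert$. Each remaining group $U_{1+j}$ ($j\in[k]$) is small and is covered \emph{only} by $b_j$, i.e.\ $f_{1+j}(S)=\mathbf 1[b_j\in S]$. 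This is the crux of the construction: since $g(S)=\min_i f_i(S)$, having $g(S)>0$ requires $f_{1+j}(S)>0$ for every $j$, i.e.\ $\{b_1,\dots,b_k\}\subseteq S$; as $\lvert S\rvert=k$ this forces $S=B$, where $B:=\{b_1,\dots,b_k\}$, and in particular excludes $a^\ast$. The small $\epsilon$-coupling guarantees $f_1(B)=\epsilon k>0$, so that (taking $\epsilon<1/k$) $\mathtt{OPT}_g=g(B)=\min(\epsilon k,1)=\epsilon k>0$ and the constraint $g(S)\ge \tau\cdot\mathtt{OPT}_g$ is non-vacuous for every $\tau>0$.

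With the instance in place the verification is short. Because $B$ is the unique size-$k$ set with $g>0$ (any other size-$k$ set omits some $b_j$ and hence has $g=0$), it is the only set that can satisfy $g(S)\ge\beta\tau\cdot\mathtt{OPT}_g$, whose right-hand side is strictly positive whenever $\beta,\tau>0$; moreover $\beta\tau\le 1$ keeps $B$ itself feasible. On the utility side, $\mathtt{OPT}_f$ is attained by a set containing $a^\ast$, so $\mathtt{OPT}_f\ge \tfrac{m_1}{m}N$, whereas $f(B)$ never sees $a^\ast$ and stays bounded independently of $N$. Hence $f(B)/\mathtt{OPT}_f=O(1/N)\to 0$, and for any prescribed constants $\alpha,\beta>0$ one picks $N$ large enough that $f(B)<\alpha\cdot\mathtt{OPT}_f$; then $B$ violates the utility requirement while every other set violates the fairness requirement, so no $(\alpha,\beta)$-approximate solution exists. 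I would finally note that this specializes to the two-group case for $k=1$ and can be rephrased entirely as a maximum-coverage instance to match Figure~\ref{fig:hardness}.

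The step I expect to be delicate is balancing the two opposing requirements on the fairness side: the min-over-groups gadget must be tight enough that obtaining \emph{any} positive $g$ consumes the whole cardinality budget (so that $a^\ast$ is necessarily dropped), yet $\mathtt{OPT}_g$ must remain strictly positive or the constraint degenerates and the statement becomes empty. This is exactly what the $k$ single-item-coverage groups together with the $\epsilon$-coupling achieve. A secondary check is that each $f_i$ is realizable as an average of normalized, monotone, submodular per-user functions; this holds because every piece above is a coverage or modular function (e.g.\ $U_{1+j}$ consists of users covered solely by $b_j$).
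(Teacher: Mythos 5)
Your proof is correct and rests on the same core mechanism as the paper's: build an instance in which the fairness constraint $g(S)\ge \beta\tau\cdot\mathtt{OPT}_g>0$ pins down a \emph{unique} feasible size-$k$ set, and then tune a scalar so that this set's utility ratio $f(S)/\mathtt{OPT}_f$ falls below any prescribed constant $\alpha$, while every other set has $g=0$. You also correctly flag the one genuinely delicate point --- keeping $\mathtt{OPT}_g$ strictly positive via the $\epsilon$-coupling so the constraint does not become vacuous --- which plays the same role as the small value $\frac{\alpha(m-1)}{m}$ that the paper assigns to $v_1$ on group $U_2$.

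The gadgets differ in a way worth noting. The paper handles $k=1$ with two items and two groups, making the feasible set's ratio \emph{exactly} the parameter $\alpha$ baked into the utility values, and then extends to general $k$ by replicating the instance $k$ times ($2k$ items, $k+1$ groups), with each replica forcing one specific item choice. You instead use a single gadget with $k+1$ items: one high-value item $a^\ast$ and $k$ singleton ``fairness'' groups that collectively exhaust the cardinality budget, so $a^\ast$ is necessarily dropped; the ratio is driven to zero by sending $N\to\infty$ rather than by scaling the feasible set's value down by $\alpha$. Your version avoids the replication step and is slightly more economical, at the cost of needing the auxiliary argument that $B$ is the unique positive-$g$ set of size exactly $k$; the paper's version makes the best achievable factor $\alpha^\ast$ equal to $\alpha$ by construction, which it then reuses when discussing instance-dependent guarantees. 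Both correctly interpret the lemma in the standard way (the instance may depend on the target constants), and your realizability remark --- that each group function is an average of normalized monotone submodular per-user functions --- closes the only formal gap between a ``group utility'' and the model of Eqs.~\ref{eq:average}--\ref{eq:maximin}.
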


\begin{proof}
  We construct a BSM instance $\mathcal{I}$ that does not have any $(\alpha, \beta)$-approximate solution for all constants $\alpha, \beta > 0$ as shown in Figure~\ref{fig:hardness}.
  We first consider the procedure of constructing $\mathcal{I}$ for the case of $k=1$.
  We have an item set $V = \{v_1, v_2\}$ and a user set $U = \{u_1, u_2, \ldots, u_m\}$ with two groups $U_1 = \{u_1\}$ and $U_2 = \{u_2, \ldots, u_m\}$.
  Then, the utility function of $u_1$ is defined as
  \begin{equation*}
    f_{u_1}(S) =
    \begin{cases}
      \frac{\alpha (m - 1)}{m} \text{, if $v_1 \in S$} \\
      0 \text{, otherwise}
    \end{cases}
  \end{equation*}
  and the utility function of $u_j$ for each $j > 1$ is defined as
  \begin{equation*}
    f_{u_j}(S) =
    \begin{cases}
      1 \text{, if $v_2 \in S$} \\
      \frac{\alpha (m - 1)}{m} \text{, if $v_1 \in S$ and $v_2 \notin S$} \\
      0 \text{, otherwise.}
    \end{cases}
  \end{equation*}
  Obviously, $f_u$ is monotone and submodular for each $u \in U$.
  To maximize $f$ on $\mathcal{I}$ with $k=1$, $S_2 = \{v_2\}$ is the optimal solution and $\mathtt{OPT}_f = \frac{1}{m} \sum_{u \in [m]} f_u(S_2) = \frac{m - 1}{m}$.
  To maximize $g$ on $\mathcal{I}$ with $k=1$, $S_1 = \{v_1\}$ is the optimal solution and $\mathtt{OPT}_g = f_1(S_1) = \frac{\alpha (m - 1)}{m}$.
  However, as $g(S_2) = f_1(S_2) = 0$, the approximation factor $\beta$ of $S_2$ on $g$ for any $\tau > 0$ is $0$ and thus $S_2$ is a $(1, 0)$-approximate solution.
  In addition, $f(S_1) = \frac{\alpha (m - 1)}{m} = \alpha \cdot \mathtt{OPT}_f$.
  Therefore, $S_1$ is a $(\alpha, 1)$-approximate solution and the best achievable factor $\alpha^*$ of $\mathcal{I}$ is equal to $\alpha$.
  Since the value of $\alpha$ can be arbitrarily small, e.g., $\alpha \leq \frac{1}{m}$, there does not exist any $(\alpha, \beta)$-approximate solution to $\mathcal{I}$ for constants $\alpha, \beta > 0$ when $k = 1$.

  We further generalize the above construction procedure to the case of $k > 1$.
  Specifically, we replicate the above constructed instance $k$ times, denoted as $\mathcal{I}_1, \ldots, \mathcal{I}_k$.
  For each $\mathcal{I}_i$ where $i \in [k]$, we have an item set $\{v_{2i - 1}, v_{2i}\}$ and a user set $\{u_{i m - m + 1}, \ldots, u_{i m}\}$.
  Then, we assign the first user $u_{i m - m + 1}$ to group $U_i$ and all remaining users to group $U_{k + 1}$.
  The utility functions of all users are defined in the same way as in $\mathcal{I}$.
  For the new instance $\mathcal{I}'$, the sets of items and users are the unions of the sets of items and users for $\mathcal{I}_1, \ldots, \mathcal{I}_k$.
  Then, the optimal solution to maximizing $f$ with a cardinality constraint $k$ on $\mathcal{I}'$ is $S_{2i} = \{v_{2i} : i \in [k]\}$, and the optimal solution to maximizing $g$ with a cardinality constraint $k$ on $\mathcal{I}'$ is $S_{2i - 1} = \{v_{2i - 1} : i \in [k]\}$.
  We can see that, for every subset $S \subseteq V'$ with $|S| \leq k$ except $S_{2i - 1}$, there must exist some $i \in [k]$ such that $g(S) = f_i(S) = 0$.
  Hence, their approximation ratios on $g$ all equal $0$.
  Moreover, the approximation ratio of $S_{2i - 1}$ on $f$ is $\frac{f(S_{2i - 1})}{f(S_{2i})} = \alpha$.
  Therefore, $S_{2i - 1}$ is a $(\alpha, 1)$-approximate solution and the best achievable factor $\alpha^*$ of $\mathcal{I}'$ is also equal to $\alpha$.
  Accordingly, the inapproximability result holds for arbitrary $k \geq 1$.
\end{proof}

Given the above inapproximability result, we will focus on instance-dependent bicriteria approximation schemes for BSM in the subsequent section.

\section{Algorithms}
\label{sec:alg}

In this section, we propose our algorithms for BSM.
Our first algorithm is a two-stage greedy algorithm that combines the two greedy algorithms for \emph{submodular maximization} \cite{NemhauserWF78} and \emph{submodular cover} \cite{Wolsey82} (Section~\ref{subsec:alg2}).
Our second algorithm converts BSM into the \emph{submodular cover} \cite{Wolsey82} problem and adapts the \textsc{Saturate} algorithm \cite{Krause08} to obtain a solution to BSM (Section~\ref{subsec:alg1}).
We also analyze the approximation factors and complexities of both algorithms.

\subsection{The Two-Stage Greedy Algorithm for BSM}
\label{subsec:alg2}

By definition, BSM can be divided into two sub-problems, i.e., computing a solution $S$ that (1) maximizes $f(S)$ and (2) satisfies the constraint $g(S) \geq \tau \cdot \mathtt{OPT}_g$.
Therefore, an intuitive scheme to solve BSM is first to find a solution for each sub-problem independently and then to combine both solutions as the BSM solution.
The above scheme is feasible in practice because the prior and latter problems are instances of \emph{submodular maximization} and \emph{submodular cover}, respectively, on which \textsc{Greedy} algorithms \cite{NemhauserWF78, Wolsey82} can provide approximate solutions, despite of their NP-hardness \cite{Feige98}.
Accordingly, we propose a two-stage greedy algorithm called \textsc{BSM-TSGreedy}, which first runs the greedy submodular cover algorithm \cite{Wolsey82} to obtain an initial solution based on the constraint on $g$ and then uses the greedy submodular maximization algorithm \cite{NemhauserWF78} to add items to the initial solution for maximizing $f$. In this way, the final solution after two stages can not only approximately satisfy the constraint on $g$ due to the items in the first stage but also maximizes $f$ with an approximation factor depending on the number of items added in the second stage, thus achieving an instance-dependent bicriteria approximation as described in Section~\ref{sec:def}.

\begin{algorithm}[tb]
  \small
  \caption{\textsc{BSM-TSGreedy}}
  \label{alg2}
  \begin{algorithmic}[1]
    \REQUIRE Two functions $f, g : 2^V \rightarrow \mathbb{R}_{\geq 0}$, balance factor $\tau \in (0, 1)$, solution size $k \in \mathbb{Z}^{+}$
    \ENSURE Solution $S'$ with $\lvert S' \rvert = k$ for BSM
    \STATE Run \textsc{Greedy} \cite{NemhauserWF78} on $f$ to compute $S_f$ and $\mathtt{OPT}'_f$
    \STATE Run \textsc{Saturate} \cite{Krause08} on $g$ to compute $S_g$ and $\mathtt{OPT}'_g$
    \STATE Initialize $S' \gets \emptyset$
    \STATE Define $g'_{\tau}(S) := \frac{1}{c} \sum_{i \in [c]} \min \big\{ 1, \frac{f_i(S)}{\tau \cdot \mathtt{OPT}'_g} \big\}$
    \WHILE{$g'_{\tau}(S') < 1$ and $\lvert S' \rvert < k$}
      \STATE $ v^* \gets \argmax_{v \in V} g'_{\tau}(S' \cup \{v\}) - g'_{\tau}(S') $
      \STATE $ S' \gets S' \cup \{v^*\}$
    \ENDWHILE
    \IF{$\lvert S' \rvert = k$ and $g'_{\tau}(S') < 1$\label{ln-backup}}
      \STATE $S' \gets S_g$
    \ELSIF{$\lvert S' \rvert < k$}
      \FOR{$l \gets 1, \ldots, k$}
        \STATE Let $v_{f,l}$ be the $l$-th item added to $S_f$ by \textsc{Greedy}
        \STATE $S' \gets S' \cup \{v_{f,l}\}$
        \IF{$\lvert S' \rvert = k$}
          \STATE \textbf{break}
        \ENDIF
      \ENDFOR
    \ENDIF
    \RETURN{$S'$}
  \end{algorithmic}
\end{algorithm}

In particular, the procedure of \textsc{BSM-TSGreedy} is presented in Algorithm~\ref{alg2}.
First of all, it calls \textsc{Greedy} \cite{NemhauserWF78} for SM and \textsc{Saturate} \cite{Krause08} for RSM separately to obtain the approximations $\mathtt{OPT}'_f$ and $\mathtt{OPT}'_g$ for $\mathtt{OPT}_f$ and $\mathtt{OPT}_g$, respectively.
Then, in the first stage, a function $g'_{\tau}(S):= \frac{1}{c} \sum_{i \in [c]} \min \big\{ 1, \frac{f_i(S)}{\tau \cdot \mathtt{OPT}'_g} \big\}$ is defined based on $\tau$ and $\mathtt{OPT}'_g$. It holds that $g'_{\tau}$ is monotone and submodular because (1) $\overline{f}(t, S):= \min\{t, f(S)\}$ is monotone and submodular for any constant $t \geq 0$ and monotone submodular function $f$ \cite{KrauseG14} and (2) any nonnegative linear combination of monotone submodular functions is still monotone and submodular \cite{KrauseG14}.
Starting from $S' = \emptyset$, the item to maximize $g'_{\tau}$ is greedily added to $S'$ until $g'_{\tau}(S') = 1$, i.e., $g(S') \geq \tau \cdot \mathtt{OPT}'_g$ and the constraint on $g$ is (approximately) satisfied, or $\lvert S' \rvert = k$.
After that, if $g'_{\tau}(S') < 1$, the partial solution $S'$ will be replaced with $S_g$.
The purpose of this step is to guarantee that a size-$k$ solution satisfying $g'_{\tau}(S') = 1$ is provided, as $g'_{\tau}(S_g) = 1$ always holds from the definition of $g'_{\tau}$, in the case that the \textsc{Greedy} algorithm fails to find a size-$k$ solution $S'$ with $g(S') \geq \tau \cdot \mathtt{OPT}'_g$.
Then, it executes the second stage by further adding the items in the greedy solution $S_f$ for maximizing $f$ to $S'$ until $\lvert S' \rvert = k$ to increase the value of $f(S')$ as largely as possible.
Finally, the solution $S'$ after two stages is returned for BSM.

\begin{example}
  We consider how \textsc{BSM-TSGreedy} works on the BSM instance with $k = 2$ in Figure~\ref{fig:example}. As shown in Example~\ref{example:BSM}, it first runs \textsc{Greedy} and \textsc{Saturate} to compute $S_f = \{v_1, v_2\}$ with $\mathtt{OPT}'_f = 0.75$ and $S_g = \{v_1, v_4\}$ with $\mathtt{OPT}'_g = 5/9 \approx 0.556$. Accordingly, $g'_{\tau}(S) := \frac{1}{2} \big(\min\{ 1, \frac{9 f_1(S)}{5 \tau} \} + \min\{ 1, \frac{9 f_2(S)}{5 \tau} \} \big)$. When $\tau = 0.2$, it adds $v_3$ to $S'$ in the first stage since $g'_{0.2}(\{v_3\}) = 1$, then includes $v_1$ into $S'$ in the second stage because $v_1$ is the first item in $S_f$, and finally returns $S' = \{v_1, v_3\}$ for BSM. When $\tau = 0.5$, it also first adds $v_3$ to $S'$ since $g'_{0.5}(\{v_3\}) = 0.9$ is the largest among all four items. Then, either $v_1$ or $v_2$ can be added to $S'$ because $g'_{0.5}(\{v_1, v_3\}) = g'_{0.5}(\{v_2, v_3\}) = 1$. Assuming that $v_2$ is added, it will return $S' = \{v_2, v_3\}$ after the first stage, which is inferior to $S_{13}$ since $f(S') = 5 < f(S_{13}) = 8$, though they both satisfy the constraint on $g$. When $\tau = 0.8$, it still first adds $v_3$ to $S'$ with $g'_{0.8}(\{v_3\}) = 0.625$. Then, $g'_{0.8}(S') < 1$ and $|S'| = 2$ after any remaining item is added to $S'$. Thus, it returns $S' = S_g = \{v_1, v_4\}$ according to Line~\ref{ln-backup} of Algorithm~\ref{alg2}.
\end{example}

Subsequently, we analyze the approximation factor and time complexity of \textsc{BSM-TSGreedy} in the following theorem.
\begin{theorem}\label{thm:alg2}
  Algorithm~\ref{alg2} returns a $\big( 1 - \exp(- \frac{k'}{k}), 1 - \varepsilon_g \big)$-approximate solution $S'$ with $\lvert S' \rvert = k$ for an instance of BSM in $\mathcal{O}\big(n m k \log(c m) \big)$ time, where $k'$ is the number of iterations in the second stage, $\varepsilon_g \leq 1 - \frac{1}{m} - \frac{\widehat{\mathtt{OPT}_g}}{\mathtt{OPT}_g}$, and $\widehat{\mathtt{OPT}_g}$ is the optimum of maximizing $g$ with a cardinality constraint $\mathcal{O}\big( k \log^{-1}(cm) \big)$.
\end{theorem}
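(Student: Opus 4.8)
The plan is to establish the claim in three parts — that the returned set has exactly $k$ elements, that it meets the stated utility (factor $1-\exp(-k'/k)$) and fairness (factor $1-\varepsilon_g$) guarantees — and then to bound the running time by counting oracle calls. I would first dispose of the cardinality claim: the first stage halts as soon as $\lvert S' \rvert = k$ or $g'_{\tau}(S')=1$; in the backup branch $S'$ is replaced by the \textsc{Saturate} solution $S_g$ (of size $k$), for which $g'_{\tau}(S_g)=1$ holds by construction; otherwise the second stage appends the greedy items $v_{f,1},v_{f,2},\dots$ from $S_f$ until $\lvert S' \rvert = k$. Since $\lvert S_f \rvert = k$, the union of the first-stage set with $S_f$ has at least $k$ distinct elements, so the loop always reaches size $k$ before exhausting $S_f$.

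For the utility bound I would exploit that after $k'$ second-stage iterations the prefix $\{v_{f,1},\dots,v_{f,k'}\}$ of the greedy run for $f$ is contained in $S'$. The standard analysis of \textsc{Greedy} \cite{NemhauserWF78} applied to the monotone submodular $f$ gives $f(\{v_{f,1},\dots,v_{f,k'}\}) \geq \big(1-(1-1/k)^{k'}\big)\,\mathtt{OPT}_f$ for any prefix length $k' \le k$, because each greedy step closes a $1/k$ fraction of the remaining gap to $\mathtt{OPT}_f$. Monotonicity of $f$ then yields $f(S') \geq f(\{v_{f,1},\dots,v_{f,k'}\})$, and the elementary inequality $(1-1/k)^{k'} \le \exp(-k'/k)$ converts this into $f(S') \geq \big(1-\exp(-k'/k)\big)\mathtt{OPT}_f$, the first factor. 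The backup branch performs no second-stage iteration, i.e.\ $k'=0$ and the factor degenerates to $0$, consistent with the statement.

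For the fairness bound I would split on the outcome of the first stage. Whenever the while-loop terminates with $g'_{\tau}(S')=1$, every truncated term equals $1$, so $f_i(S') \geq \tau\cdot\mathtt{OPT}'_g$ for all $i \in [c]$ and hence $g(S') = \min_i f_i(S') \geq \tau\cdot\mathtt{OPT}'_g$; in the backup branch $S'=S_g$ gives $g(S')=\mathtt{OPT}'_g \geq \tau\cdot\mathtt{OPT}'_g$. Monotonicity of each $f_i$ ensures the second stage never decreases $g$, so $g(S') \geq \tau\cdot\mathtt{OPT}'_g$ in all cases. It then remains to relate the \textsc{Saturate} estimate $\mathtt{OPT}'_g$ to the true optimum $\mathtt{OPT}_g$: the bicriteria guarantee of \textsc{Saturate} \cite{Krause08}, which attains the robust optimum at a budget inflated by a $\Theta(\log(cm))$ factor, certifies that $\mathtt{OPT}'_g$ is at least the optimum $\widehat{\mathtt{OPT}_g}$ achievable under the reduced budget $\mathcal{O}(k\log^{-1}(cm))$. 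Writing $g(S') = (1-\varepsilon_g)\tau\cdot\mathtt{OPT}_g$ and substituting $\mathtt{OPT}'_g \geq \widehat{\mathtt{OPT}_g}$ already gives $1-\varepsilon_g \geq \widehat{\mathtt{OPT}_g}/\mathtt{OPT}_g$; recovering the additional $\frac{1}{m}$ term requires a closer look at the truncation threshold and binary-search resolution inside \textsc{Saturate}, where the granularity of the group utilities $f_i$ is $\Theta(1/m)$.

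I expect this last point to be the main obstacle: the clean parts (cardinality, the utility factor, and $g(S')\geq\tau\cdot\mathtt{OPT}'_g$) follow directly from the greedy/cover guarantees and monotonicity, but justifying the exact form $\varepsilon_g \le 1-\frac{1}{m}-\widehat{\mathtt{OPT}_g}/\mathtt{OPT}_g$ — in particular the sign and provenance of the $\frac{1}{m}$ improvement — demands tracking how the discretization of \textsc{Saturate} interacts with the per-user granularity of $g$. Finally, for the running time I would count oracle evaluations: \textsc{Greedy} on $f$ and the first-stage greedy on $g'_{\tau}$ each perform $\mathcal{O}(nk)$ marginal evaluations costing $\mathcal{O}(m)$ apiece, i.e.\ $\mathcal{O}(nmk)$; \textsc{Saturate} wraps a greedy submodular-cover computation of the same per-iteration cost inside an $\mathcal{O}(\log(cm))$-step binary search, contributing $\mathcal{O}(nmk\log(cm))$, which dominates and yields the stated bound.
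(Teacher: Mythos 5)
Your proposal follows essentially the same route as the paper's proof: the cardinality and utility claims via the greedy-prefix analysis of \cite{NemhauserWF78} applied to $\{v_{f,1},\dots,v_{f,k'}\}$ plus monotonicity, the fairness claim via $g'_{\tau}(S')=1$ (or the fallback to $S_g$), and the runtime by summing the costs of the subroutines. The one step you leave open --- the exact form of $\varepsilon_g$ --- is closed in the paper simply by citing the \textsc{Saturate} analysis of \cite{NguyenT21} (their Thm.~8), which gives $\mathtt{OPT}'_g \geq (1-\theta)\cdot\widehat{\mathtt{OPT}_g}$ with $\theta = \frac{1}{m}$; your suspicion about the sign is warranted, since that bound actually yields $\varepsilon_g \leq 1 - (1-\frac{1}{m})\cdot\widehat{\mathtt{OPT}_g}/\mathtt{OPT}_g \leq 1 + \frac{1}{m} - \widehat{\mathtt{OPT}_g}/\mathtt{OPT}_g$ rather than the $1 - \frac{1}{m} - \widehat{\mathtt{OPT}_g}/\mathtt{OPT}_g$ stated in the theorem.
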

\begin{proof}
  First of all, Algorithm~\ref{alg2} either finds a solution $S'$ such that $g'_{\tau}(S') = 1$ and $g(S') \geq \tau \cdot \mathtt{OPT}'_g$ after the first stage, or replaces $S'$ with $S_g$, which always satisfies $g(S_g) = \mathtt{OPT}'_g \geq \tau \cdot \mathtt{OPT}'_g$.
  According to the analysis for \textsc{Saturate} in \cite[Thm.~8]{NguyenT21}, we have $\mathtt{OPT}'_g \in [(1 - \theta) \cdot \widehat{\mathtt{OPT}_g}, \mathtt{OPT}_g]$ where $\widehat{\mathtt{OPT}_g}$ is the optimum of maximizing $g$ with a cardinality constraint $\mathcal{O}\big(k \log^{-1}( \frac{c}{\theta}) \big)$.
  By setting $\theta = \frac{1}{m}$, we obtain that $\varepsilon_g \leq 1 - \frac{1}{m} - \frac{\widehat{\mathtt{OPT}_g}}{\mathtt{OPT}_g}$ and the approximation factor holds for a cardinality constraint $\mathcal{O}\big( k \log^{-1}(cm) \big)$.
  In the second stage, since the first $k'$ items of $S_f$ are added to $S'$, by applying the approximation analysis in \cite{NemhauserWF78} for $k'$ instead of $k$, it holds that $f(S') \geq (1 - \exp(- \frac{k'}{k})) \cdot \mathtt{OPT}_f$.
  Therefore, $S'$ is a $\big( 1 - \exp(- \frac{k'}{k}), 1 - \varepsilon_g \big)$-approximate solution of size $k$ for BSM.

  Moreover, \textsc{Greedy} and \textsc{Saturate} run in $\mathcal{O}(n m k)$ and $\mathcal{O}\big( n m k$ $\log(cm) \big)$ time, respectively.
  The time complexity of \textsc{Greedy} for maximizing $g'_{\tau}$ is $\mathcal{O}(n m k)$ since it always terminates after $k$ iterations no matter whether $g'_{\tau}(S') = 1$.
  Then, adding the first $k'$ items of $S_f$ to $S'$ takes $\mathcal{O}(k)$ time.
  Thus, the time complexity of Algorithm~\ref{alg2} is $\mathcal{O}\big(n m k\log(cm)\big)$.
\end{proof}

\subsection{The Saturate Algorithm for BSM}
\label{subsec:alg1}

The \textsc{BSM-TSGreedy} algorithm has two limitations. First, its approximation factor might be arbitrarily bad (e.g., dropping to $0$ when $k' = 0$). Also, how close its approximation factor is to the best achievable factor $\alpha^*$ is unavailable. Second, as will be shown in Section~\ref{sec:exp}, it suffers from significant losses in solution quality when the number of items added to $S'$ in the first stage for ensuring $g'_{\tau}(S') = 1$ is equal or close to $k$.
To achieve better performance for BSM, we alternatively consider applying a Lagrangian-like formulation similar to that for SMSC \cite{OhsakaM21}, which combines the maximization of $f$ and the satisfaction of the constraint on $g$ into a single problem, to convert BSM into \emph{submodular cover} \cite{Wolsey82} instances. Then, we run the \textsc{Saturate} algorithm \cite{Krause08} for submodular cover to provide a BSM solution that not only achieves a bicriteria approximation guarantee theoretically but also strikes a good balance between $f$ and $g$ empirically.

Next, we present the detailed conversion procedure from BSM to submodular cover.
Let us first consider the decision version of BSM defined as follows.
\begin{definition}[BSM Decision]
\label{def:bs:dec}
  For any approximation factor $\alpha \in (0, 1)$, determine if a set $S \subseteq V$ with $\lvert S \rvert = k$ such that $f(S) \geq \alpha \cdot \mathtt{OPT}_f$ and $g(S) \geq \tau \cdot \mathtt{OPT}_g$ exists.
\end{definition}
If the answer to the BSM Decision problem in Definition~\ref{def:bs:dec} is \emph{yes}, then there must exist an $\alpha$-approximate solution to the BSM instance and vice versa.
Assuming that $\mathtt{OPT}_f$ and $\mathtt{OPT}_g$ are known, the BSM decision problem for a given $\alpha \in (0, 1)$ can be divided into two sub-problems:
\emph{(i)} is there a size-$k$ set $S \subseteq V$ such that $f_{\alpha}(S) := \frac{f(S)}{\alpha \cdot \mathtt{OPT}_f} \geq 1$?
and \emph{(ii)} is there a size-$k$ set $S \subseteq V$ such that $g_{\tau}(S) := \frac{g(S)}{\tau \cdot \mathtt{OPT}_g} \geq 1$?
The BSM decision problem is thus transformed to decide whether the objective value of the following problem equals $2$:
\begin{equation}\label{eq:obj}
  \max_{S \subseteq V, \lvert S \rvert = k} \min \{1, f_{\alpha}(S)\} + \min \{1, g_{\tau}(S)\}.
\end{equation}
Then, the problem of Eq.~\ref{eq:obj} is reduced to \emph{submodular maximization} according to the truncation method as used in \cite{Krause08}:
\begin{equation}\label{eq:sm}
  \max_{S \subseteq V, \lvert S \rvert = k} F_{\alpha}(S) := \min\big\{1, \frac{f(S)}{\alpha \mathtt{OPT}_f} \big\} + \frac{1}{c} \sum_{i \in [c]} \min \big\{1, \frac{f_i(S)}{\tau \mathtt{OPT}_g} \big\}.
\end{equation}
Intuitively, $F_{\alpha}$ in Eq.~\ref{eq:sm} is monotone and submodular because it is a nonnegative linear combination of monotone submodular functions.
Since computing the optimums $\mathtt{OPT}_f$ and $\mathtt{OPT}_g$ is NP-hard, we further consider replacing them with approximate values $\mathtt{OPT}'_f$ and $\mathtt{OPT}'_g$, i.e., $\mathtt{OPT}'_f \in [(1-\varepsilon_f) \cdot \mathtt{OPT}_f, \mathtt{OPT}_f]$ and $\mathtt{OPT}'_g \in [(1-\varepsilon_g) \cdot \mathtt{OPT}_g, \mathtt{OPT}_g]$, where $\varepsilon_f$ and $\varepsilon_g$ are the relative errors for approximating $\mathtt{OPT}_f$ and $\mathtt{OPT}_g$, respectively.
The following lemma indicates that the BSM decision problem will still be answered with a theoretical guarantee by solving the approximate version of the problem in Eq.~\ref{eq:sm}.
\begin{lemma}\label{lm:dec}
  Let $F'_{\alpha}(S) := \min \big\{1, \frac{f(S)}{\alpha \cdot \mathtt{OPT}'_f} \big\} + \frac{1}{c} \sum_{i \in [c]} \min \big\{1, $ $\frac{f_i(S)}{\tau \cdot \mathtt{OPT}'_g} \big\}$. On the one hand, any set $\widehat{S}$ with $F'_{\alpha}(\widehat{S}) \geq 2(1 - \frac{\varepsilon}{c})$ is an $(\hat{\alpha}, \hat{\beta})$-approximate solution to BSM, where $\hat{\alpha} =(1 - 2\varepsilon - \varepsilon_f)\alpha$ and $\hat{\beta} = 1 - 2\varepsilon - \varepsilon_g$. On the other hand, there is not any $\alpha$-approximate solution to BSM if $F'_{\alpha}(S) < 2$ for any size-$k$ set $S$.
\end{lemma}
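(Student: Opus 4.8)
The plan is to prove the two directions separately. The converse (completeness) direction is a short direct computation, while the forward (soundness) direction carries the real content, namely converting an aggregate bound on $F'_\alpha$ into per-group guarantees.

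For the second claim, I would argue the contrapositive: suppose $S^*$ is an $\alpha$-approximate solution to BSM, so $f(S^*) \geq \alpha \cdot \mathtt{OPT}_f$ and $g(S^*) \geq \tau \cdot \mathtt{OPT}_g$. Since $\mathtt{OPT}'_f \leq \mathtt{OPT}_f$, the ratio $\frac{f(S^*)}{\alpha \cdot \mathtt{OPT}'_f} \geq \frac{f(S^*)}{\alpha \cdot \mathtt{OPT}_f} \geq 1$, so the first term of $F'_\alpha(S^*)$ saturates at $1$. Likewise, because $g(S^*) = \min_i f_i(S^*) \geq \tau \cdot \mathtt{OPT}_g \geq \tau \cdot \mathtt{OPT}'_g$, every group term $\min\{1, \frac{f_i(S^*)}{\tau \cdot \mathtt{OPT}'_g}\}$ equals $1$, whence the averaged second term is $1$. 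Thus $F'_\alpha(S^*) = 2$, contradicting $F'_\alpha(S) < 2$ for all size-$k$ sets, so no $\alpha$-approximate solution can exist.

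For the first claim, I would abbreviate $A := \min\{1, \frac{f(\widehat{S})}{\alpha \cdot \mathtt{OPT}'_f}\}$ and $B_i := \min\{1, \frac{f_i(\widehat{S})}{\tau \cdot \mathtt{OPT}'_g}\}$, so that $F'_\alpha(\widehat{S}) = A + \frac{1}{c}\sum_{i} B_i \geq 2 - \frac{2\varepsilon}{c}$ with $A, B_i \in [0,1]$. The key step, and the one place where the precise threshold $2(1 - \frac{\varepsilon}{c})$ is used, is to turn this aggregate bound into per-coordinate bounds. Since $\frac{1}{c}\sum_i B_i \leq 1$, the inequality forces $A \geq 1 - \frac{2\varepsilon}{c}$; and since $A \leq 1$, it forces $\frac{1}{c}\sum_i B_i \geq 1 - \frac{2\varepsilon}{c}$, i.e. $\sum_i (1 - B_i) \leq 2\varepsilon$. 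As each summand $1 - B_i$ is nonnegative, every group satisfies $1 - B_i \leq 2\varepsilon$, hence $B_i \geq 1 - 2\varepsilon$. It then remains to untruncate: $B_i \geq 1 - 2\varepsilon$ means either the min saturated (so $f_i(\widehat{S}) \geq \tau \cdot \mathtt{OPT}'_g$) or $\frac{f_i(\widehat{S})}{\tau \cdot \mathtt{OPT}'_g} \geq 1 - 2\varepsilon$, so in both cases $f_i(\widehat{S}) \geq (1 - 2\varepsilon)\tau \cdot \mathtt{OPT}'_g$. Combining this with $\mathtt{OPT}'_g \geq (1 - \varepsilon_g)\mathtt{OPT}_g$ and the elementary inequality $(1-x)(1-y) \geq 1 - x - y$ gives $f_i(\widehat{S}) \geq (1 - 2\varepsilon - \varepsilon_g)\tau \cdot \mathtt{OPT}_g$ for every $i$, and taking the minimum yields $g(\widehat{S}) \geq \hat{\beta}\tau \cdot \mathtt{OPT}_g$. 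The identical argument applied to $A$, using $A \geq 1 - \frac{2\varepsilon}{c} \geq 1 - 2\varepsilon$ and $\mathtt{OPT}'_f \geq (1 - \varepsilon_f)\mathtt{OPT}_f$, gives $f(\widehat{S}) \geq (1 - 2\varepsilon - \varepsilon_f)\alpha \cdot \mathtt{OPT}_f = \hat{\alpha}\cdot\mathtt{OPT}_f$, so $\widehat{S}$ is an $(\hat{\alpha}, \hat{\beta})$-approximate solution.

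The main obstacle is precisely the deficit-distribution step: extracting the per-group guarantee $B_i \geq 1 - 2\varepsilon$ from the single aggregate threshold, which works only because the threshold is chosen as $2(1 - \frac{\varepsilon}{c})$ and because each deficit $1 - B_i$ is nonnegative. This is exactly where the averaged (rather than minimized) encoding of the fairness constraint inside $F'_\alpha$ pays off, since one scalar threshold then controls all $c$ groups at once. Everything after that, i.e. untruncating the minima and replacing each $\mathtt{OPT}'$ by $(1-\varepsilon)\mathtt{OPT}$, is routine.
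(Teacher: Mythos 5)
Your proof is correct and follows essentially the same route as the paper's: split the threshold $2(1-\frac{\varepsilon}{c})$ across the two truncated terms using the fact that each is at most $1$, extract a per-group bound of $1-2\varepsilon$ from the averaged fairness term (your deficit-summing step is just a rearrangement of the paper's ``subtract the other $c-1$ terms'' step), untruncate, and absorb $\varepsilon_f,\varepsilon_g$ via $(1-x)(1-y)\geq 1-x-y$; the converse direction is the same argument in contrapositive form. If anything, you are slightly more careful than the paper in keeping the $\min\{1,\cdot\}$ truncation in place when deriving the per-group bounds.
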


\begin{proof}
  On the one hand, if $F'_{\alpha}(\widehat{S}) \geq 2(1 - \frac{\varepsilon}{c})$, then it will hold that $\frac{f(\widehat{S})}{\alpha \cdot \mathtt{OPT}'_f} \geq 1 - \frac{2\varepsilon}{c}$ and $\frac{1}{c} \sum_{i \in [c]} \frac{f_i(\widehat{S})}{\tau \cdot \mathtt{OPT}'_g} \geq 1 - \frac{2\varepsilon}{c}$. According to the prior inequality, we have
  \begin{align*}
    f(\widehat{S}) & \geq \alpha (1 - \frac{2\varepsilon}{c}) \cdot \mathtt{OPT}'_f \\
                   & \geq \alpha (1 - 2\varepsilon)(1 - \varepsilon_f) \cdot \mathtt{OPT}_f \\
                   & > (1 - 2\varepsilon - \varepsilon_f) \alpha \cdot \mathtt{OPT}_f
  \end{align*}
  According to the latter inequality, for each $i \in [c]$,
  \begin{align*}
    f_i(\widehat{S}) & \geq \tau \big((1 - \frac{2\varepsilon}{c}) c - (c - 1) \big) \cdot \mathtt{OPT}'_g \\
                     & > \tau (1 - 2\varepsilon)(1 - \varepsilon_g) \cdot \mathtt{OPT}_g \\
                     & > (1 - 2\varepsilon - \varepsilon_g) \tau \cdot \mathtt{OPT}_g
  \end{align*}
  Thus, $g(\widehat{S}) = \min_{i \in [c]} f_i(\widehat{S}) > (1 - 2\varepsilon - \varepsilon_g) \tau \cdot \mathtt{OPT}_g$.
  We prove that $\widehat{S}$ is an $(\hat{\alpha}, \hat{\beta})$-approximate solution to BSM, where $\hat{\alpha} =(1 - 2\varepsilon - \varepsilon_f)\alpha$ and $\hat{\beta} = 1 - 2\varepsilon - \varepsilon_g$.
  On the other hand, if $F'_{\alpha}(S) < 2$ for any size-$k$ set $S \subseteq V$, then either $f(S) < \alpha \cdot \mathtt{OPT}'_f$ or there exists some $i \in [c]$ such that $f_i(S) < \tau \cdot \mathtt{OPT}'_g$ and thus $g(S) < \tau \cdot \mathtt{OPT}'_g$. In either case, it must hold that $f(S) < \alpha \cdot \mathtt{OPT}'_f < \alpha \cdot \mathtt{OPT}_f$ or $g(S) < \tau \cdot \mathtt{OPT}'_g < \tau \cdot \mathtt{OPT}_g$. Therefore, $S$ must not be an $\alpha$-approximate solution to BSM.
\end{proof}

\begin{algorithm}[tb]
  \small
  \caption{\textsc{BSM-Saturate}}
  \label{alg1}
  \begin{algorithmic}[1]
    \REQUIRE Two functions $f, g : 2^V \rightarrow \mathbb{R}_{\geq 0}$, balance factor $\tau \in (0, 1)$, solution size $k \in \mathbb{Z}^{+}$, error parameter $\varepsilon \in (0, 1)$
    \ENSURE Solution $\widehat{S}$ with $\lvert \widehat{S} \rvert \leq k \ln{\frac{c}{\varepsilon}}$ for BSM
    \STATE Run \textsc{Greedy} \cite{NemhauserWF78} on $f$ to compute $\mathtt{OPT}'_f$
    \STATE Run \textsc{Saturate} \cite{Krause08} on $g$ to compute $\mathtt{OPT}'_g$
    \STATE $\alpha_{max} \gets 1$ and $\alpha_{min} \gets 0$
    \WHILE{$(1-\varepsilon)\alpha_{max} > \alpha_{min}$\label{ln-stop}}
      \STATE Set $\alpha \gets (\alpha_{max} + \alpha_{min}) / 2$
      \STATE Define $F'_{\alpha} := \min \big\{1, \frac{f(S)}{\alpha \cdot \mathtt{OPT}'_f} \big\} + \frac{1}{c} \sum_{i \in [c]} \min \big\{1, $ $\frac{f_i(S)}{\tau \cdot \mathtt{OPT}'_g} \big\}$
      \STATE Initialize $S \gets \emptyset$
      \FOR{$i \gets 1, \ldots, k \ln{\frac{c}{\varepsilon}}$\label{ln-k-val}}
        \STATE $v^* \gets \argmax_{v \in V} F'_{\alpha}(S \cup \{v\}) - F'_{\alpha}(S)$
        \STATE $S \gets S \cup \{v^*\}$
      \ENDFOR
      \IF{$F'_{\alpha}(S) \geq 2(1 - \frac{\varepsilon}{c})$}
        \STATE $\alpha_{min} \gets \alpha$ and $\widehat{S} \gets S$
      \ELSE
        \STATE $\alpha_{max} \gets \alpha$
      \ENDIF
    \ENDWHILE
    \RETURN{$\widehat{S}$}
  \end{algorithmic}
\end{algorithm}

Based on Lemma~\ref{lm:dec}, we complete the conversion by connecting a BSM decision problem with a submodular cover problem on $F'_{\alpha}$.
Accordingly, we propose \textsc{BSM-Saturate} in Algorithm~\ref{alg1} by solving submodular cover instances on $F'_{\alpha}$ with different $\alpha$'s to find an appropriate $\alpha$ value and thus provide a good BSM solution.
First, it also utilizes \textsc{Greedy} \cite{NemhauserWF78} for SM and \textsc{Saturate} \cite{Krause08} for RSM to compute the approximate values $\mathtt{OPT}'_f$ and $\mathtt{OPT}'_g$ for $\mathtt{OPT}_f$ and $\mathtt{OPT}_g$, respectively.
Next, it performs a bisection search on $\alpha$ within $[0, 1]$.
For each value of $\alpha$, it runs \textsc{Greedy} \cite{NemhauserWF78} to maximize the submodular function $F'_{\alpha}$ defined in Lemma~\ref{lm:dec} with size constraint $k \ln{\frac{c}{\varepsilon}}$.
If the function value $F'_{\alpha}(S)$ of the greedy solution $S$ reaches $2(1 - \frac{\varepsilon}{c})$, it will set $S$ as the current solution $\widehat{S}$ and search on the upper half to find a better solution.
Otherwise, the search will be performed on the lower half to find a feasible solution.
Finally, it terminates the bisection search when the ratio between the upper and lower bounds of $\alpha$ is within $1-\varepsilon$ and returns the solution $\widehat{S}$ obtained for the lower bound of $\alpha$ at the last iteration for BSM.

Subsequently, we analyze the approximation factor and time complexity of \textsc{BSM-Saturate} in the following theorem.
\begin{theorem}\label{thm:alg1}
  Algorithm~\ref{alg1} returns a $\big( (1 - 3\varepsilon - \varepsilon_f) \alpha^*, 1 - 2\varepsilon - \varepsilon_g \big)$-approximate solution $\widehat{S}$ with $\lvert \widehat{S} \rvert \leq k \ln{\frac{c}{\varepsilon}}$ for an instance of BSM in $\mathcal{O}\big(n m k\log^2(\frac{cm}{\alpha^*\varepsilon})\big)$ time, where $\alpha^*$ is the best possible approximation factor of the BSM instance, $\varepsilon_f \leq \frac{1}{e}$, $\varepsilon_g \leq 1 - \frac{1}{m} - \frac{\widehat{\mathtt{OPT}_g}}{\mathtt{OPT}_g}$, and $\widehat{\mathtt{OPT}_g}$ is the optimum of maximizing $g$ with a cardinality constraint $\mathcal{O}\big(k\log^{-1}(cm)\big)$.
\end{theorem}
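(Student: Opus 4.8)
The plan is to reduce the analysis to the decision-version guarantee of Lemma~\ref{lm:dec}, combined with a submodular-cover/greedy argument that certifies feasibility precisely when $\alpha \le \alpha^*$. First I would record the two error terms that appear in the statement: running \textsc{Greedy} on $f$ returns $\mathtt{OPT}'_f \ge (1-1/e)\mathtt{OPT}_f$, so $\varepsilon_f \le 1/e$; running \textsc{Saturate} on $g$ and invoking the same analysis as in Theorem~\ref{thm:alg2} (i.e.\ \cite[Thm.~8]{NguyenT21} with $\theta = \frac1m$) gives $\varepsilon_g \le 1 - \frac1m - \widehat{\mathtt{OPT}_g}/\mathtt{OPT}_g$ under the stated relaxed cardinality. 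These bounds hold throughout the run.

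The crux is the feasibility claim: for every $\alpha \le \alpha^*$, the inner \textsc{Greedy} on $F'_{\alpha}$ certifies $\alpha$, i.e.\ it reaches $F'_{\alpha}(S) \ge 2(1-\frac{\varepsilon}{c})$. To establish it, let $S^*$ be the $\alpha^*$-approximate (hence BSM-feasible) solution. Since $\alpha \le \alpha^*$ and $\mathtt{OPT}'_f \le \mathtt{OPT}_f$, $\mathtt{OPT}'_g \le \mathtt{OPT}_g$, we obtain $f(S^*) \ge \alpha\,\mathtt{OPT}'_f$ and $f_i(S^*) \ge g(S^*) \ge \tau\,\mathtt{OPT}'_g$ for all $i$, so both truncated terms of $F'_{\alpha}$ saturate and $F'_{\alpha}(S^*) = 2$. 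Because $2$ is the global maximum of the monotone submodular $F'_{\alpha}$ and $\lvert S^* \rvert = k$, the standard greedy guarantee yields, after $\ell = k\ln\frac{c}{\varepsilon}$ iterations, $F'_{\alpha}(S) \ge \big(1 - (1-\tfrac1k)^{\ell}\big)\cdot 2 \ge (1 - e^{-\ln(c/\varepsilon)})\cdot 2 = 2(1-\frac{\varepsilon}{c})$. This simultaneously gives the size bound $\lvert \widehat{S} \rvert \le k\ln\frac{c}{\varepsilon}$ (one item per greedy step) and the contrapositive that a failed certification forces $\alpha > \alpha^*$.

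Next I would track the bisection invariant $\alpha_{max} \ge \alpha^*$: it holds initially ($\alpha_{max}=1$), and $\alpha_{max}$ is only lowered to a value $\alpha$ at which certification failed, which by the contrapositive satisfies $\alpha > \alpha^*$. At termination the loop condition gives $(1-\varepsilon)\alpha_{max} \le \alpha_{min}$, hence $\alpha_{min} \ge (1-\varepsilon)\alpha_{max} \ge (1-\varepsilon)\alpha^*$. Since $\widehat{S}$ is the solution stored for $\alpha_{min}$ and satisfies $F'_{\alpha_{min}}(\widehat{S}) \ge 2(1-\frac{\varepsilon}{c})$, Lemma~\ref{lm:dec} makes it a $\big((1-2\varepsilon-\varepsilon_f)\alpha_{min},\, 1-2\varepsilon-\varepsilon_g\big)$-approximate solution. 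Substituting $\alpha_{min}\ge(1-\varepsilon)\alpha^*$ and using $(1-2\varepsilon-\varepsilon_f)(1-\varepsilon) = 1-3\varepsilon-\varepsilon_f + 2\varepsilon^2+\varepsilon\varepsilon_f \ge 1-3\varepsilon-\varepsilon_f$ yields $f(\widehat{S}) \ge (1-3\varepsilon-\varepsilon_f)\alpha^*\,\mathtt{OPT}_f$ and $g(\widehat{S}) \ge (1-2\varepsilon-\varepsilon_g)\tau\,\mathtt{OPT}_g$, as claimed.

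Finally, for the running time I would sum three contributions. Computing $\mathtt{OPT}'_f$ and $\mathtt{OPT}'_g$ costs $\mathcal{O}(nmk)$ and $\mathcal{O}(nmk\log(cm))$ respectively (as in Theorem~\ref{thm:alg2}). Each inner \textsc{Greedy} performs $k\ln\frac{c}{\varepsilon}$ iterations, each scanning $n$ candidates with an $\mathcal{O}(m)$ evaluation of $F'_{\alpha}$, i.e.\ $\mathcal{O}(nmk\log\frac{c}{\varepsilon})$ per $\alpha$. The bisection halves $[\alpha_{min},\alpha_{max}]$ per step and halts once the relative gap drops below $\varepsilon$ around $\alpha^*$, hence uses $\mathcal{O}(\log\frac{1}{\alpha^*\varepsilon})$ iterations. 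Multiplying and bounding every logarithmic factor by $\log\frac{cm}{\alpha^*\varepsilon}$ collapses the total to $\mathcal{O}\big(nmk\log^2\frac{cm}{\alpha^*\varepsilon}\big)$. I expect the main obstacle to be the feasibility claim itself — arguing cleanly that $F'_{\alpha}$ attains its maximum $2$ on a size-$k$ set for all $\alpha\le\alpha^*$ despite the substitution of $\mathtt{OPT}'_f,\mathtt{OPT}'_g$ for the true optima — since this is exactly what licenses the greedy submodular-cover bound and underpins the correctness of the whole bisection.
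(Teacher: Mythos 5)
Your proposal is correct and follows essentially the same route as the paper: bound $\varepsilon_f,\varepsilon_g$ via \textsc{Greedy} and \textsc{Saturate}, apply Lemma~\ref{lm:dec} to the solution stored at $\alpha_{min}$, use the greedy bound $F(S_\ell)\geq(1-e^{-\ell/k})\cdot\mathtt{OPT}_F$ with $\ell=k\ln\frac{c}{\varepsilon}$ to relate certification failure to infeasibility, and conclude $\alpha_{min}\geq(1-\varepsilon)\alpha_{max}\geq(1-\varepsilon)\alpha^*$. The only cosmetic difference is that you argue the contrapositive (``$\alpha\leq\alpha^*$ implies $\mathtt{OPT}_{F'_\alpha}=2$, hence certification succeeds'') by exhibiting $F'_\alpha(S^*)=2$ directly, whereas the paper argues ``certification fails at $\alpha_{max}$ implies $\mathtt{OPT}_{F'}<2$, hence $\alpha^*\leq\alpha_{max}$'' via the second half of Lemma~\ref{lm:dec} --- logically the same step, and your invariant formulation handles the edge case $\alpha_{max}=1$ slightly more cleanly.
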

\begin{proof}
  First, since the approximation factor of \textsc{Greedy} \cite{NemhauserWF78} for SM is $1 - 1/e$, we have $\mathtt{OPT}'_f \in [(1 - 1/e) \cdot \mathtt{OPT}_f, \mathtt{OPT}_f]$ and thus $\varepsilon_f \leq \frac{1}{e}$.
  Second, according to the analysis of \textsc{Saturate} in \cite{NguyenT21}, we have $\mathtt{OPT}'_g \in [(1 - \frac{1}{m}) \cdot \widehat{\mathtt{OPT}_g}, \mathtt{OPT}_g]$ where $\widehat{\mathtt{OPT}_g}$ is the optimum of maximizing $g$ with a cardinality constraint $\mathcal{O}\big(k\log^{-1}(cm)\big)$. And thus, $\varepsilon_g \leq 1 - \frac{1}{m} - \frac{\widehat{\mathtt{OPT}_g}}{\mathtt{OPT}_g}$.
  For the lower bound $\alpha_{min}$ of $\alpha$ when the bisection search in Algorithm~\ref{alg1} is terminated, as it holds that $F'_{\alpha_{min}}(\widehat{S}) \geq 2(1 - \frac{\varepsilon}{c})$, we have $\widehat{S}$ is an $\big((1 - 2\varepsilon - \varepsilon_f)\alpha_{min}, 1 - 2\varepsilon - \varepsilon_g \big)$-approximate solution to BSM according to Lemma~\ref{lm:dec}.
  Furthermore, for the upper bound $\alpha_{max}$ of $\alpha$ when the bisection search in Algorithm~\ref{alg1} is terminated, it holds that $F'_{\alpha_{max}}(S_{\alpha_{max}}) < 2(1 - \frac{\varepsilon}{c}) < 2$, where $S_{\alpha_{max}}$ is the greedy solution of size $k\ln{\frac{c}{\varepsilon}}$ for maximizing $F'_{\alpha_{max}}$.
  By generalizing the analysis of \textsc{Greedy} in \cite{NemhauserWF78}, for any monotone submodular function $F$, the greedy solution $S_l$ after $l$ iterations satisfies that $F(S_l) \geq \big(1 - \exp(-\frac{l}{k})\big) \cdot \mathtt{OPT}_F$, where $\mathtt{OPT}_F$ is the optimum of maximizing $F$ with a size constraint $k$. Taking the above inequality into $F'_{\alpha_{max}}$ and $S_{\alpha_{max}}$, we have
  \begin{equation*}
    F'_{\alpha_{max}}(S_{\alpha_{max}}) \geq \Big( 1 - \exp(-\frac{k \ln{\frac{c}{\varepsilon}}}{k}) \Big) \cdot \mathtt{OPT}_{F'} \geq (1 - \frac{\varepsilon}{c}) \cdot \mathtt{OPT}_{F'}.
  \end{equation*}
  Therefore, we have $\mathtt{OPT}_{F'} < 2$, i.e., $F'_{\alpha_{max}}(S) < 2$ for any size-$k$ set $S \subseteq V$.
  Then, we can safely say there is no $\alpha_{max}$-approximate solution to the BSM instance and $\alpha^* \leq \alpha_{max}$.
  Considering all the above results, we prove that $\widehat{S}$ returned by Algorithm~\ref{alg1} is an $\big( (1 - 3\varepsilon - \varepsilon_f) \alpha^*, 1 - 2\varepsilon - \varepsilon_g \big)$-approximate solution of size at most $k \ln{\frac{c}{\varepsilon}}$ to any BSM instance.

  In terms of complexity, \textsc{Greedy} and \textsc{Saturate} take $\mathcal{O}(n m k)$ and $\mathcal{O}\big( n m k \log(cm) \big)$ time, respectively.
  Moreover, the bisection search attempts $\mathcal{O}\big( \log(\frac{1}{\alpha^* \varepsilon}) \big)$ different $\alpha$'s before termination. For each value of $\alpha$, it takes $\mathcal{O}(n m k\log{\frac{c}{\varepsilon}})$ time to compute a solution $S$. Therefore, the time complexity of Algorithm~\ref{alg1} is $\mathcal{O}\big(n m k \log^2(\frac{cm}{\alpha^*\varepsilon})\big)$.
\end{proof}

Theorem~\ref{thm:alg1} indicates that \textsc{BSM-Saturate} improves upon \textsc{BSM-TSGreedy} with better theoretical guarantees:
its approximation factor is close to the best achievable $\alpha^*$ when the error terms are ignored.
However, such an improvement comes at the expense of providing solutions of sizes greater than $k$.
In practice, to ensure that the solution size is at most $k$, we substitute $k \ln{\frac{c}{\varepsilon}}$ in Line~\ref{ln-k-val} of Algorithm~\ref{alg1} with $k$, while keeping the remaining steps unchanged.

\begin{example}
  We consider running \textsc{BSM-Saturate} on the BSM instance with $k = 2$ in Figure~\ref{fig:example}. Here, we set $\varepsilon = 0.1$ and replace $k \ln{\frac{c}{\varepsilon}} = 2 \ln{20}$ in Line~\ref{ln-k-val} with $k = 2$ to ensure that the size of $\widehat{S}$ is $2$. As shown in Example~\ref{example:BSM}, it first runs \textsc{Greedy} and \textsc{Saturate} to compute $S_f = \{v_1, v_2\}$ with $\mathtt{OPT}'_f = 0.75$ and $S_g = \{v_1, v_4\}$ with $\mathtt{OPT}'_g \approx 0.556$. For a given $\tau \in [0, 1]$, it maximizes $F'_{\alpha}$ in Lemma~\ref{lm:dec} for different $\alpha$'s using the greedy algorithm until $\frac{\alpha_{min}}{\alpha_{max}} > 0.9$ and returns the solution w.r.t.~$\alpha_{min}$ for BSM. When $\tau = 0.2$ and $0.5$, it attempts to maximize $F'_{\alpha}$ for $\alpha = 0.5$, $0.75$, $0.875$, and $0.9375$ one by one. For each $\alpha$ value, it adds $v_3$ and $v_1$ into $S$ and gets $F'_{\alpha}(\{v_1, v_3\}) > 1.9$. Thus, it terminates with $\alpha_{min} = 0.9375$, $\alpha_{max} = 1$, and $F'_{0.9375}(\{v_1, v_3\}) \approx 0.95 + 1 > 1.9$, and returns $\widehat{S} = \{ v_1, v_3 \}$ for BSM. When $\tau = 0.8$, it also attempts to maximize $F'_{\alpha}$ for $\alpha = 0.5$, $0.75$, and $0.875$. However, it obtains $F'_{0.875}(\{v_1, v_3\}) = 1.875 < 1.9$ for $\alpha = 0.875$. Next, it sets $\alpha = 0.8125$ and gets $F'_{0.8125}(\{v_1, v_4\}) \approx 0.96 + 1 > 1.9$. Thus, it finishes with $\alpha_{min} = 0.8125$, $\alpha_{max} = 0.875$, and returns $\widehat{S} = \{ v_1, v_4 \}$ for BSM.
\end{example}

\section{Experimental Evaluation}
\label{sec:exp}

In this section, we evaluate our optimization framework, i.e., BSM, and algorithms, i.e., \textsc{BSM-TSGreedy} and \textsc{BSM-Saturate}, by extensive experiments on three problems, namely \emph{maximum coverage}, \emph{influence maximization}, and \emph{facility location}, using real-world and synthetic datasets.
The goal of the experiments is to answer the following questions:
\begin{description}
  \item[Q1:] How does the factor $\tau$ affect the balance between the values of two objective functions $f$ and $g$?
  \item[Q2:] How far are the solutions produced by the proposed approximation algorithms from optimal in practice?
  \item[Q3:] How effective and efficient are the proposed algorithms?
  \item[Q4:] Are the proposed algorithms scalable to large data?
\end{description}

\parax{Algorithms and Baselines}
The following algorithms and baselines are compared in the experiments.
\begin{itemize}
  \item \textsc{Greedy} \cite{NemhauserWF78}: The classic $(1-1/e)$-approximation greedy algorithm for SM, which is also used as a subroutine to maximize $f$ in our algorithms.
  \item \textsc{Saturate} \cite{Krause08}: The bicriteria approximation algorithm for RSM, which is also used as a subroutine to maximize $g$ in our algorithms.
  \item SMSC \cite{OhsakaM21}: The $(0.16, 0.16)$-approximation algorithm for submodular maximization under submodular cover, which can be used for BSM only when $c=2$ by maximizing two submodular functions $f_1$ and $f_2$ simultaneously.
  \item \textsc{BSM-Optimal}: The exponential-time exact algorithm for BSM. Since the \emph{maximum coverage} and \emph{facility location} problems can be formulated as integer linear programs (ILPs) \cite{ip-book}, we find the optimal solutions of small BSM instances using an ILP solver (see Appendix~\ref{app:ilp}). We aim to measure the gap between the optimal and approximate solutions to BSM.
  \item \textsc{BSM-TSGreedy}: Our first instance-dependent bicriteria approximation algorithm for BSM (Algorithm~\ref{alg2}).
  \item \textsc{BSM-Saturate}: Our second instance-dependent bicriteria approximation algorithm for BSM (Algorithm~\ref{alg1}), which is adapted to provide solutions of size at most $k$ for a fair comparison. In preliminary experiments, we observe that the value of $\varepsilon$ hardly affects the performance of \textsc{BSM-Saturate} unless $\varepsilon \geq 0.5$ (see Appendix~\ref{app:sec:exp}). To guarantee the good performance of \textsc{BSM-Saturate} in all cases, we set $\varepsilon = 0.05$ throughout the experiments in this section.
\end{itemize}

We implemented all the above algorithms in Python 3.
We used the \emph{lazy-forward} strategy \cite{LeskovecKGFVG07} to accelerate all algorithms except \textsc{BSM-Optimal} (for which the \emph{lazy-forward} strategy is not applicable).
The Gurobi optimizer\footnote{\url{https://www.gurobi.com/products/gurobi-optimizer/}} was applied to solve ILPs in \textsc{BSM-Optimal}.
All experiments were run on a server with an Intel Xeon E5-2650v4 2.2GHz processor and 96GB memory running Ubuntu 18.04 LTS.
Our code and data are published on \url{https://github.com/yhwang1990/code-bsm-release}.

\begin{table*}[t]
  \small
  \centering
  \caption{Statistics of datasets in the MC and IM experiments.}
  \label{tbl-stats}
  \begin{tabular}{llll}
  \toprule
  \textbf{Dataset} & $n$ (and $m$) & $\lvert E \rvert$ & \textbf{Percentage of users from each group} \\
  \midrule
  RAND ($c = 2$) & 500 or 100 & 8,946 or 360 & [`$U_0$': 20\%, `$U_1$': 80\%] \\
  RAND ($c = 4$) & 500 or 100 & 6,655 or 257 & [`$U_0$': 8\%, `$U_1$': 12\%, `$U_2$': 20\%, `$U_3$': 60\%] \\
  Facebook (Age, $c = 2$) & 1,216 & 42,443 & [`$< 20$': 8\%, `$\geq 20$': 92\%] \\
  Facebook (Age, $c = 4$) & 1,216 & 42,443 & [`$19$': 8\%, `$20$': 28\%, `$21$': 31\%, `$22$': 33\%] \\
  DBLP (Continent, $c = 5$) & 3,980 & 6,966  & [`Asia': 21\%, `Europe': 23\%, `North America': 52\%, `Oceania': 3\%, `South America': 1\%] \\
  Pokec (Gender, $c = 2$) & 1,632,803 & 30,622,564 & [`Female': 51\%, `Male': 49\%] \\
  Pokec (Age, $c = 6$) & 1,632,803 & 30,622,564 & [`0--20': 17\%, `21--30': 45\%, `31--40': 29\%, `41--50': 6\%, `51--60': 2\%, `60+': 1\%] \\
  \bottomrule
  \end{tabular}
\end{table*}

\begin{figure*}[t]
  \centering
  \includegraphics[width=.8\linewidth]{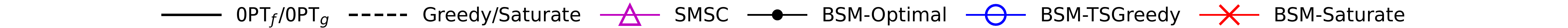}\\
  \smallskip
  \subcaptionbox{RAND ($c = 2, k = 5$)\label{fig:mc:tau:1}}[.33\linewidth]{\includegraphics[width=.48\linewidth]{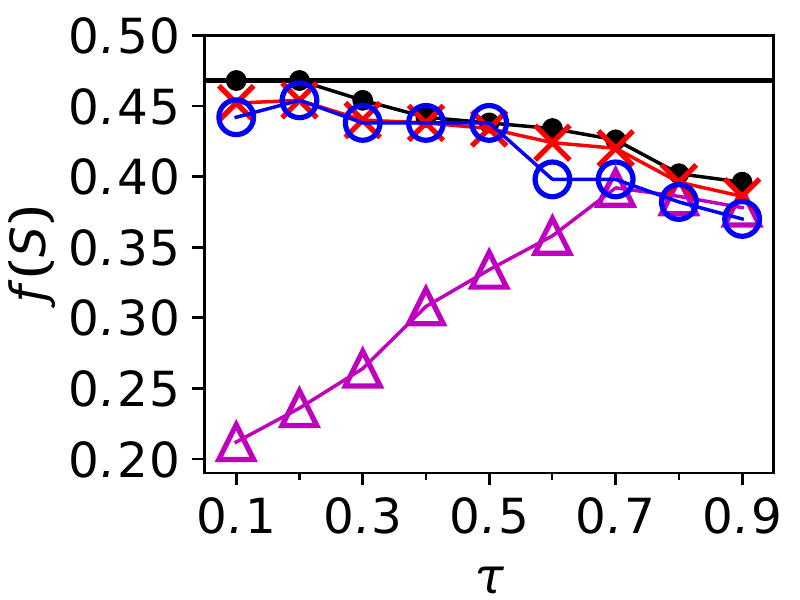} \includegraphics[width=.48\linewidth]{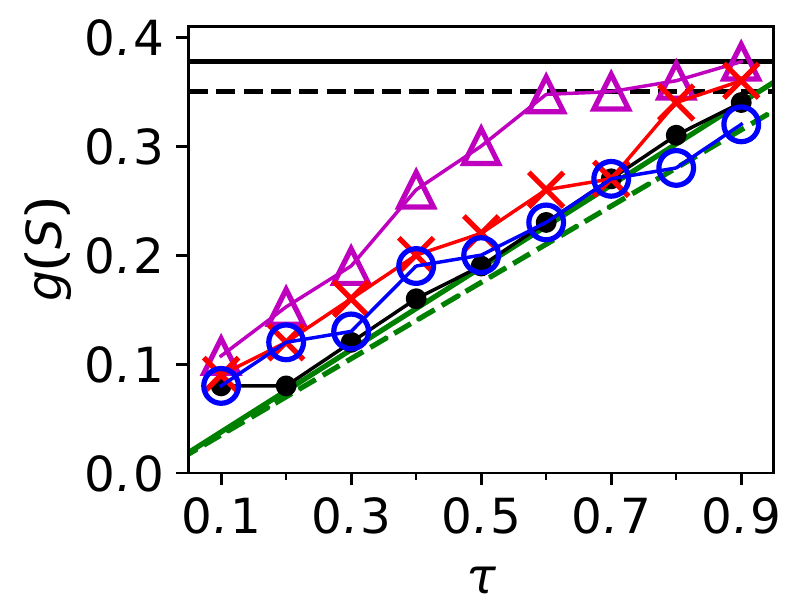}}
  \subcaptionbox{RAND ($c = 4, k = 5$)\label{fig:mc:tau:2}}[.33\linewidth]{\includegraphics[width=.48\linewidth]{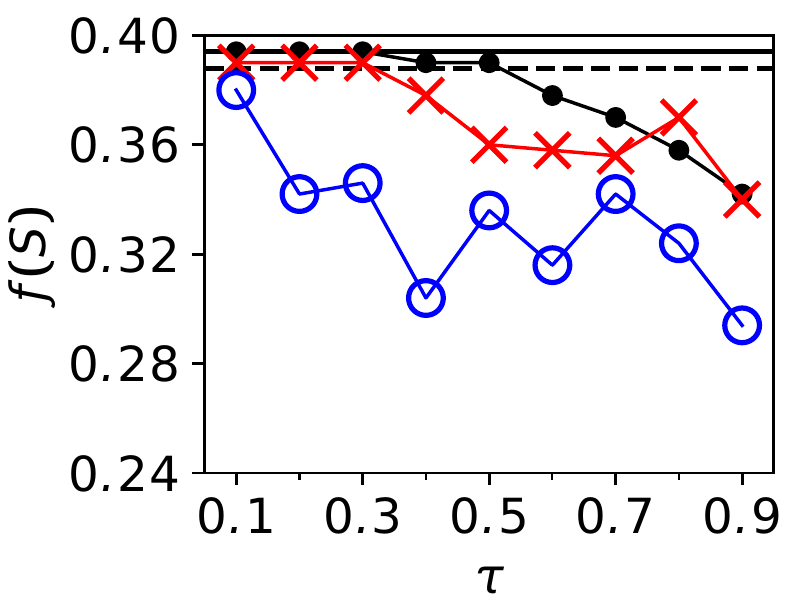} \includegraphics[width=.48\linewidth]{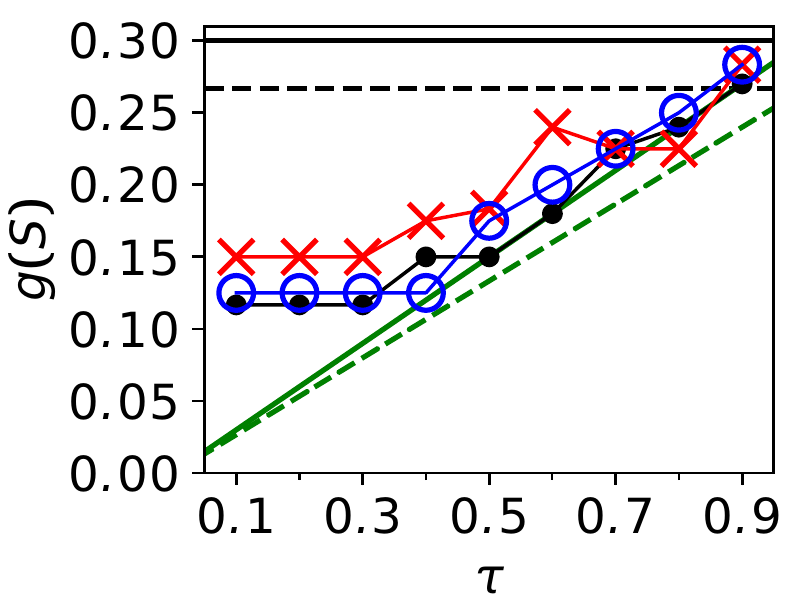}}
  \subcaptionbox{DBLP ($c = 5, k = 10$)\label{fig:mc:tau:3}}[.33\linewidth]{\includegraphics[width=.48\linewidth]{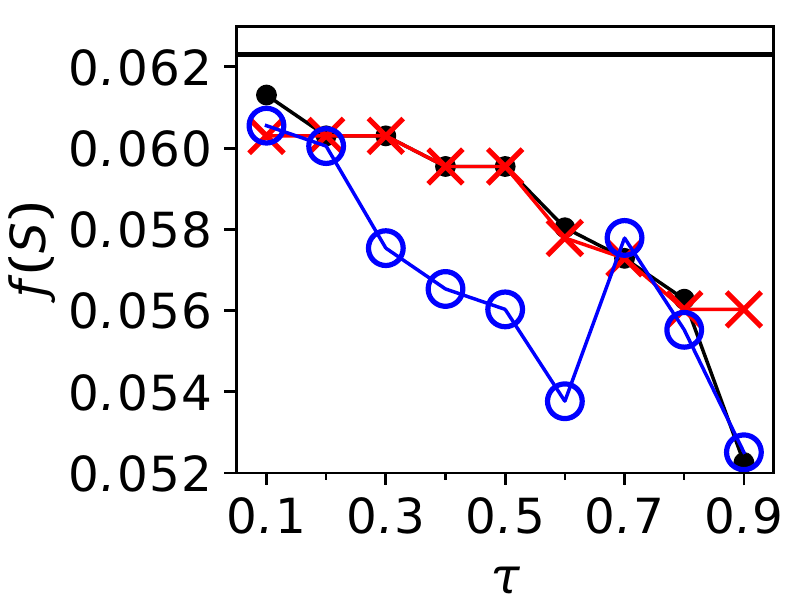} \includegraphics[width=.48\linewidth]{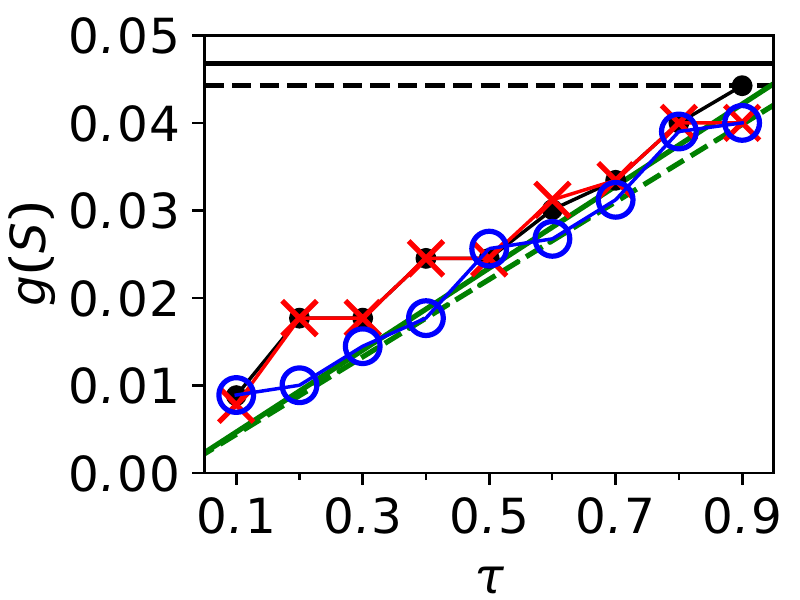}}
  \caption{Results of different algorithms for maximum coverage by varying the factor $\tau$ on two random graphs and DBLP. The green straight lines in the plots for $g(S)$ denote $\tau \cdot \mathtt{OPT}_g$ (solid) and $\tau \cdot \mathtt{OPT}'_g$ (dashed), where $\mathtt{OPT}_g$ and $\mathtt{OPT}'_g$ are computed by ILP and \textsc{Saturate}, to show whether the constraints $g(S) \geq \tau \cdot \mathtt{OPT}_g$ and $g(S) \geq \tau \cdot \mathtt{OPT}'_g$ are satisfied.}
  \label{fig:mc:tau}
\end{figure*}

\subsection{Maximum Coverage}

\parax{Setup}
We first apply the BSM framework to the maximum coverage (MC) problem to maximize the overall coverage while ensuring group-level fairness. Specifically, for a set $U$ of $m$ users and a collection $V$ of $n$ sets defined on $U$, we suppose that the utility $f_u(S)$ of a set $S \subseteq V$ to user $u$ is equal to $1$ if $u$ is contained by the union of the sets in $S$ and $0$ otherwise. Given the definition of $f_u$, the function $f(S)$ in Eq.~\ref{eq:average} captures the average coverage of $S$ over $U$ and the function $g(S)$ in Eq.~\ref{eq:maximin} denotes the minimum of average coverages of $S$ among different groups on $U$.

\begin{figure*}[t]
  \centering
  \includegraphics[width=.8\linewidth]{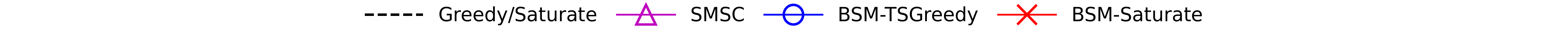}\\
  \smallskip
  \subcaptionbox{Facebook (Age, $c = 2, \tau = 0.8$)\label{fig:mc:k:1}}[.49\linewidth]{\includegraphics[width=.32\linewidth]{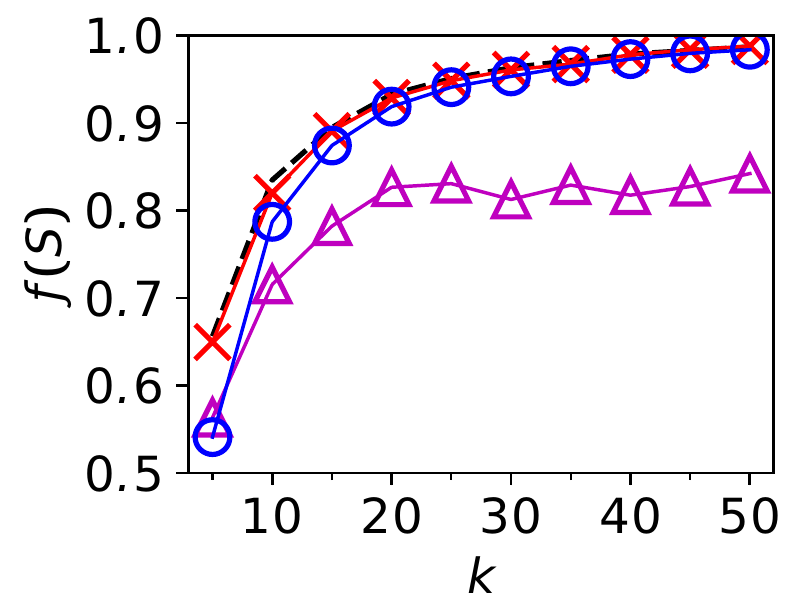} \includegraphics[width=.32\linewidth]{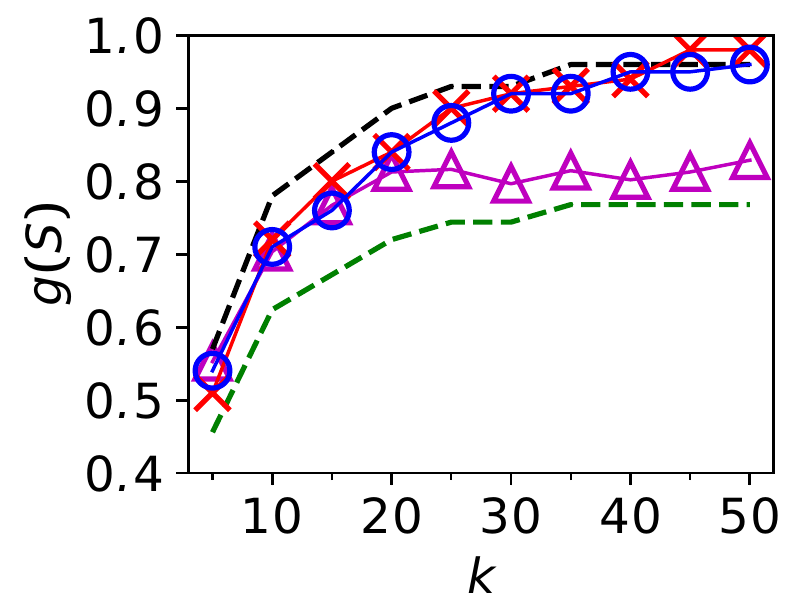} \includegraphics[width=.32\linewidth]{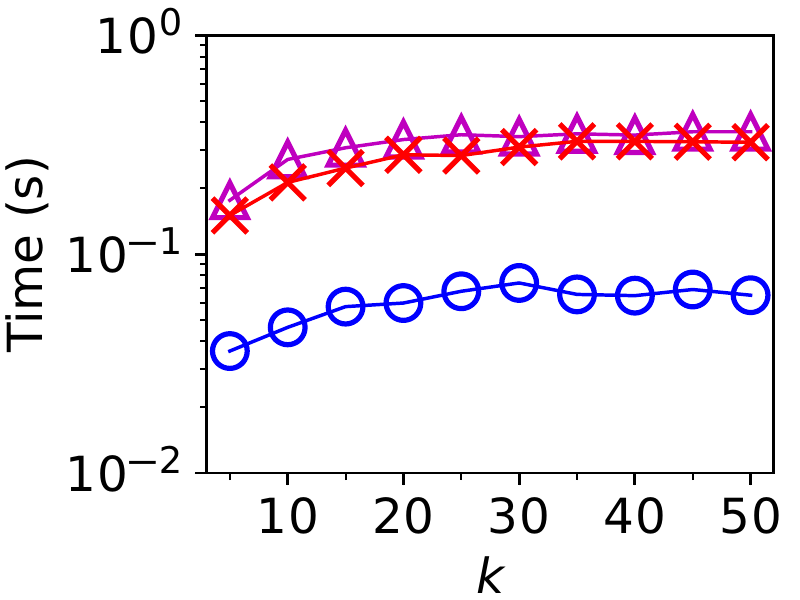}}
  \subcaptionbox{Facebook (Age, $c = 4, \tau = 0.8$)\label{fig:mc:k:2}}[.49\linewidth]{\includegraphics[width=.32\linewidth]{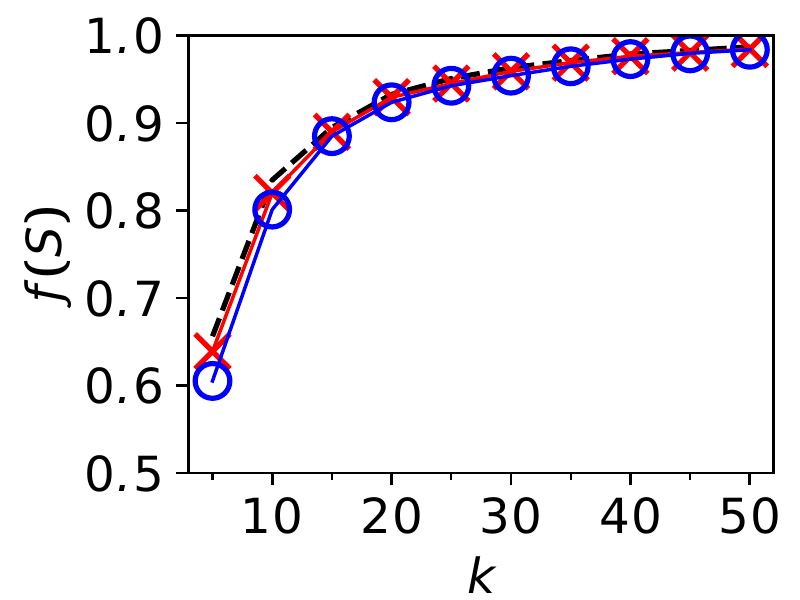} \includegraphics[width=.32\linewidth]{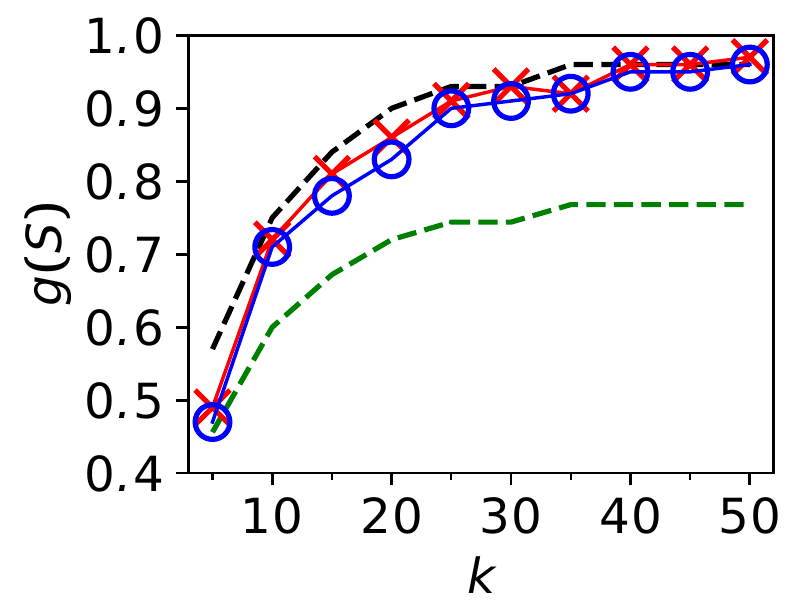} \includegraphics[width=.32\linewidth]{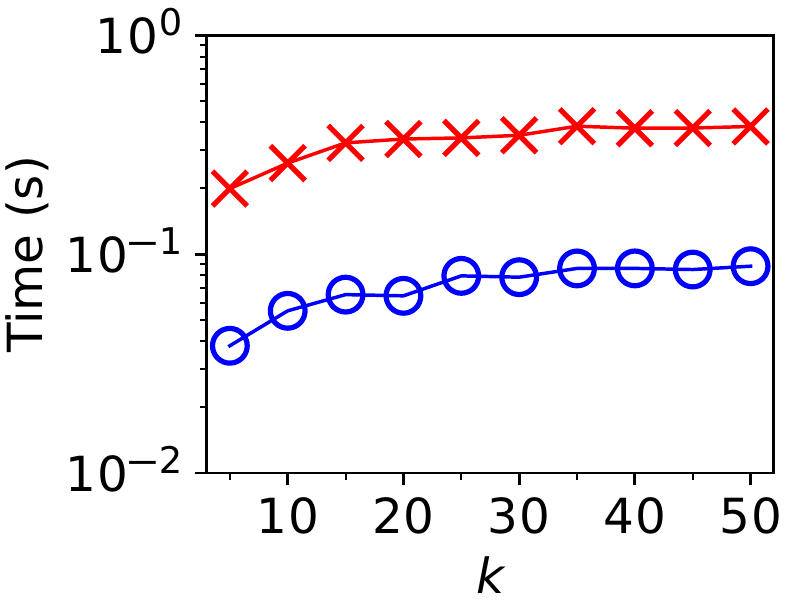}}\\
  \smallskip
  \subcaptionbox{Pokec (Gender, $c = 2, \tau = 0.8$)\label{fig:mc:k:3}}[.49\linewidth]{\includegraphics[width=.32\linewidth]{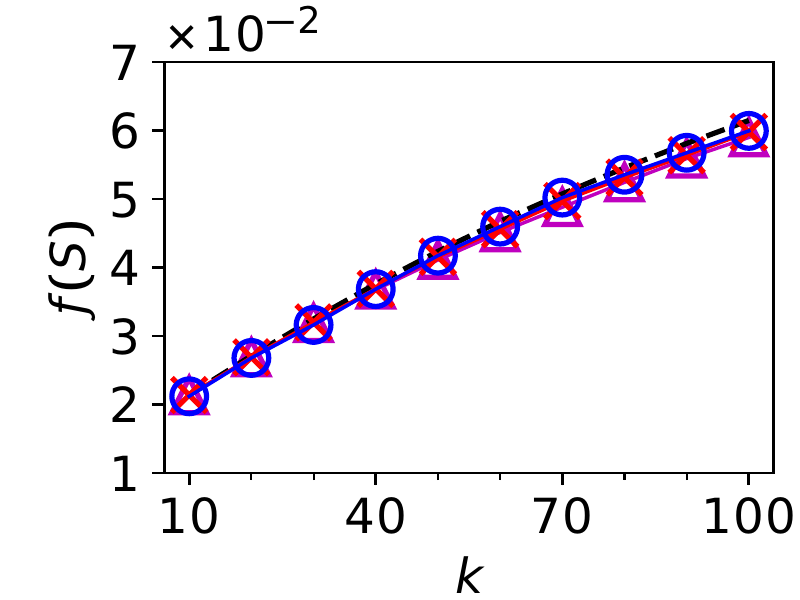} \includegraphics[width=.32\linewidth]{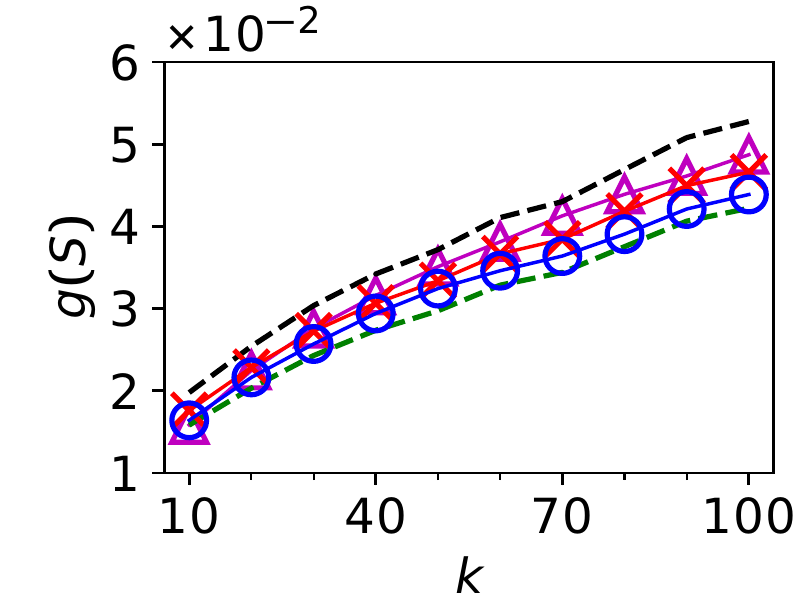} \includegraphics[width=.32\linewidth]{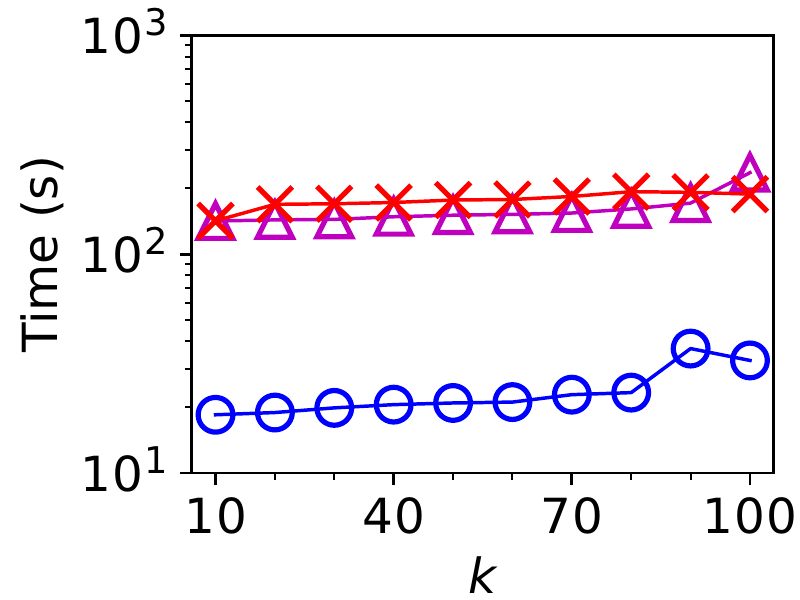}}
  \subcaptionbox{Pokec (Age, $c = 6, \tau = 0.8$)\label{fig:mc:k:4}}[.49\linewidth]{\includegraphics[width=.32\linewidth]{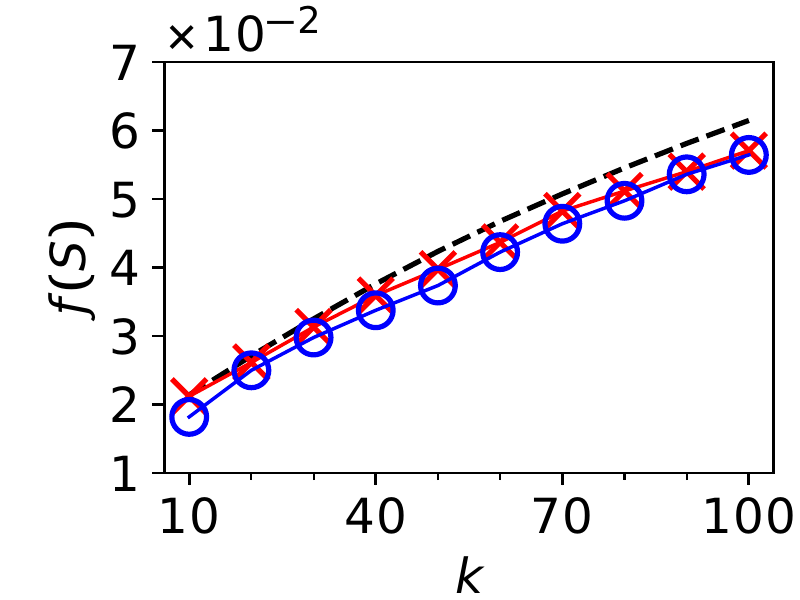} \includegraphics[width=.32\linewidth]{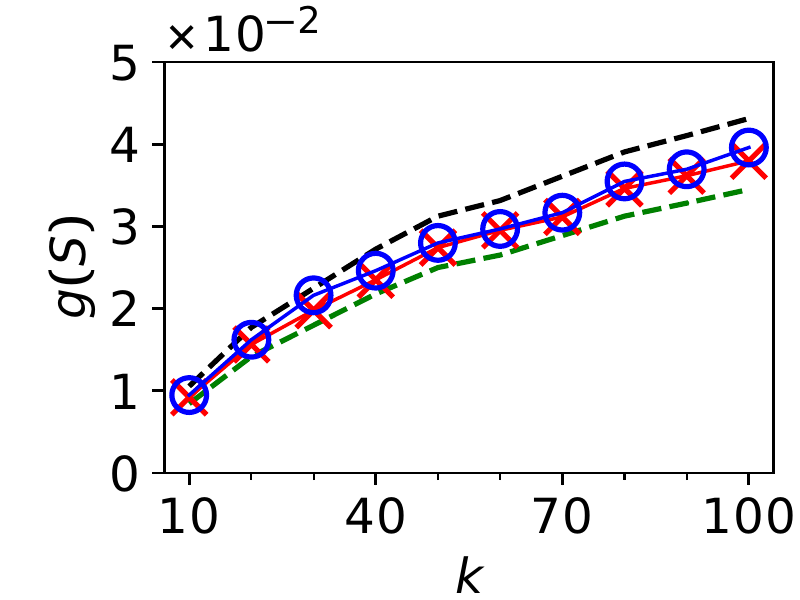} \includegraphics[width=.32\linewidth]{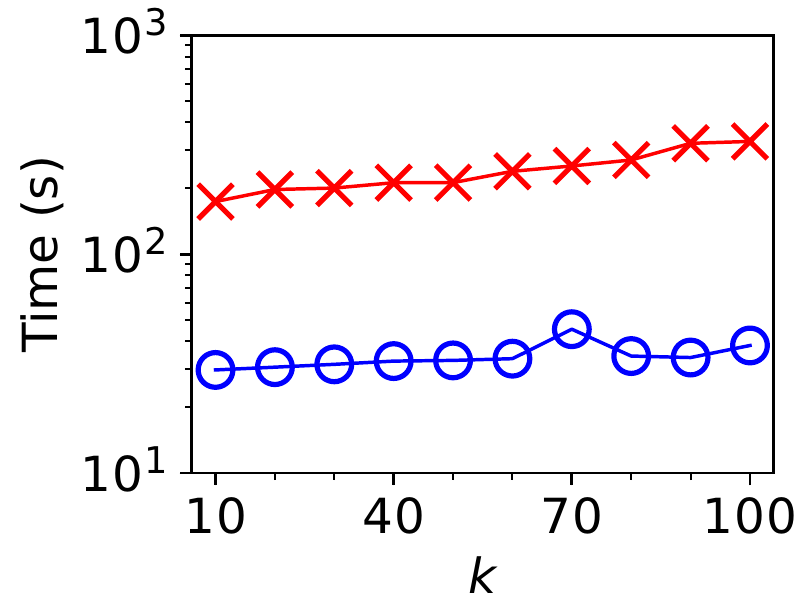}}
  \caption{Results of different algorithms for maximum coverage by varying solution size $k$ on Facebook and Pokec. The green dashed lines in the plots for $g(S)$ denote $\tau \cdot \mathtt{OPT}'_g$, where $\mathtt{OPT}'_g$ is computed by \textsc{Saturate}, to indicate whether $g(S) \geq \tau \cdot \mathtt{OPT}'_g$.}
  \label{fig:mc:k}
\end{figure*}

\begin{figure*}[t]
  \centering
  \includegraphics[width=.8\linewidth]{figs/legend-2.pdf}\\
  \smallskip
  \subcaptionbox{RAND ($c = 2, k = 5$)\label{fig:im:tau:1}}[.33\linewidth]{\includegraphics[width=.48\linewidth]{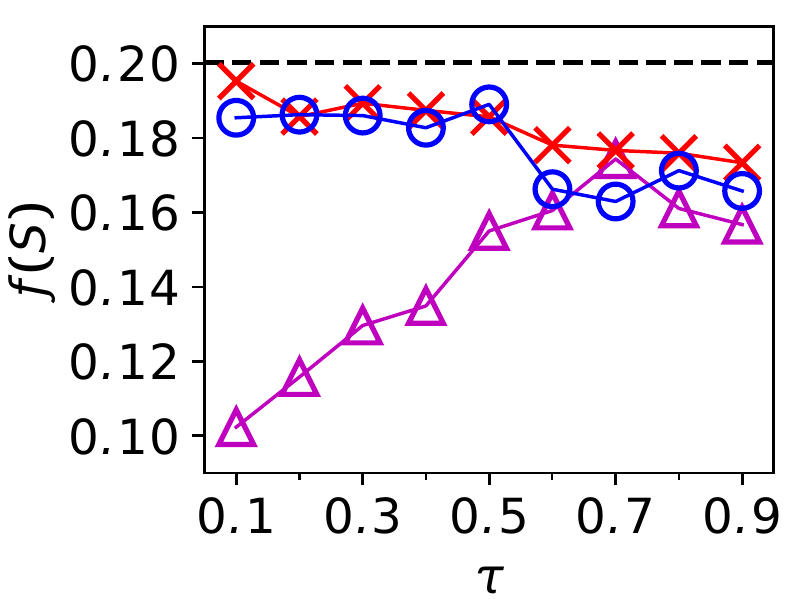} \includegraphics[width=.48\linewidth]{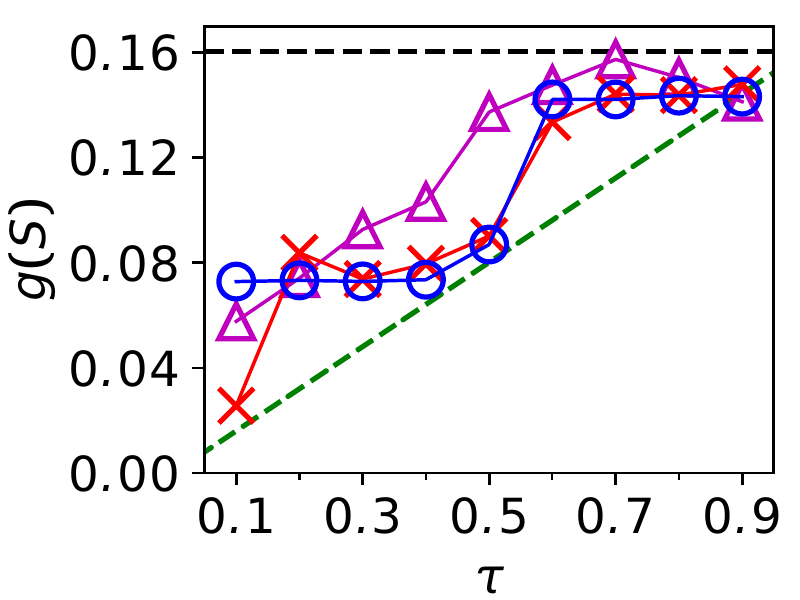}}
  \subcaptionbox{RAND ($c = 4, k = 5$)\label{fig:im:tau:2}}[.33\linewidth]{\includegraphics[width=.48\linewidth]{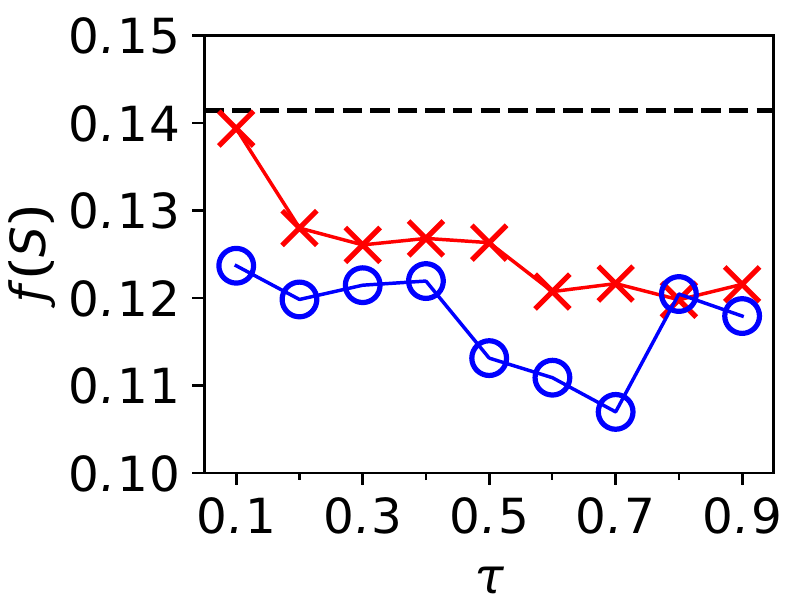} \includegraphics[width=.48\linewidth]{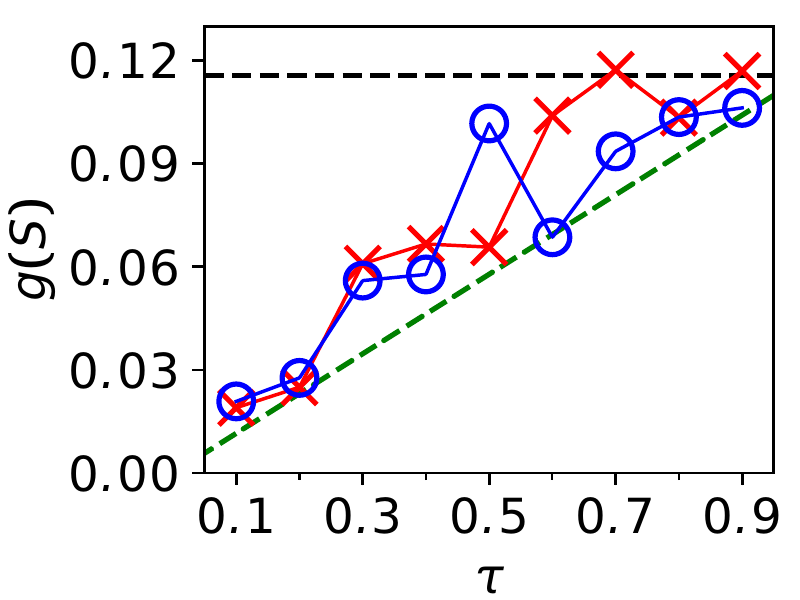}}
  \subcaptionbox{DBLP ($c = 5, k = 10$)\label{fig:im:tau:3}}[.33\linewidth]{\includegraphics[width=.48\linewidth]{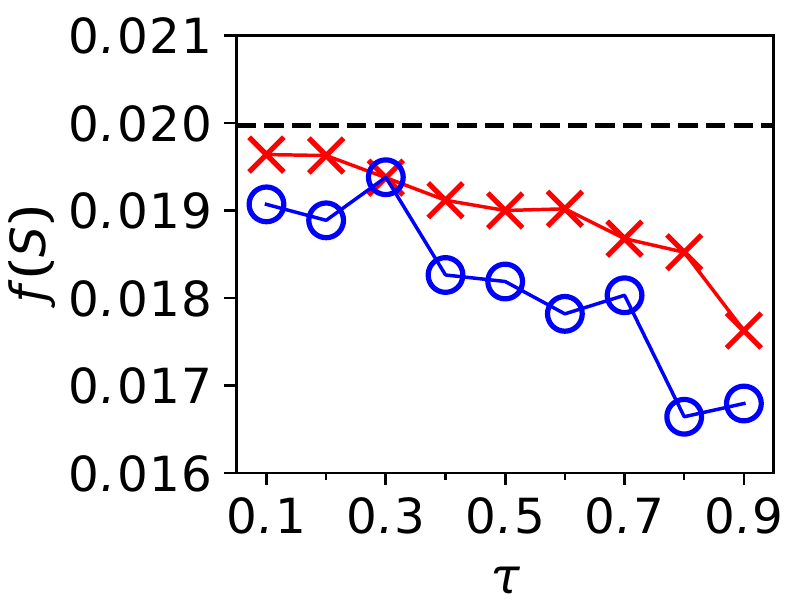} \includegraphics[width=.48\linewidth]{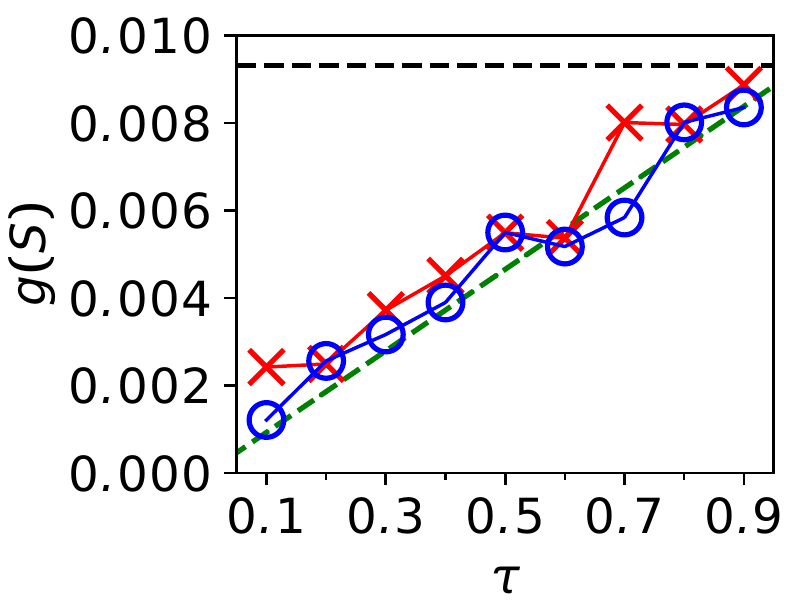}}
  \caption{Results of different algorithms for influence maximization by varying the factor $\tau$ on two random graphs and DBLP. The green, dashed straight lines denote $\tau \cdot \mathtt{OPT}'_g$, where $\mathtt{OPT}'_g$ is computed by \textsc{Saturate}, to show whether $g(S) \geq \tau \cdot \mathtt{OPT}'_g$.}
  \label{fig:im:tau}
\end{figure*}

We use two synthetic and three real-world datasets in the MC experiments.
The two synthetic datasets are undirected random graphs generated by the stochastic block model \cite{Holland83} (SBM) with 500 nodes consisting of two and four groups, respectively.
We set the intra-group and inter-group connection probabilities in the generation procedure to $0.1$ and $0.02$.
We also use three publicly available real-world datasets, namely \emph{Facebook} \cite{MisloveVGD10}, \emph{DBLP} \cite{10097603}, and \emph{Pokec}\footnote{\url{https://snap.stanford.edu/data/soc-Pokec.html}}, in the experiments.
The \emph{Facebook} dataset is an undirected graph representing the friendships between students at Rice University on Facebook.
Profile data contains the \emph{age} attribute to divide students into two (i.e., $\text{age}<20$ and $\text{age} \geq 20$) or four (i.e., $\text{age}= 19, 20, 21, 22$) groups.
The \emph{DBLP} dataset is an undirected graph denoting the co-authorships between researchers.
We divide them into five groups based on which continent (i.e., `Asia', `Europe', `North America', `Oceania', `South America') their affiliations are located in.
The \emph{Pokec} dataset is a directed graph representing the follower-followee relationships of users in a social network in Slovakia.
We use the \emph{gender} and \emph{age} information in profile data to divide users into two and six groups, respectively.
Table~\ref{tbl-stats} shows the statistics of all the above datasets, where $n$ (and $m$) is the number of nodes (and users) and $|E|$ is the number of edges.
We also present the percentage of users from each group in the population.
Following a common dominating set \cite{dominating-set} formulation, we construct a set system on each graph as follows. First, the ground set $U$ is equal to the set of nodes. Then, for each node $v$, we create a set $S(v)$ containing all its out-neighbors $N_{out}(v)$ plus itself. Finally, the set collection $V$ consists of the sets created for all the nodes in the graph. Based on the construction procedure, the task of BSM is to select a set of $k$ nodes that not only contain the largest number of users in their neighborhoods but also ensure at least a minimum average coverage for every group.

\smallskip\parax{Results on Effect of $\tau$}
Figure~\ref{fig:mc:tau} shows the values of $f(S), g(S)$, where $S$ is the solution returned by each algorithm, for each value of factor $\tau = \{0.1, 0.2, \ldots, 0.9\}$ on two random graphs when $k = 5$ and on \emph{DBLP} when $k = 10$.
The results for SMSC are ignored on the datasets with more than two groups because it does not provide any valid solution when $c>2$.
The optimums $\mathtt{OPT}_f$ and $\mathtt{OPT}_g$ for solely maximizing $f$ and $g$ as well as their approximations $\mathtt{OPT}'_f$ and $\mathtt{OPT}'_g$ returned by \textsc{Greedy} and \textsc{Saturate} are plotted as black horizontal lines to illustrate the trade-offs between \emph{utility} ($f$) and \emph{fairness} ($g$) of different algorithms w.r.t.~$\tau$.
In general, our BSM framework achieves a good utility-fairness trade-off. On the one hand, when the value of $\tau$ is close to $0$, $f(S)$ approaches or even reaches $\mathtt{OPT}_f$; on the other hand, when the value of $\tau$ increases, $f(S)$ decreases but $g(S)$ increases accordingly.
In contrast, the SMSC framework fails to balance two objectives well.
Compared with the optimal solutions returned by \textsc{BSM-Optimal}, the approximate solutions returned by \textsc{BSM-Saturate} and \textsc{BSM-TSGreedy} have up to $9\%$ and $26\%$ losses in $f(S)$.
\textsc{BSM-TSGreedy} provides lower-quality solutions than \textsc{BSM-Saturate} in terms of $f(S)$ for almost all $\tau$ values on all the three graphs.
This is mostly because the solution of \textsc{BSM-TSGreedy} has contained nearly $k$ items after the first stage, and the number $k'$ of items added in the second stage is thus close to $0$.
Nevertheless, \textsc{BSM-Saturate} achieves better trade-offs between $f(S)$ and $g(S)$ by combining them into a single objective function.
We find that \textsc{BSM-Saturate} and \textsc{BSM-TSGreedy} provide solutions higher in $f(S)$ than \textsc{BSM-Optimal} in very few cases.
Such results do not challenge the optimality of \textsc{BSM-Optimal} because \textsc{BSM-Saturate} and \textsc{BSM-TSGreedy} do not provide valid BSM solutions satisfying the constraint of $g(S) \geq \tau \cdot \mathtt{OPT}_g$ (i.e., above the solid green line) in these cases.
In fact, the solutions of both \textsc{BSM-Saturate} and \textsc{BSM-TSGreedy} violate the constraint $g(S) \geq \tau \cdot \mathtt{OPT}_g$ of BSM (i.e., below the solid green line) in some cases, because $\mathtt{OPT}_g$ is unknown to them, but always satisfy a ``weaker'' constraint $g(S) \geq \tau \cdot \mathtt{OPT}'_g$ (i.e., above the dashed green line), since $\mathtt{OPT}'_g$ is used as their input.

\begin{figure*}[t]
  \centering
  \includegraphics[width=.8\linewidth]{figs/legend-2.pdf}\\
  \smallskip
  \subcaptionbox{Facebook (Age, $c = 2, \tau = 0.8$)\label{fig:im:k:1}}[.49\linewidth]{\includegraphics[width=.32\linewidth]{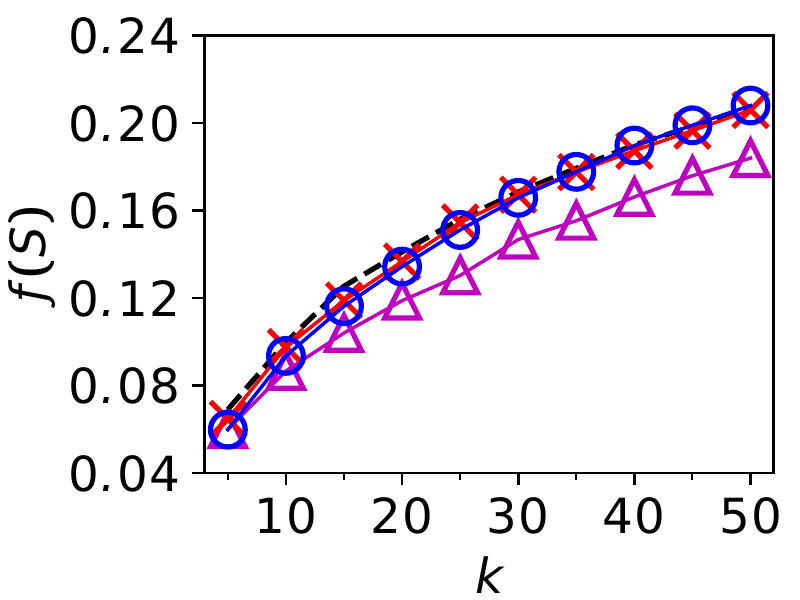} \includegraphics[width=.32\linewidth]{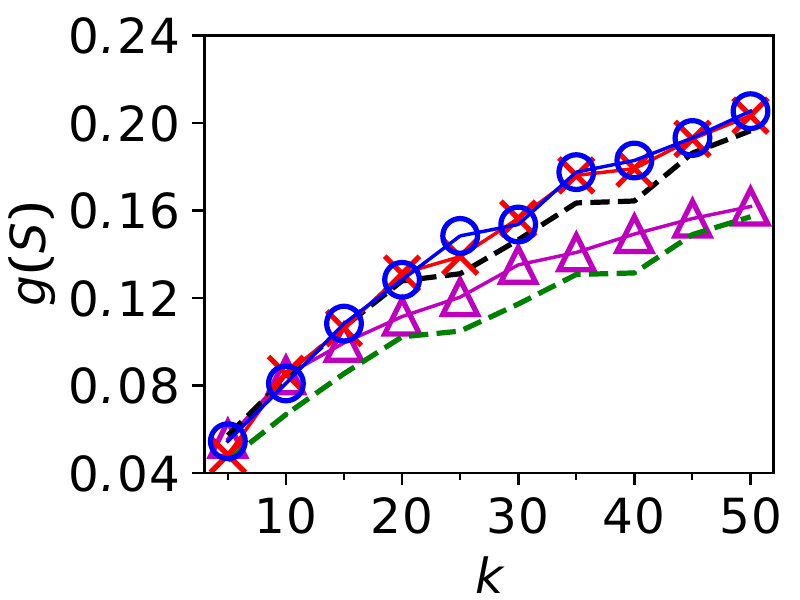} \includegraphics[width=.32\linewidth]{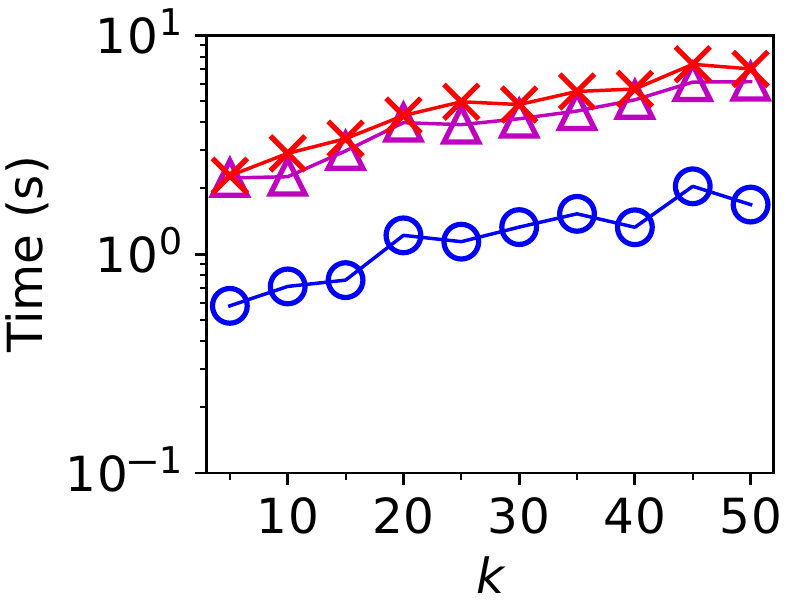}}
  \subcaptionbox{Facebook (Age, $c = 4, \tau = 0.8$)\label{fig:im:k:2}}[.49\linewidth]{\includegraphics[width=.32\linewidth]{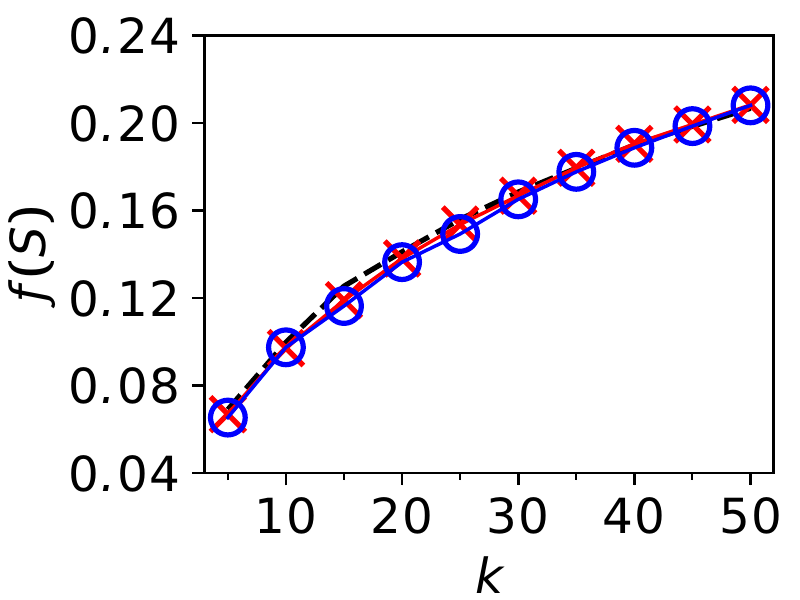} \includegraphics[width=.32\linewidth]{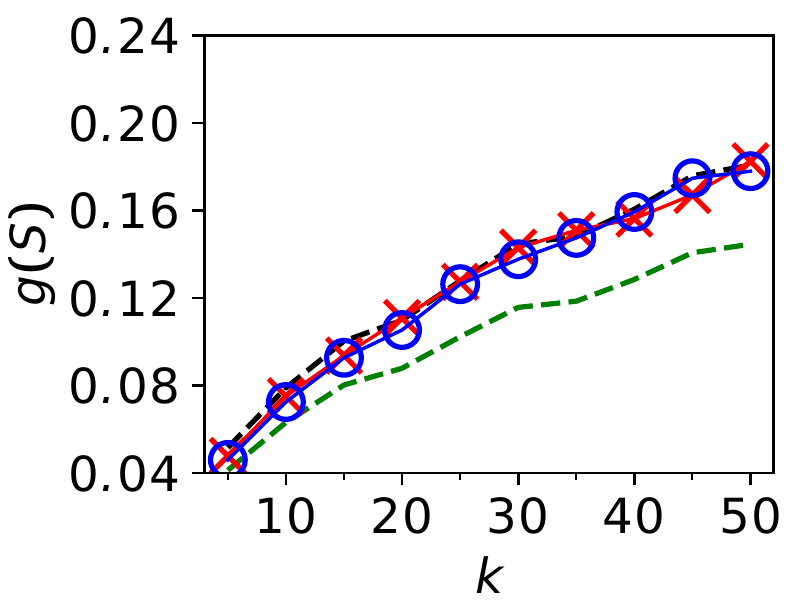} \includegraphics[width=.32\linewidth]{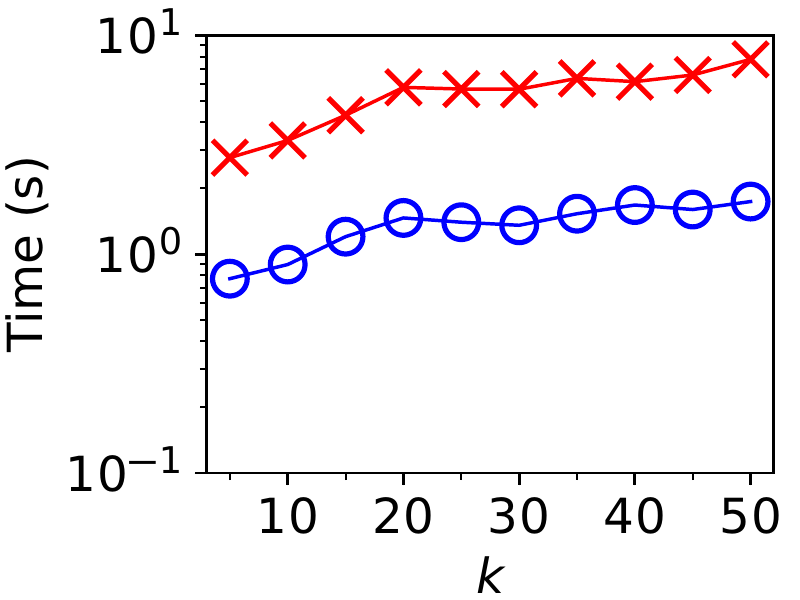}}
  \\ \smallskip
  \subcaptionbox{Pokec (Gender, $c = 2, \tau = 0.8$)\label{fig:im:k:3}}[.49\linewidth]{\includegraphics[width=.32\linewidth]{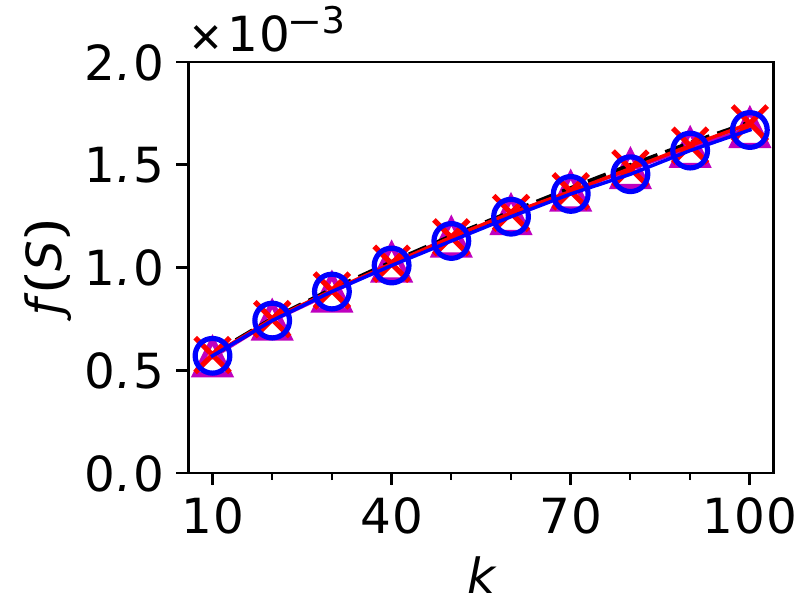} \includegraphics[width=.32\linewidth]{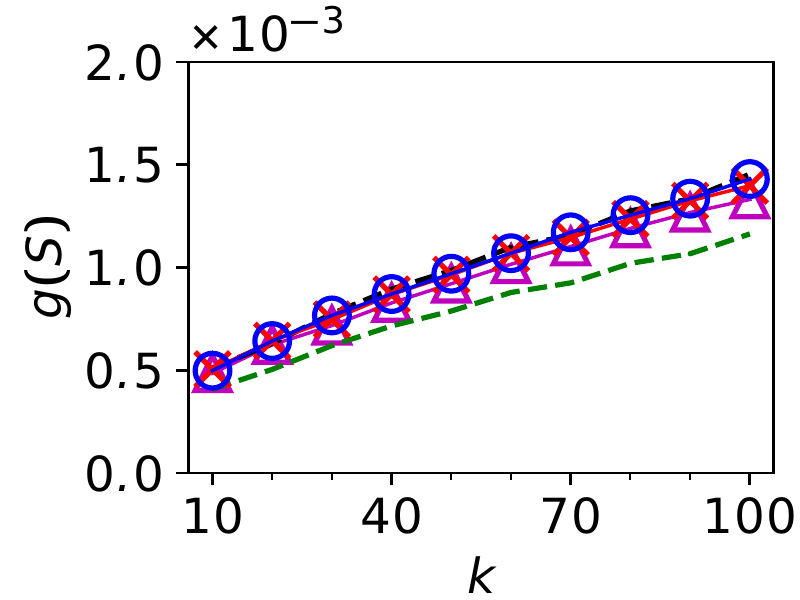} \includegraphics[width=.32\linewidth]{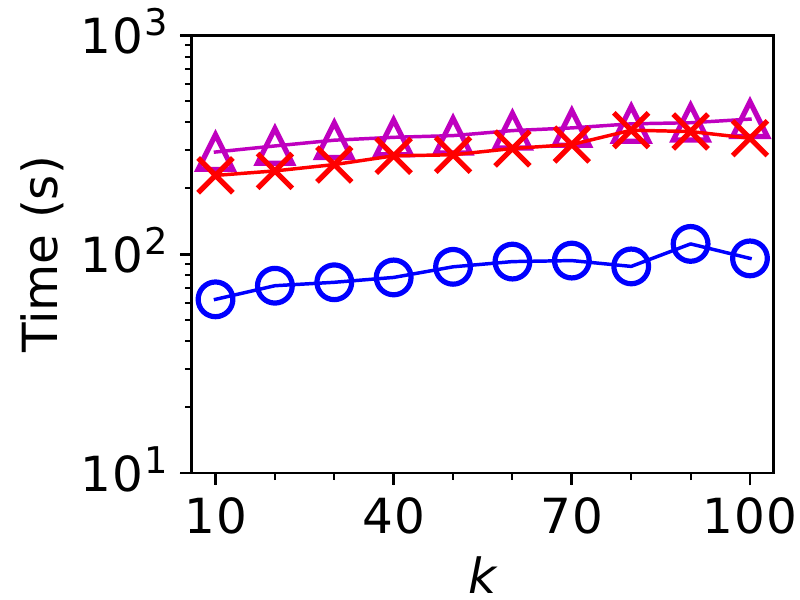}}
  \subcaptionbox{Pokec (Age, $c = 6, \tau = 0.8$)\label{fig:im:k:4}}[.49\linewidth]{\includegraphics[width=.32\linewidth]{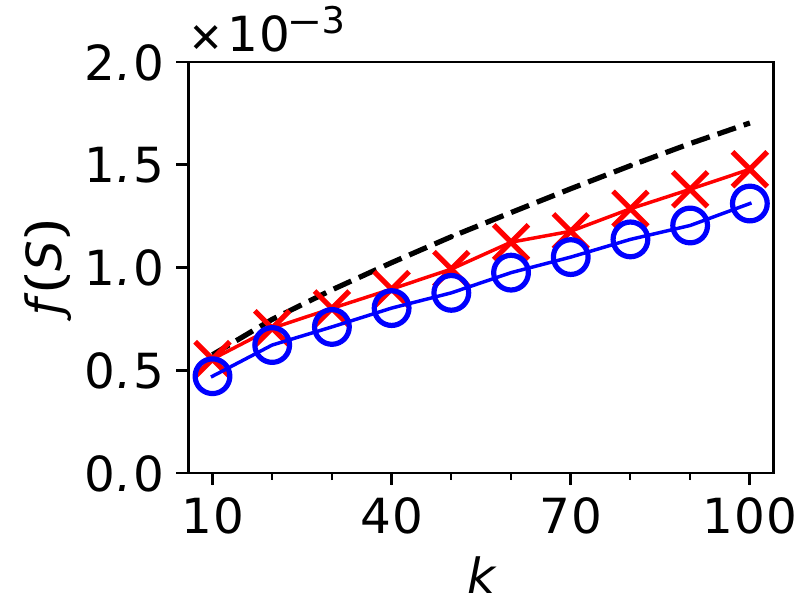} \includegraphics[width=.32\linewidth]{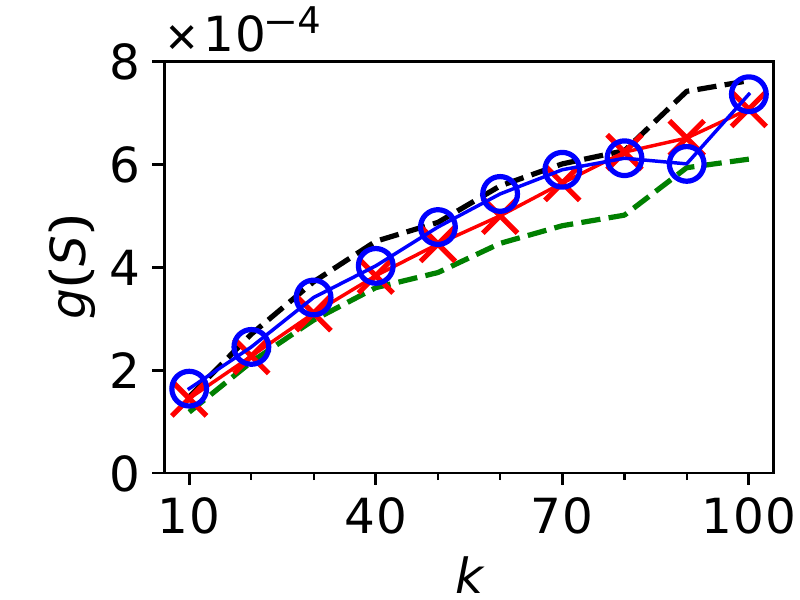} \includegraphics[width=.32\linewidth]{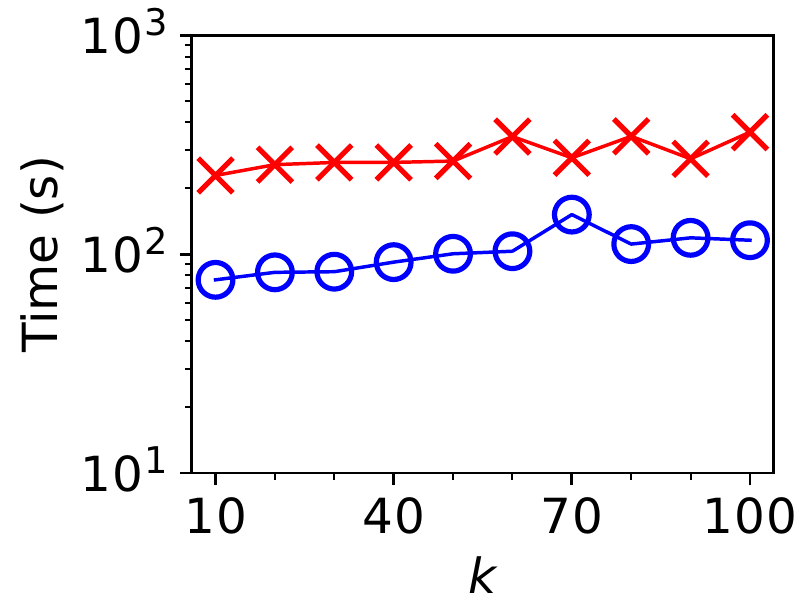}}
  \caption{Results of different algorithms for influence maximization by varying the solution size $k$ on Facebook and Pokec. The green dashed lines denote $\tau \cdot \mathtt{OPT}'_g$, where $\mathtt{OPT}'_g$ is computed by \textsc{Saturate}, to show whether $g(S) \geq \tau \cdot \mathtt{OPT}'_g$.}
  \label{fig:im:k}
\end{figure*}

\begin{table*}[t]
  \small
  \centering
  \caption{Statistics of datasets in the FL experiments.}
  \label{tbl-stats-fl}
  \begin{tabular}{lllll}
  \toprule
  \textbf{Dataset} & $n$ & $m$ & $d$ & \textbf{Percentage of users from each group} \\
  \midrule
  RAND ($c = 2$) & 100 & 100 & 5 & [`$U_0$': 15\%, `$U_1$': 85\%] \\
  RAND ($c = 3$) & 100 & 100 & 5 & [`$U_0$': 5\%, `$U_1$': 20\%, `$U_2$': 75\%] \\
  Adult-Small (Race, $c = 5$) & 100 & 100 & 6 & [`Amer-Indian-Eskimo': 1\%, `Asian-Pac-Islander': 2\%, `Black': 14\%, `White': 82\%, `Others': 1\%] \\
  Adult (Gender, $c = 2$) & 1,000 & 1,000 & 6 & [`Female': 34\%, `Male': 66\%] \\
  Adult (Race, $c = 5$)   & 1,000 & 1,000 & 6 & [`Amer-Indian-Eskimo': 1\%, `Asian-Pac-Islander': 3\%, `Black': 10\%, `White': 85\%, `Others': 1\%] \\
  FourSquare-NYC ($c = 1,000$) & 882 & 1,000  & 2 & [`$u_0$': 0.1\%, \ldots, `$u_{999}$': 0.1\%] \\
  FourSquare-TKY ($c = 1,000$) & 1,132 & 1,000 & 2 & [`$u_0$': 0.1\%, \ldots, `$u_{999}$': 0.1\%] \\
  \bottomrule
  \end{tabular}
\end{table*}

\smallskip\parax{Results on Effect of $k$}
Figure~\ref{fig:mc:k} present the values of $f(S), g(S)$, where $S$ is the solution returned by each algorithm, and the runtime of each algorithm for the cardinality constraint $k = \{5, 10, \ldots, 50\}$ on \emph{Facebook} and $k = \{10, 20, \ldots, 100\}$ on \emph{Pokec} with different group partitions.
According to Figure~\ref{fig:mc:tau}, we observe that the solutions of each algorithm are close to those for solely maximizing $f$ (i.e., SM) when $\tau \leq 0.5$ and to those for solely maximizing $g$ (i.e., RSM) when $\tau \geq 0.9$. By following the $80\%$ rule, a common practice in algorithmic fairness, and distinguishing BSM from SM and RSM, we set the value of $\tau$ to $0.8$ in the experiments to better evaluate the effect of $k$.
Since computing the optimal solutions becomes infeasible on large datasets, $\mathtt{OPT}_f$, $\mathtt{OPT}_g$, and \textsc{BSM-Optimal} are omitted.
In addition, the runtime of \textsc{Greedy} and \textsc{Saturate} are not presented separately in Figure~\ref{fig:mc:k} since they are used as subroutines in \textsc{BSM-TSGreedy} and \textsc{BSM-Saturate}.
Generally, the values of $f(S)$ and $g(S)$ increase with $k$ for all algorithms.
Meanwhile, the runtime of each algorithm only slightly grows with $k$.
This is because a great number of function evaluations are reduced by applying the \emph{lazy forward} strategy, and thus the total number of function evaluations increases marginally with $k$.
\textsc{BSM-Saturate} provides higher-quality solutions than \textsc{BSM-TSGreedy} at the expense of lower efficiency for different $k$'s.
The solutions of both algorithms satisfy $g(S) \geq \tau \cdot \mathtt{OPT}'_g$ in all cases.
Finally, their time efficiencies on \emph{Pokec} confirm that they are scalable to large graphs with over one million nodes and thirty million edges.
We note that the values of $f(S)$ and $g(S)$ are much smaller on \emph{Pokec} than on \emph{Facebook} because \emph{Pokec} is larger and sparser, on which only a very small portion of users ($<7\%$) are covered by at most $100$ nodes.

\subsection{Influence Maximization}

\parax{Setup}
In the BSM framework, we propose a new problem that integrates the classic IM \cite{KempeKT03} aimed to maximize the overall influence spread over all users and the group fairness-aware IM \cite{TsangWRTZ19, BeckerCDG20} aimed to guarantee a balanced influence distribution among groups.
In particular, we are given a graph $G=(V, E, p)$, where $V$ is a set of nodes, $E$ is a set of edges, and $p: E \rightarrow \mathbb{R}^+$ is a function that assigns a propagation probability to each edge.
We follow the \emph{independent cascade} (IC) model \cite{KempeKT03} to describe the diffusion process\footnote{Note that all the algorithms we compare can be trivially extended to any diffusion model, e.g., \emph{linear threshold} and \emph{triggering} models \cite{KempeKT03}, and parameter settings where the influence spread function is monotone and submodular.} and set $p(e) = 0.1$ or $0.01$ for each $e \in E$.
We define $f_u(S) = \mathbb{P}_u(S)$, where $\mathbb{P}_u(S)$ is the probability that user $u$ is influenced by a set $S \subseteq V$ of seeds under the IC model.
Accordingly, the two functions $f$ and $g$ in Eqs.~\ref{eq:average} and~\ref{eq:maximin} are equivalent to the influence spread function in \cite{KempeKT03} divided by the number of users and the maximin welfare function in \cite{TsangWRTZ19}, respectively.
We use a \emph{reverse influence set} \cite{BorgsBCL14} (RIS) based algorithm called IMM \cite{TangSX15} for influence spread estimation.
And for any solution $S$, we run Monte-Carlo simulations 10,000 times to estimate $f(S)$ and $g(S)$.
The datasets we use in the IM experiments are almost identical to those in the MC experiments, except that the number of nodes in random graphs is reduced to $100$.

\smallskip\parax{Results}
Figures~\ref{fig:im:tau}--\ref{fig:im:k} show the results by varying the factor $\tau = \{0.1, 0.2,$ $\ldots, 0.9\}$ on random graphs when $k = 5$ and \emph{DBLP} when $k =10$, as well as the results by varying the solution size $k = \{5, 10, \ldots, 50\}$ on \emph{Facebook} and $k = \{10, 20, \ldots, 100\}$ on \emph{Pokec} when $\tau = 0.8$.
For the IM problem, the optimal solutions are infeasible even on very small graphs because computing the influence spread under the IC model is $\#$P-hard \cite{ChenWW10}.
Thus, the results of $\mathtt{OPT}_f$, $\mathtt{OPT}_g$, and \textsc{BSM-Optimal} are all ignored.
Generally, the results for IM exhibit similar trends to those for MC. Specifically, \textsc{BSM-Saturate} and \textsc{BSM-TSGreedy} strike better balances between $f$ and $g$ than SMSC.
But the solution quality of \textsc{BSM-TSGreedy} is mostly close to and sometimes higher than that of \textsc{BSM-Saturate} for IM. Meanwhile, it still runs 1.5--4\texttimes\ faster than \textsc{BSM-Saturate}.
Nevertheless, \textsc{BSM-TSGreedy} breaks the ``weak'' constraint of $g(S) \geq \tau \cdot \mathtt{OPT}'_g$ in a few cases due to the errors in influence estimations by IMM.
Finally, although our algorithms take much longer for IM than MC, they are still scalable to large graphs such as \emph{Pokec}, where the computation is always finished within 1,000 seconds.

\begin{figure*}[t]
  \centering
  \includegraphics[width=.8\linewidth]{figs/legend-1.pdf}\\
  \smallskip
  \subcaptionbox{RAND ($c = 2, k = 5$)\label{fig:fl:tau:1}}[.33\linewidth]{\includegraphics[width=.48\linewidth]{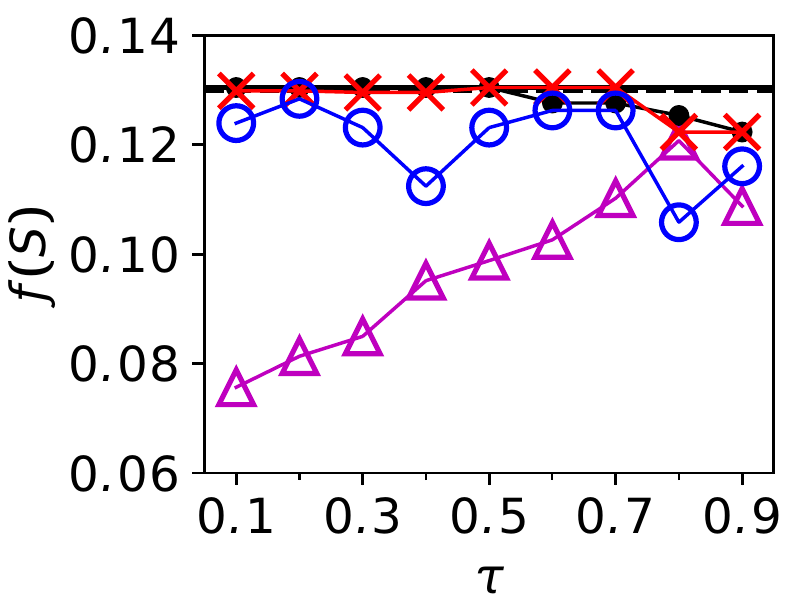} \includegraphics[width=.48\linewidth]{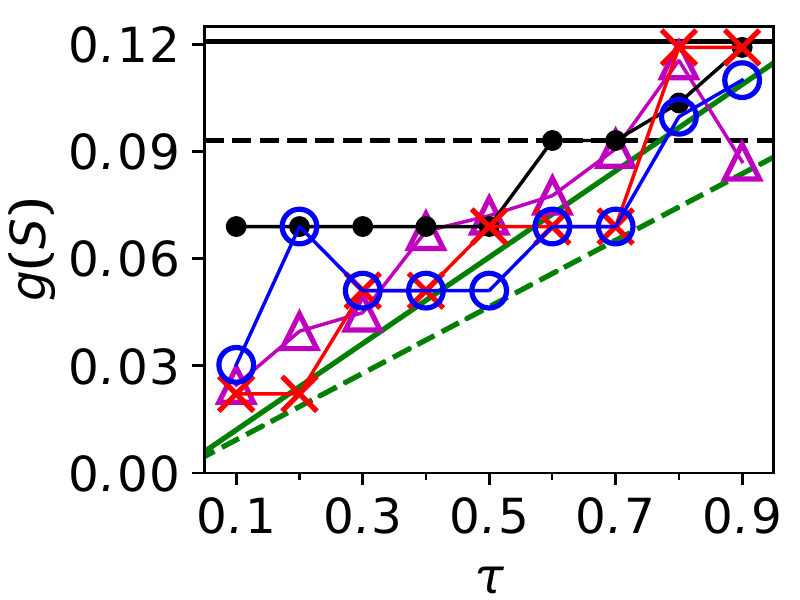}}
  \subcaptionbox{RAND ($c = 3, k = 5$)\label{fig:fl:tau:2}}[.33\linewidth]{\includegraphics[width=.48\linewidth]{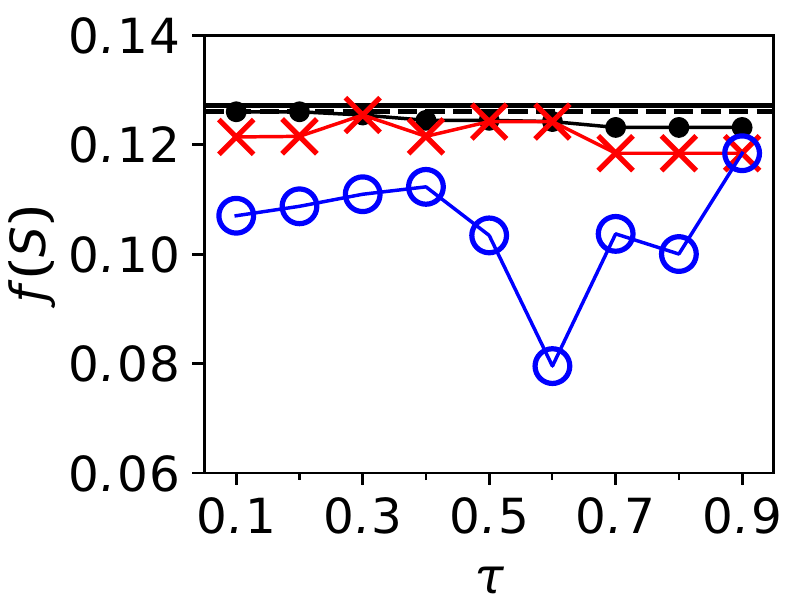} \includegraphics[width=.48\linewidth]{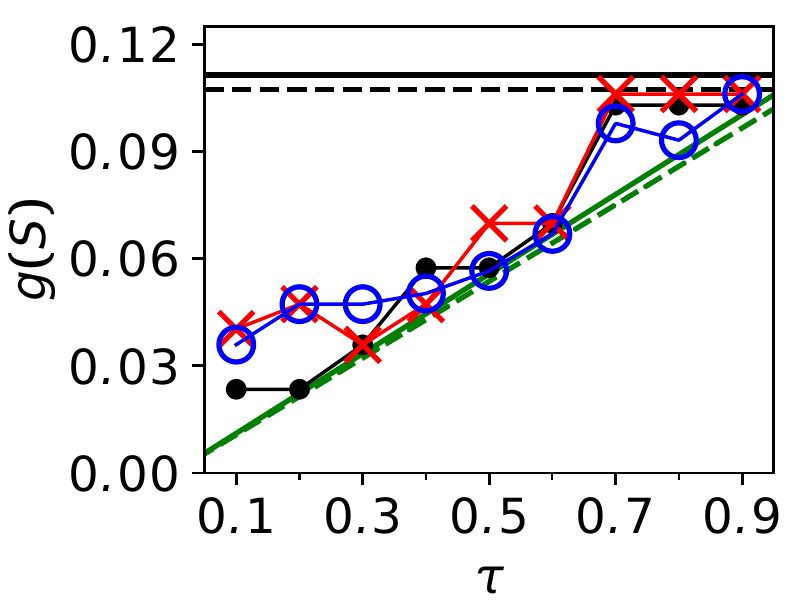}}
  \subcaptionbox{Adult-Small ($c = 5, k = 5$)\label{fig:fl:tau:3}}[.33\linewidth]{\includegraphics[width=.48\linewidth]{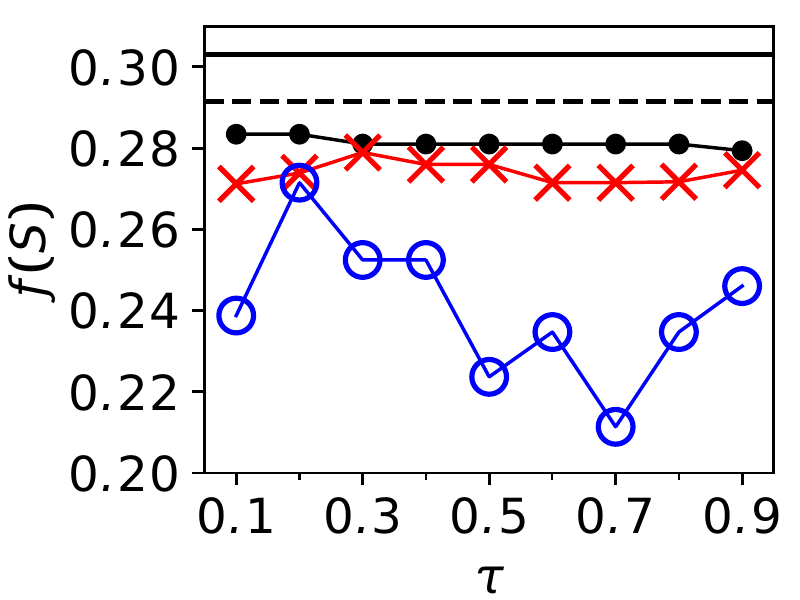} \includegraphics[width=.48\linewidth]{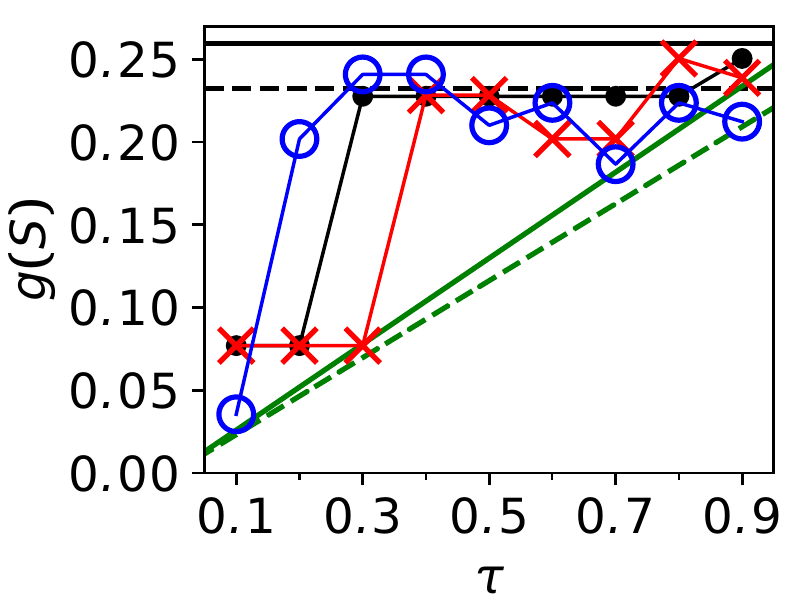}}
  \caption{Results of different algorithms for facility location by varying the factor $\tau$ on two random datasets and Adult-Small. The green straight lines are drawn in the same manner as those in Figure~\ref{fig:mc:tau}.}
  \label{fig:fl:tau}
\end{figure*}
\begin{figure*}[t]
  \centering
  \includegraphics[width=.8\linewidth]{figs/legend-2.pdf}\\
  \smallskip
  \subcaptionbox{Adult (Gender, $c = 2, \tau = 0.8$)\label{fig:fl:k:1}}[.49\linewidth]{\includegraphics[width=.32\linewidth]{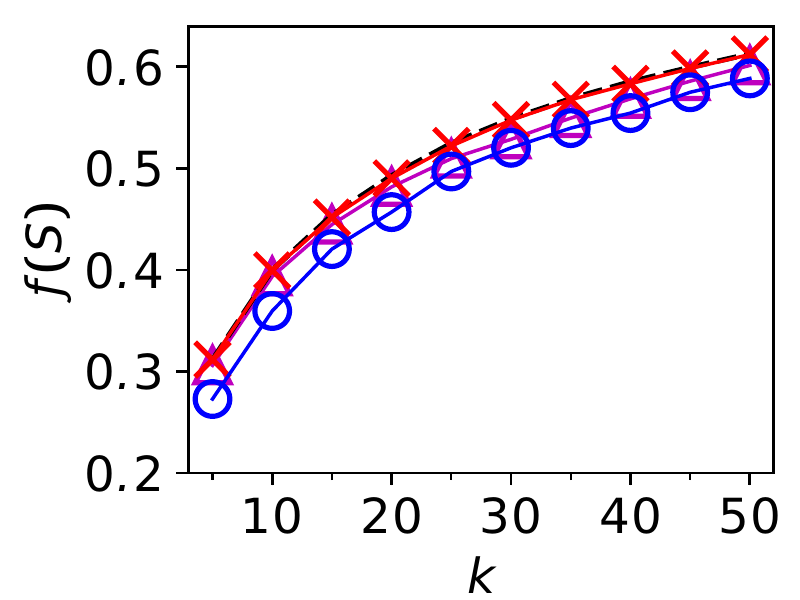} \includegraphics[width=.32\linewidth]{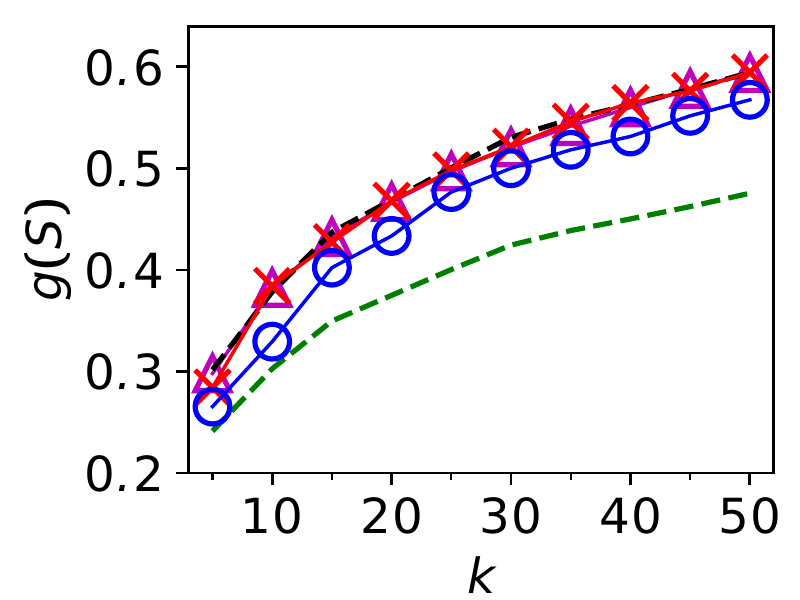} \includegraphics[width=.32\linewidth]{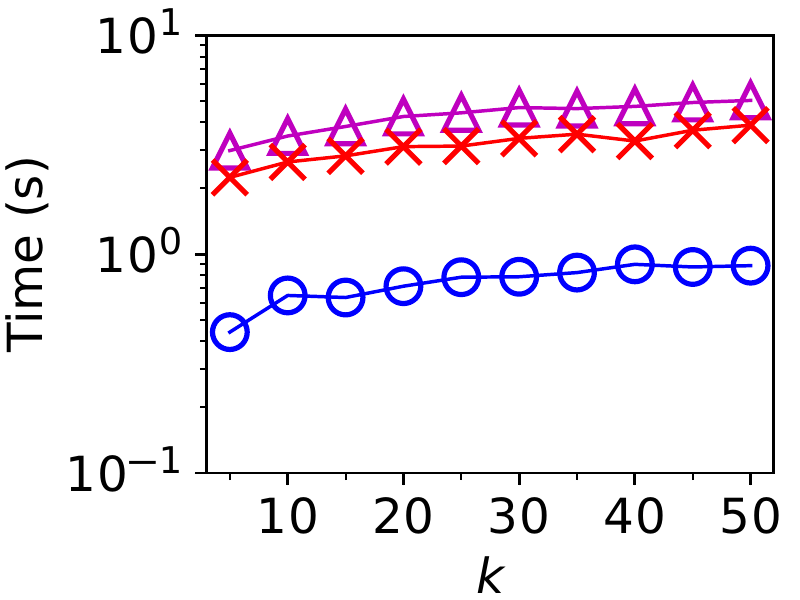}}
  \subcaptionbox{Adult (Race, $c = 5, \tau = 0.8$)\label{fig:fl:k:2}}[.49\linewidth]{\includegraphics[width=.32\linewidth]{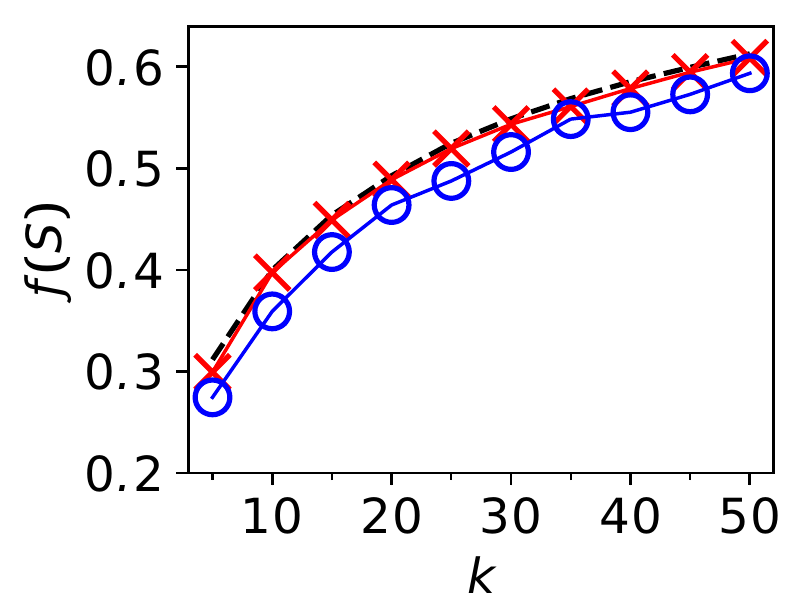} \includegraphics[width=.32\linewidth]{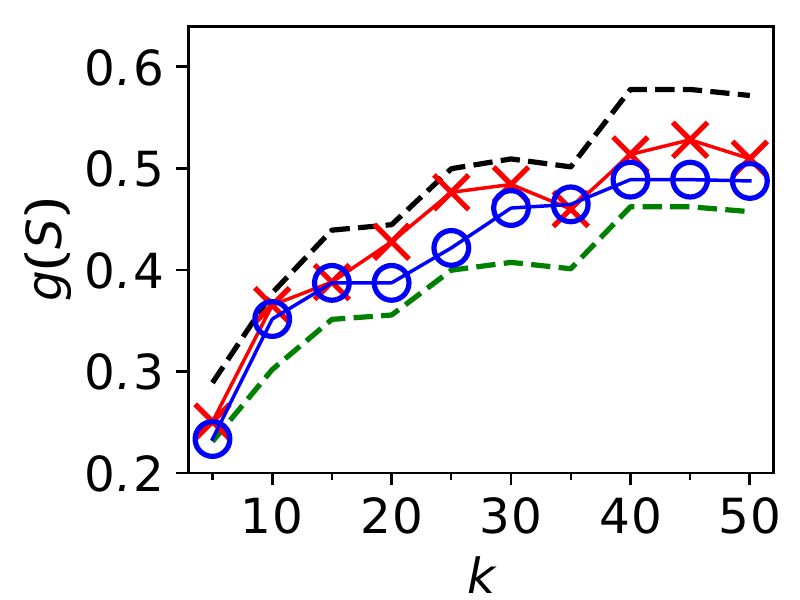} \includegraphics[width=.32\linewidth]{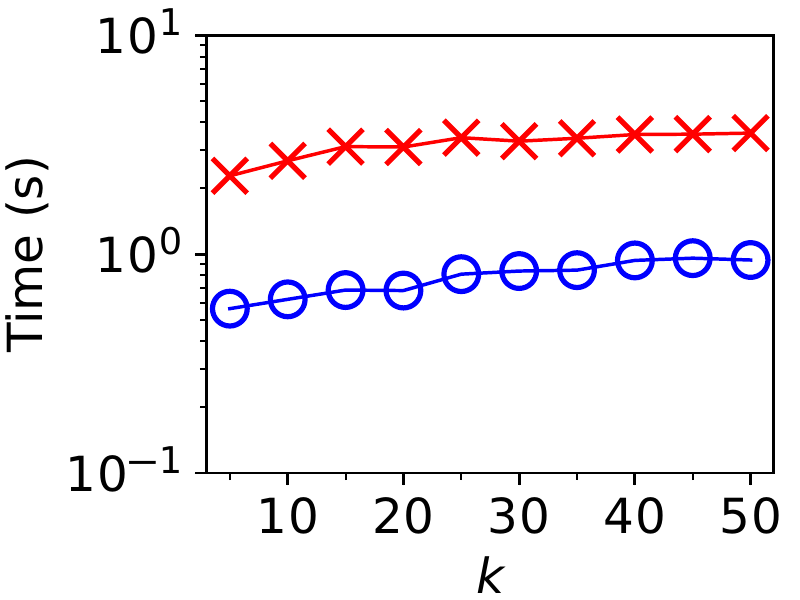}}
  \\ \smallskip
  \subcaptionbox{FourSquare-NYC ($c = 1000, \tau = 0.8$)\label{fig:fl:k:3}}[.49\linewidth]{\includegraphics[width=.32\linewidth]{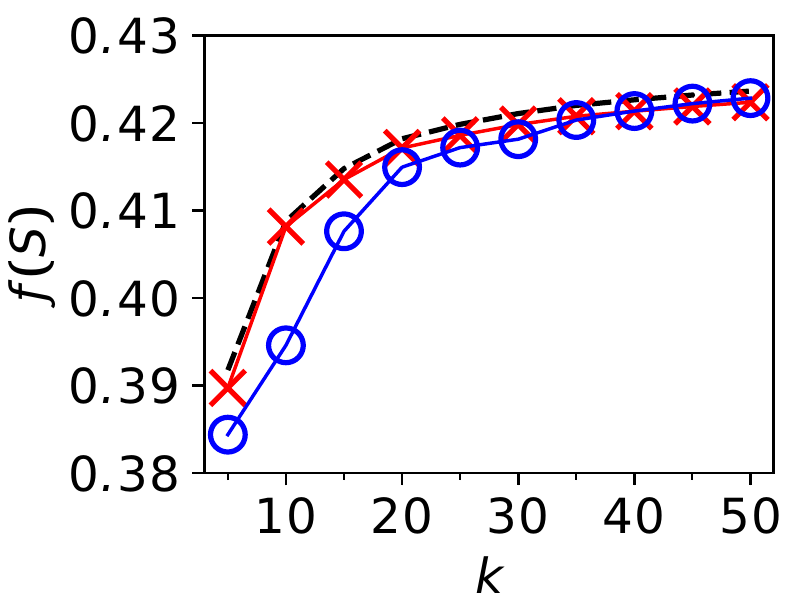} \includegraphics[width=.32\linewidth]{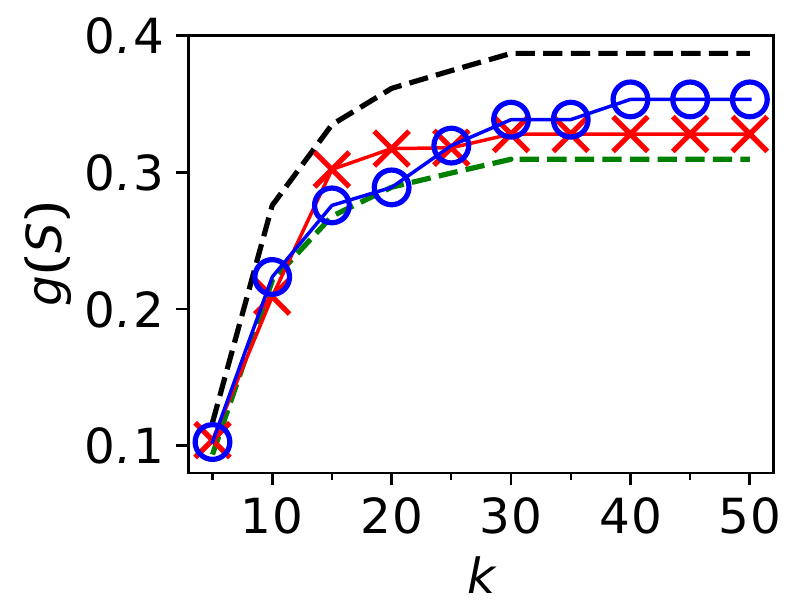} \includegraphics[width=.32\linewidth]{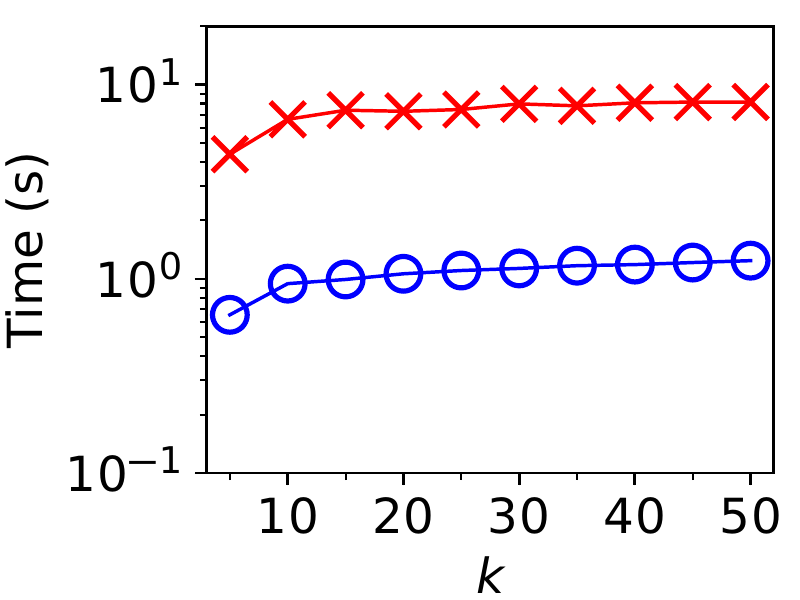}}
  \subcaptionbox{FourSquare-TKY ($c = 1000, \tau = 0.8$)\label{fig:fl:k:4}}[.49\linewidth]{\includegraphics[width=.32\linewidth]{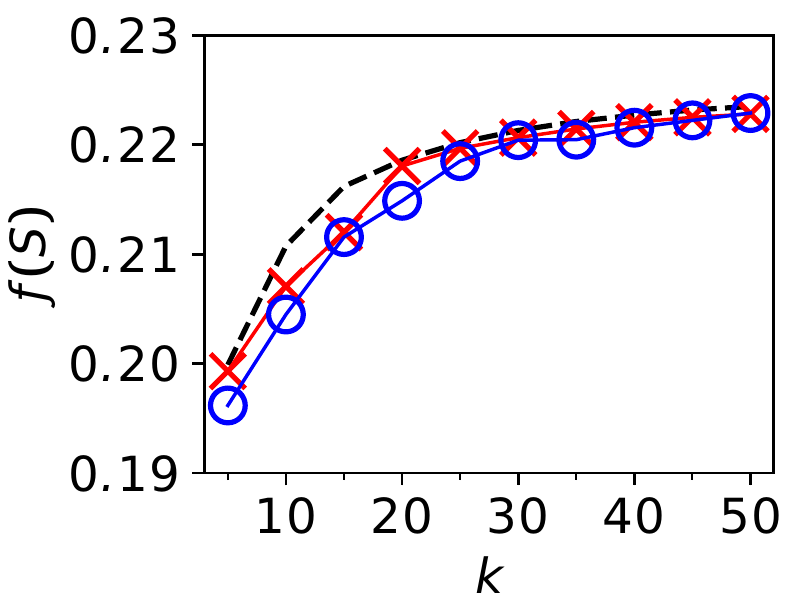} \includegraphics[width=.32\linewidth]{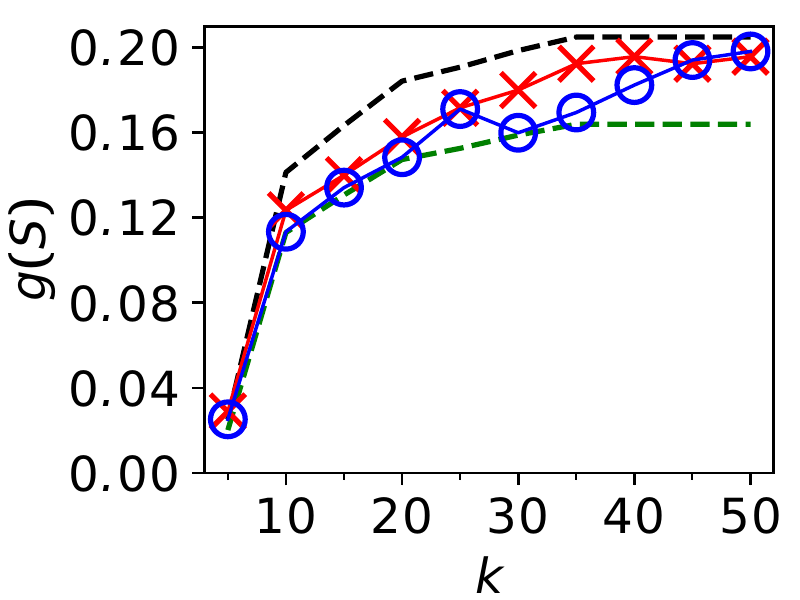} \includegraphics[width=.32\linewidth]{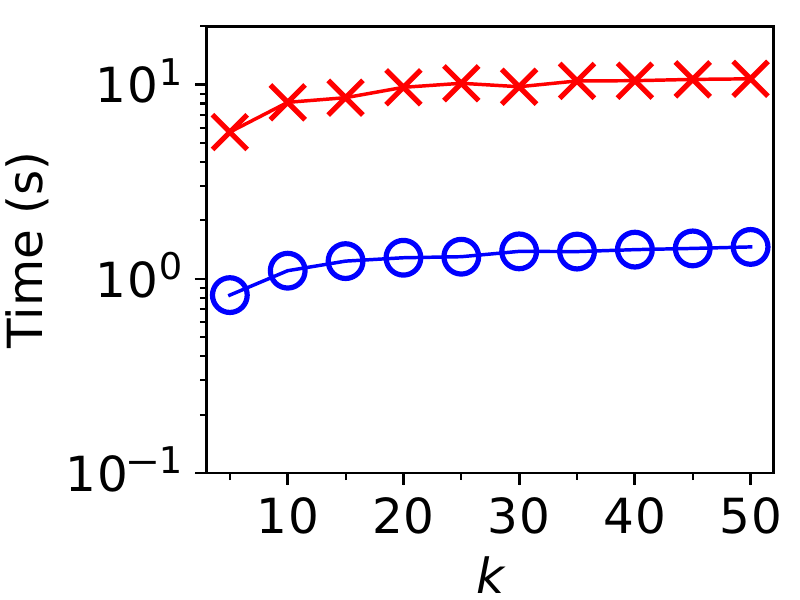}}
  \caption{Results of different algorithms for facility location by varying the solution size $k$ on Adult and FourSquare. The green dashed lines are drawn in the same manner as those in Figure~\ref{fig:mc:k}.}
  \label{fig:fl:k}
\end{figure*}

\subsection{Facility Location}

\parax{Setup}
Facility location (FL) is a general model for different real-world problems such as \emph{exemplar clustering} \cite{BadanidiyuruMKK14} and \emph{data summarization} \cite{LindgrenWD16}.
For a set $U$ of $m$ users, a set $V$ of $n$ items (facilities), and a nonnegative \emph{benefit matrix} $B \in \mathbb{R}^{m \times n}$, where $b_{uv} \in B$ denotes the benefit of item $v$ on user $u$, we define a function $f_u(S) = \max_{v \in S} b_{uv}$ to measure the benefit of set $S$ on user $u$.
In this way, the two functions $f$ and $g$ in Eqs.~\ref{eq:average} and~\ref{eq:maximin} measures the average utility for all users and the minimum of the average utilities for all groups, respectively.
And the goal of our BSM framework is to balance both objectives.
Suppose that each user $u$ or item $v$ is represented as a vector $\mathbf{p}_u$ or $\mathbf{p}_v$ in $\mathbb{R}^d$.
We use two standard methods to compute the benefits in the literature: one is based on the \emph{k-median clustering} \cite{BadanidiyuruMKK14}, i.e., $b_{uv} = \max\{0, \overline{d} - dist(\mathbf{p}_u, \mathbf{p}_v)\}$, where $dist(\mathbf{p}_u, \mathbf{p}_v)$ is the Euclidean distance between $\mathbf{p}_u$ and $\mathbf{p}_v$ and $\overline{d}$ is the distance for normalization; the other is based on the \emph{RBF kernel} \cite{LindgrenWD16}, i.e., $b_{uv} = e^{- dist(\mathbf{p}_u, \mathbf{p}_v)}$.

In the experiments, we use two public real-world datasets, namely \emph{Adult}\footnote{\url{https://archive.ics.uci.edu/ml/datasets/adult}} and \emph{FourSquare} \cite{YangZZY15}.
The \emph{Adult} dataset contains socioeconomic records of individuals, where the \emph{gender} and \emph{age} attributes are used for group partitioning. We randomly sample 100 or 1,000 records as both facilities and users and adopt RBF for benefit computation.
The \emph{FourSquare} dataset includes check-in data in New York City and Tokyo. We extract all locations of \emph{medical centers} as facilities, randomly sample 1,000 distinct check-in locations as representatives of users, and adopt the k-median function for benefit computation. Since user profiles are not available in \emph{FourSquare}, we treat each user as a single group with $c=$ 1,000 groups in total.
Moreover, we generate two random 5$d$ datasets of size 100 with 2 and 3 groups, where each group corresponds to an isotropic Gaussian blob, and RBF is used for benefit computation.
Table~\ref{tbl-stats-fl} presents the statistics of all the above datasets, where $n$ is the number of facilities, $m$ is the number of users, and $d$ is the dimension of feature vectors. We also report the percentage of users from each group in the population.

\smallskip\parax{Results}
Figures~\ref{fig:fl:tau}--\ref{fig:fl:k} show the results by varying the factor $\tau = \{0.1, 0.2,$ $\ldots, 0.9\}$ on two random datasets and \emph{Adult-Small} when $k = 5$ and the solution size $k = \{5, 10, \ldots, 50\}$ on \emph{Adult} and \emph{FourSquare} when $\tau = 0.8$.
The results for FL are also generally similar to those for MC and IM.
\textsc{BSM-Saturate} has considerable advantages over \textsc{BSM-TSGreedy} in terms of solution quality, whereas \textsc{BSM-TSGreedy} runs much faster than \textsc{BSM-Saturate}.
And they both strike better balances between $f$ and $g$ than SMSC.
Finally, their performance on \emph{FourSquare} demonstrates that they are capable of providing high-quality solutions to BSM efficiently when the number $c$ of groups is large (i.e., up to $c=1,000$).

\section{Conclusions and Future Work}
\label{sec:conclusion}

In this paper, we studied the problem of balancing utility and fairness in submodular maximization.
We formulated the problem as a bicriteria optimization problem called BSM.
Since BSM generally could not be approximated within any constant factor, we proposed two instance-dependent approximation algorithms called \textsc{BSM-TSGreedy} and \textsc{BSM-Saturate} for BSM.
We showed the effectiveness, efficiency, and scalability of our proposed algorithms by performing extensive experiments on real-world and synthetic data in three problems: maximum coverage, influence maximization, and facility location.

In future work, despite the inapproximability of BSM, we will explore how to further improve the approximation factors for BSM by performing problem-specific analyses and exploiting the correlations between group-specific utility functions. It would also be interesting to generalize BSM to non-monotone or weakly submodular functions.

\appendix

\section{ILP Formulation}
\label{app:ilp}

Although BSM is generally inapproximable, for specific classes of BSM problems, it is possible to find the optimal solution of a small instance by solving it as an integer linear programming~\cite{ip-book} (ILP) problem using any ILP solver.
This approach is referred to as the \textsc{BSM-Optimal} algorithm in the experiments (Section \ref{sec:exp}).
Here, we define the ILP formulations of \emph{maximum coverage} and \emph{facility location} in the context of BSM.
Note that these formulations are problem-specific and cannot be extended to other submodular maximization problems, such as \emph{influence maximization}.

\begin{figure*}[t]
  \centering
  \subcaptionbox{RAND (MC, $c = 2, k = 5$)}[.245\linewidth]{\includegraphics[width=.8\linewidth]{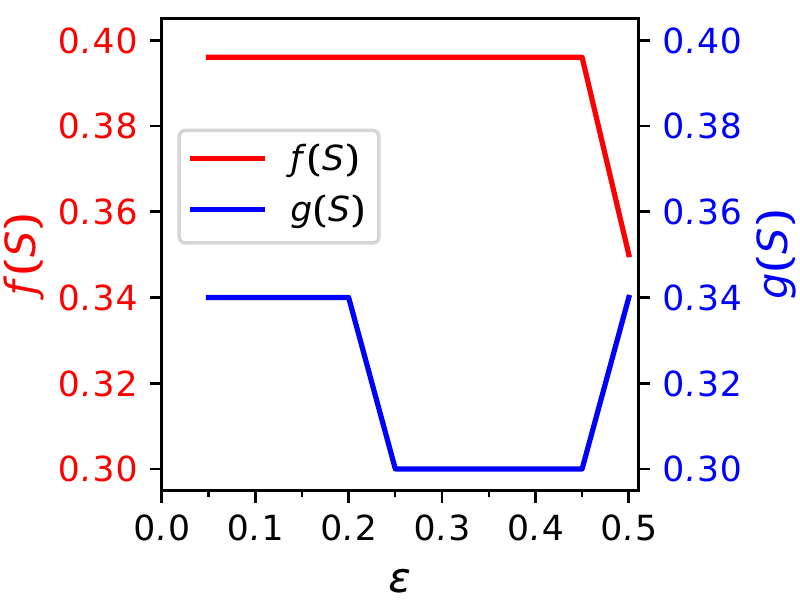}}
  \subcaptionbox{RAND (MC, $c = 4, k = 5$)}[.245\linewidth]{\includegraphics[width=.8\linewidth]{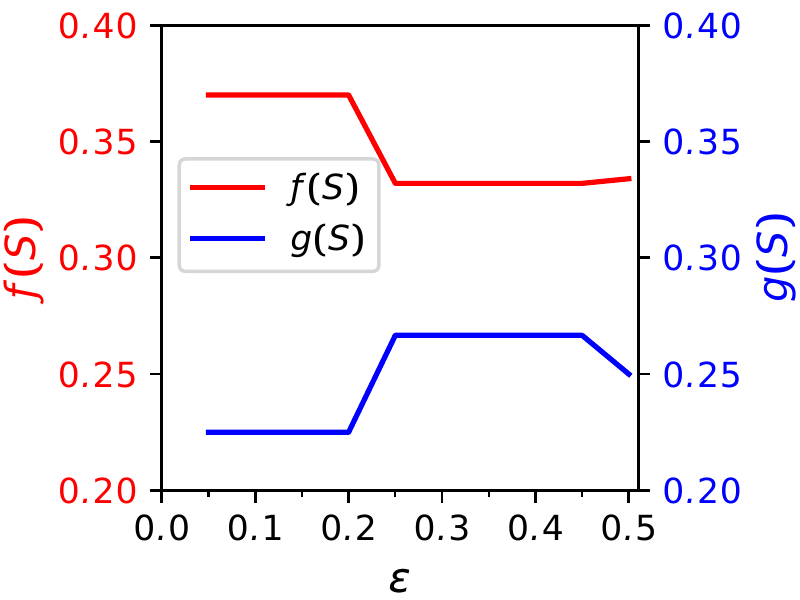}}
  \subcaptionbox{RAND (IM, $c = 2, k = 5$)}[.245\linewidth]{\includegraphics[width=.8\linewidth]{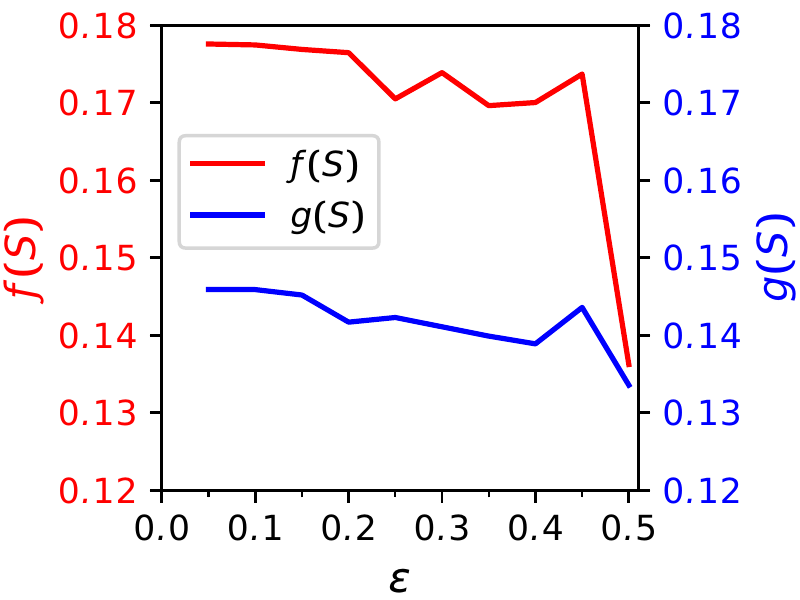}}
  \subcaptionbox{RAND (FL, $c = 2, k = 5$)}[.245\linewidth]{\includegraphics[width=.8\linewidth]{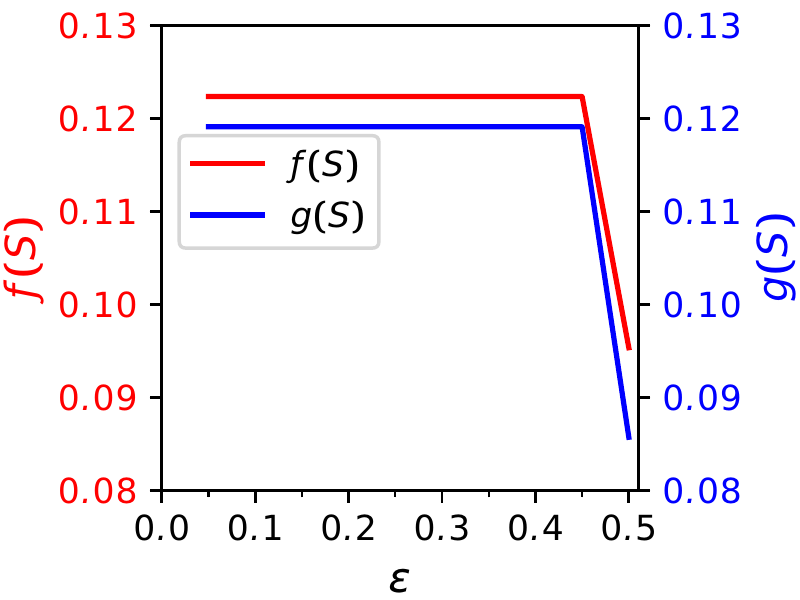}}
  \caption{Results of \textsc{BSM-Saturate} by varying the error parameter $\varepsilon$ on random datasets ($\tau = 0.8$).}
  \label{fig:eps}
\end{figure*}

By adapting the standard ILP formulation of maximum coverage\footnote{\url{https://en.wikipedia.org/wiki/Maximum_coverage_problem}}, we present an ILP to maximize the \emph{average coverage} (i.e., the function $f$ in BSM) on a universe $U = \{ u_1, \ldots, u_m \}$ of $m$ elements (users) and a collection $ V = \{ S_1, \ldots, S_n \}$ of $n$ sets (items) with a cardinality constraint $k$ as follows:
\begin{align}\label{eq:ilp:mc}
  \max        \quad & \quad \sum_{j \in [m]} \frac{y_j}{m} \\
  \text{subject to} \quad & \quad \sum_{l \in [n]} x_l \leq k \nonumber \\
                    & \quad \sum_{u_j \in S_l} x_l \geq y_j \nonumber \\
                    & \quad y_j \in \{0, 1\}, \forall j \in [m] \nonumber \\
                    & \quad x_l \in \{0, 1\}, \forall l \in [n] \nonumber
\end{align}
where $x_l$ is an indicator of whether set $S_l \in V$ is included in $S$ and $y_j$ is an indicator of whether user $u_j \in U$ is covered by $S$.

Then, we generalize the ILP in Eq.~\ref{eq:ilp:mc} to robust maximum coverage that aims to maximize the minimum of the average coverage among $c$ groups $U_1, \ldots, U_c$ (i.e., the function $g$ in BSM).
We introduce a new variable $w$ to denote the value of $g(S)$.
As such, the objective of the generalized ILP is to maximize $w$.
Meanwhile, we should incorporate new constraints on the average coverage of every group so that it does not exceed $w$ (thus, $w$ is the minimum).
In particular, the ILP formulation of \emph{robust maximum coverage} is as follows:
\begin{align}\label{eq:ilp:robust:mc}
  \max        \quad & \quad w \\
  \text{subject to} \quad & \quad \sum_{l \in [n]} x_l \leq k \nonumber \\
                    & \quad \sum_{u_j \in S_l} x_l \geq y_j \nonumber \\
                    & \quad \sum_{u_j \in U_i} \frac{y_j}{m_i} \geq w, \forall i \in [c] \nonumber \\
                    & \quad y_j \in \{0, 1\}, \forall j \in [m] \nonumber \\
                    & \quad x_l \in \{0, 1\}, \forall l \in [n] \nonumber \\
                    & \quad w \geq 0 \nonumber
\end{align}

Finally, given the optimal objective value of robust maximum coverage in Eq.~\ref{eq:ilp:robust:mc} as input $\mathtt{OPT}_g$, we can define the BSM version of maximum coverage for a balance parameter $\tau \in [0, 1]$ by adding new constraints $\sum_{u_j \in U_i} \frac{y_j}{m_i} \geq \tau \cdot \mathtt{OPT}_g$, $\forall i \in [c]$ to Eq.~\ref{eq:ilp:mc} to ensure that $g(S) \geq \tau \cdot \mathtt{OPT}_g$.

Then, for our facility location problem with a benefit matrix $B = \{b_{j l} : j \in [m], l \in [n] \} \in \mathbb{R}^{m \times n}$, we extend the ILP formulation for capacitated facility location\footnote{\url{https://en.wikipedia.org/wiki/Facility_location_problem}} as follows:
\begin{align}\label{eq:ilp:fl}
  \max        \quad & \quad \sum_{j \in [m]}\sum_{l \in [n]} \frac{b_{j l} y_{j l}}{m} \\
  \text{subject to} \quad & \quad \sum_{l \in [n]} x_l \leq k \nonumber \\
                    & \quad \sum_{l \in [n]} y_{j l} \leq 1, \forall j \in [m] \nonumber \\
                    & \quad y_{j l} \leq x_l, \forall j \in [m], l \in [n] \nonumber \\
                    & \quad y_{j l} \in \{0, 1\}, \forall j \in [m], l \in [n] \nonumber \\
                    & \quad x_l \in \{0, 1\}, \forall l \in [n] \nonumber
\end{align}
Similarly, we can also generalize Eq.~\ref{eq:ilp:fl} to the robust and BSM versions of facility location by
(\emph{i}) changing the objective to $\max w$ and adding the new constraints $\sum_{u_j \in U_i}\sum_{l \in [n]} \frac{b_{j l} y_{j l}}{m_i} \geq w, \forall i \in [c]$ and (\emph{ii}) adding the new constraints $\sum_{u_j \in U_i} \sum_{l \in [n]} \frac{b_{j l} y_{j l}}{m_i}$ $\geq \tau \cdot \mathtt{OPT}_g, \forall i \in [c]$, respectively.


\begin{figure*}[t]
  \centering
  \includegraphics[width=.8\linewidth]{figs/legend-1.pdf}\\
  \smallskip
  \subcaptionbox{Facebook (MC, Age, $c = 2, k = 5$)}[.33\linewidth]{\includegraphics[width=.48\linewidth]{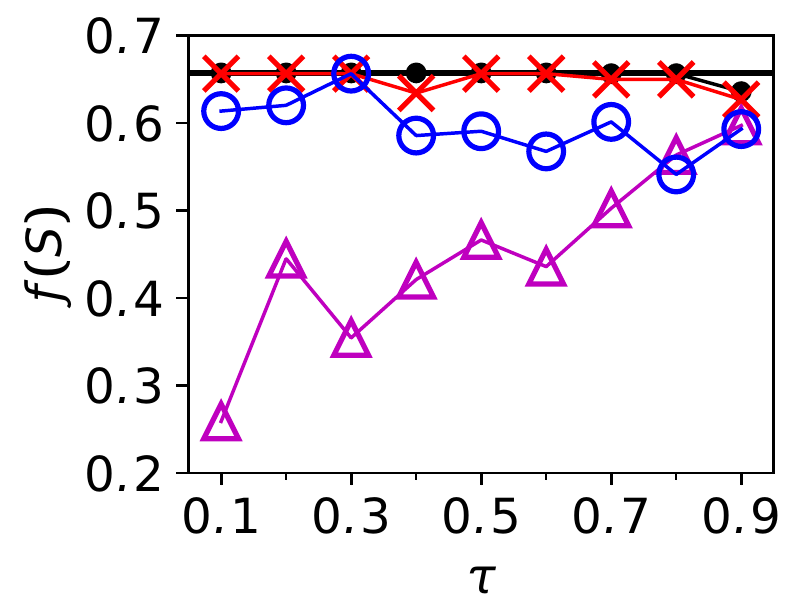} \includegraphics[width=.48\linewidth]{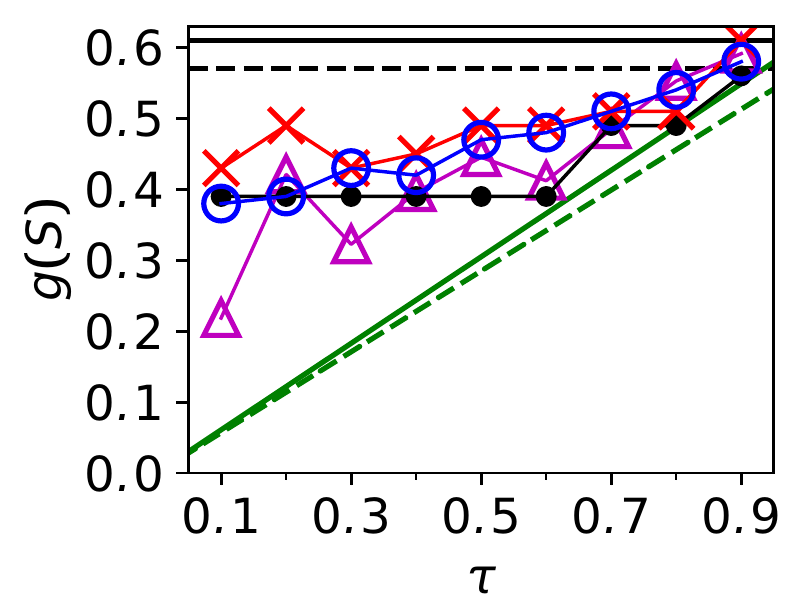}}
  \subcaptionbox{Facebook (MC, Age, $c = 4, k = 5$)}[.33\linewidth]{\includegraphics[width=.48\linewidth]{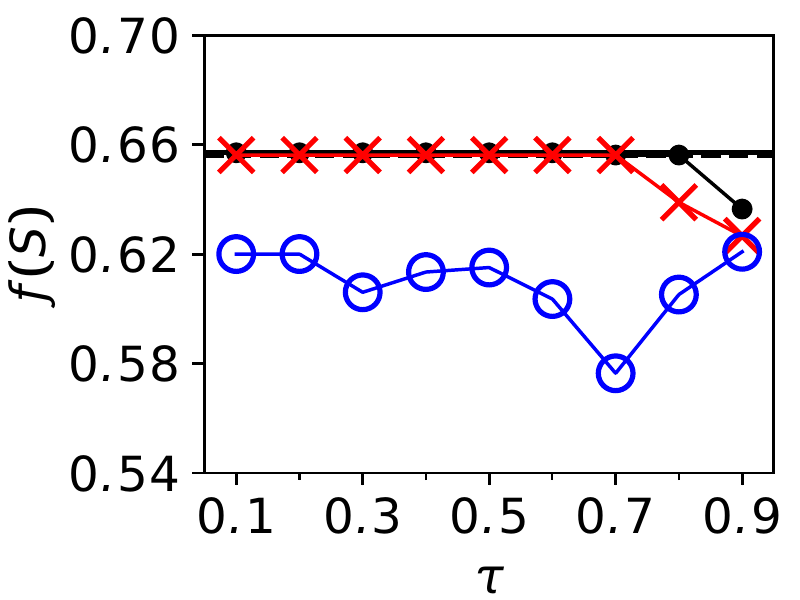} \includegraphics[width=.48\linewidth]{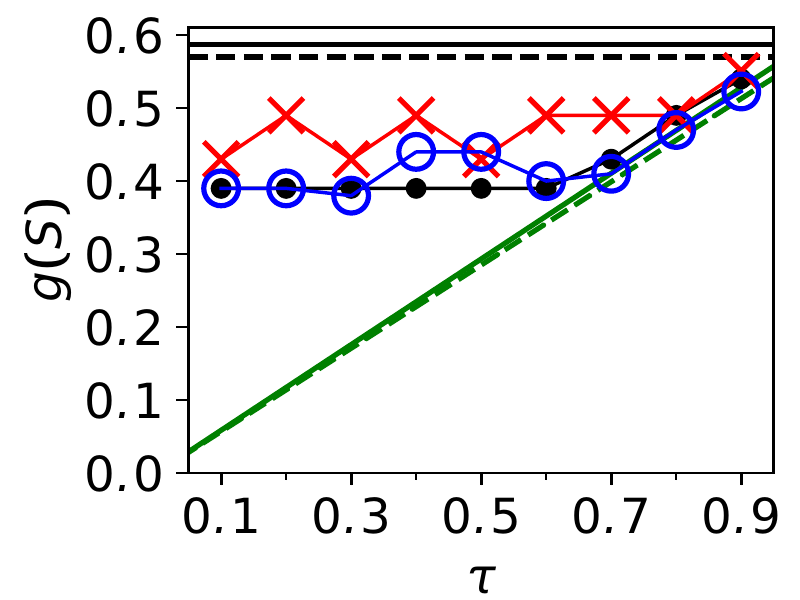}}
  \\
  \smallskip
  \subcaptionbox{Facebook (IM, Age, $c = 2, k = 5$)}[.33\linewidth]{\includegraphics[width=.48\linewidth]{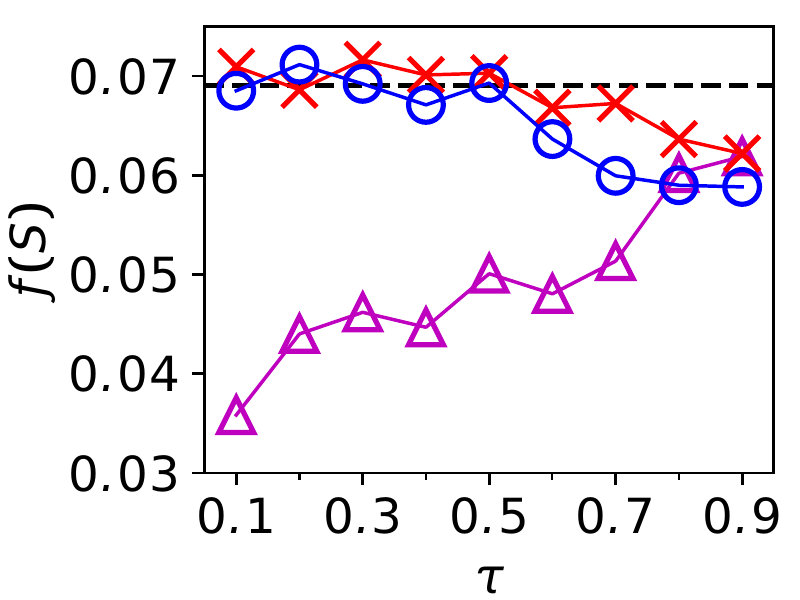} \includegraphics[width=.48\linewidth]{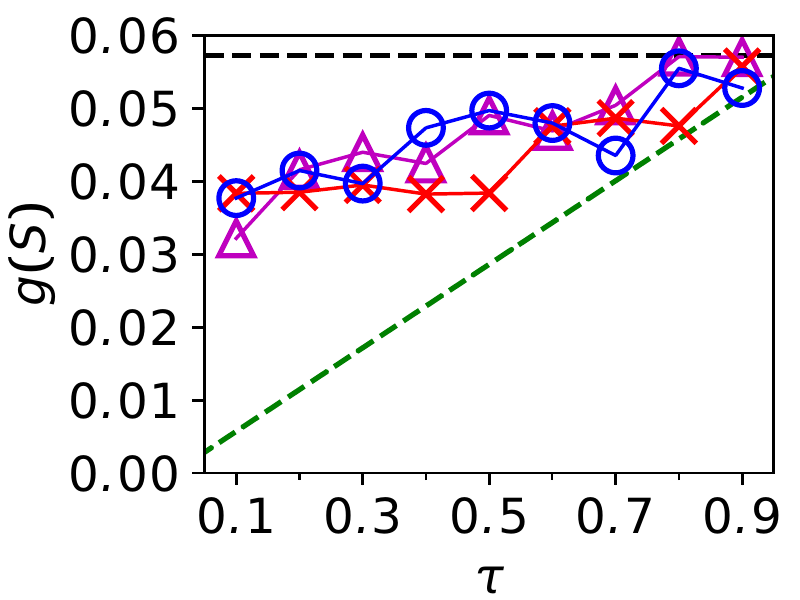}}
  \subcaptionbox{Facebook (IM, Age, $c = 4, k = 5$)}[.33\linewidth]{\includegraphics[width=.48\linewidth]{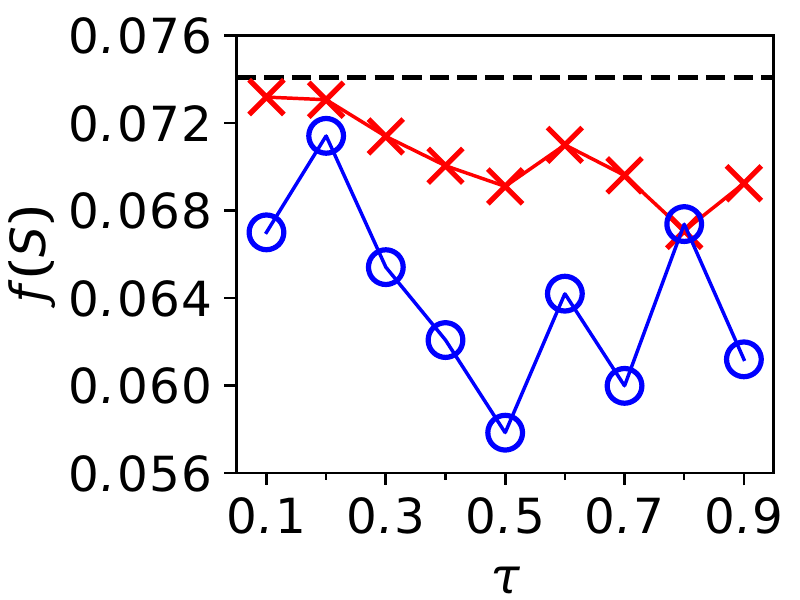} \includegraphics[width=.48\linewidth]{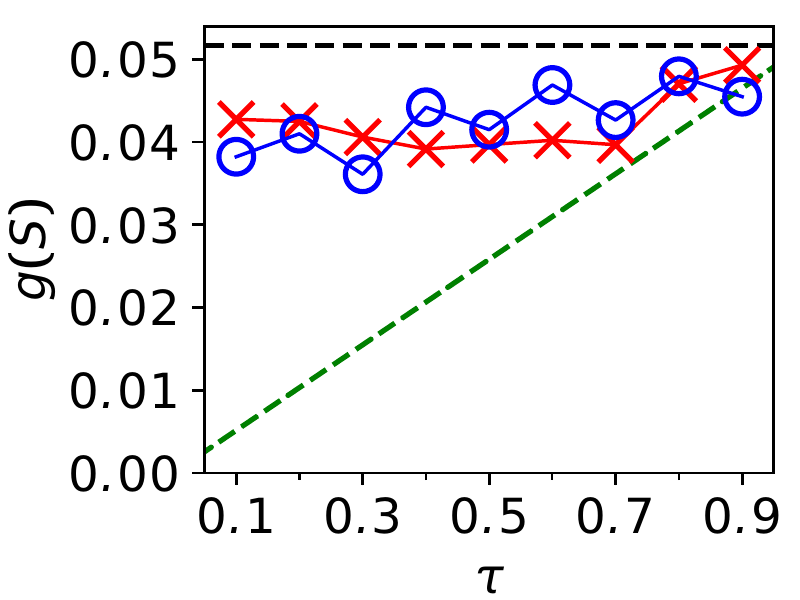}}
  \caption{Results for maximum coverage and influence maximization by varying the factor $\tau$ on Facebook.}
  \label{fig:tau-fb}
\end{figure*}
\begin{figure*}[t]
  \centering
  \includegraphics[width=.8\linewidth]{figs/legend-2.pdf}\\
  \smallskip
  \subcaptionbox{DBLP (MC, $c = 5, \tau = 0.8$)}[.49\linewidth]{\includegraphics[width=.32\linewidth]{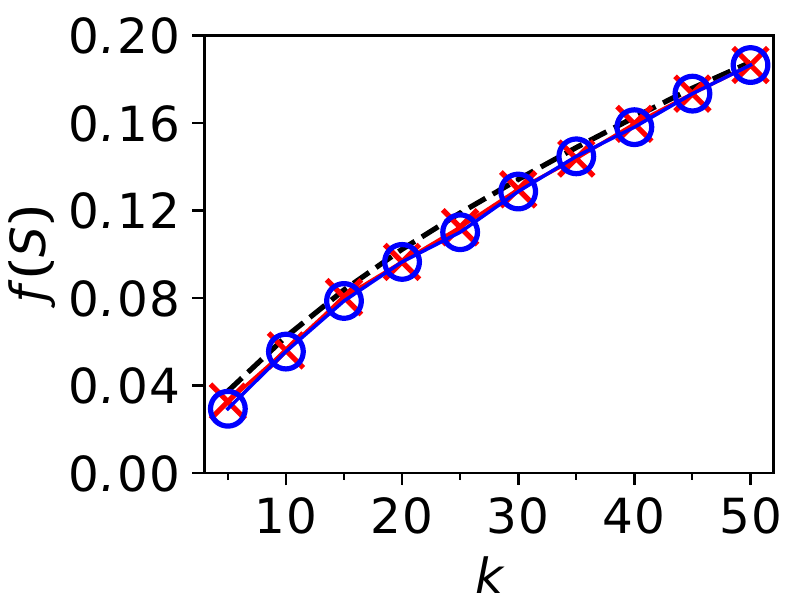} \includegraphics[width=.32\linewidth]{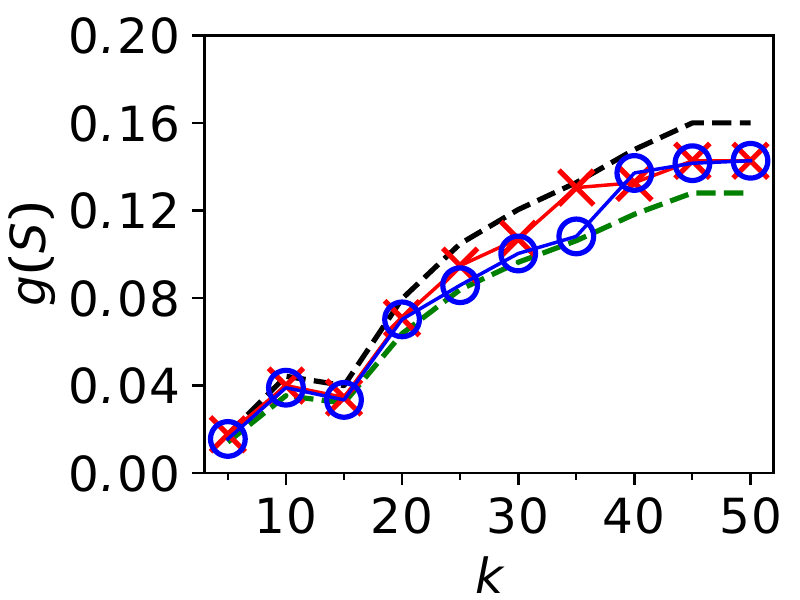} \includegraphics[width=.32\linewidth]{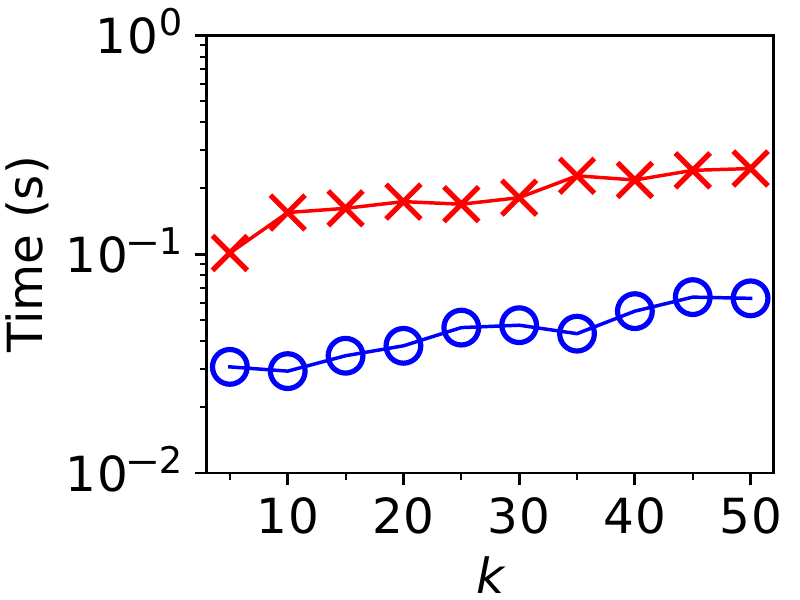}}
  \subcaptionbox{DBLP (IM, $c = 5, \tau = 0.8$)}[.49\linewidth]{\includegraphics[width=.32\linewidth]{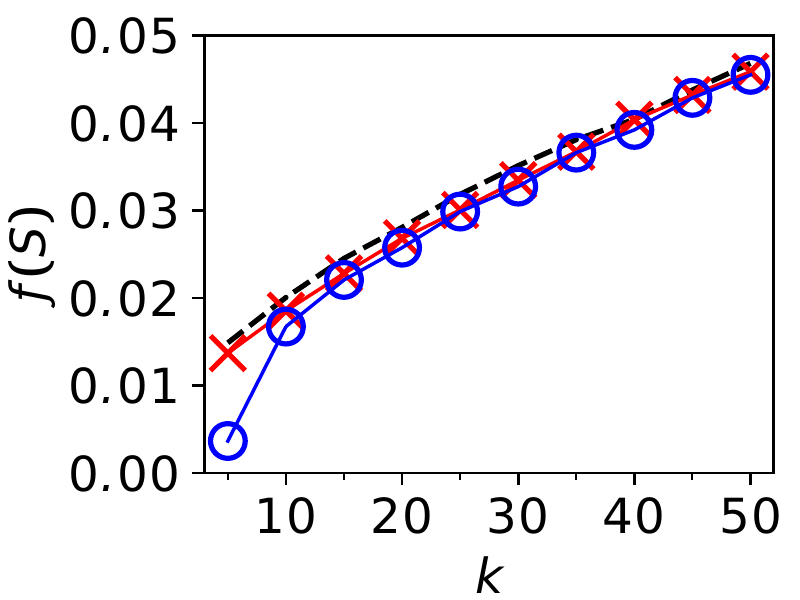} \includegraphics[width=.32\linewidth]{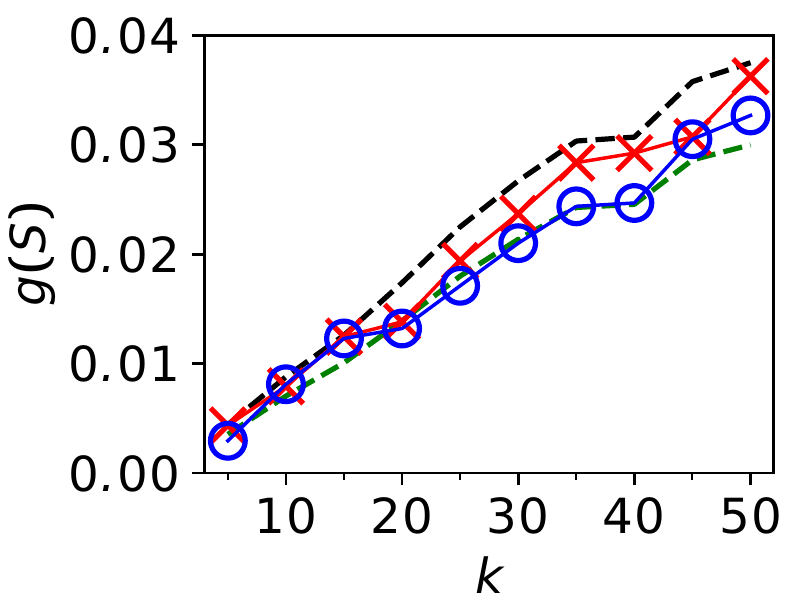} \includegraphics[width=.32\linewidth]{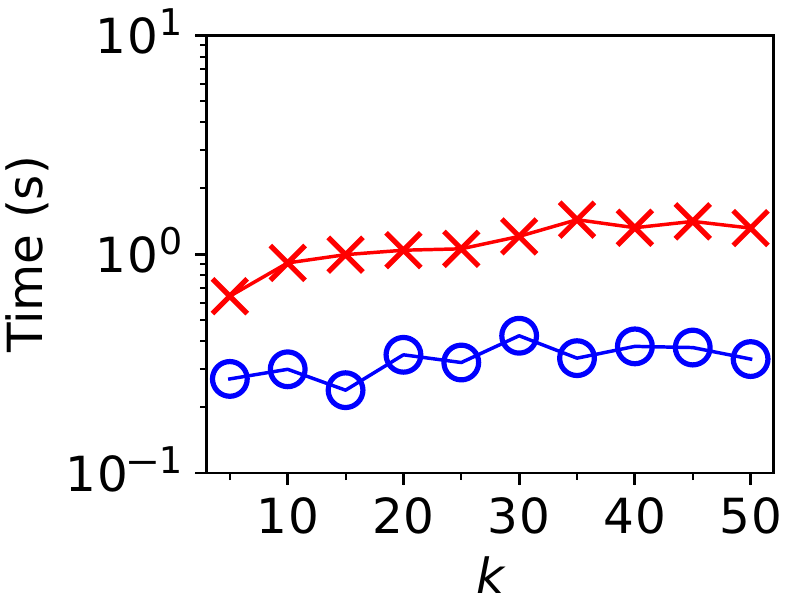}}
  \caption{Results for maximum coverage and influence maximization by varying the solution size $k$ on DBLP.}
  \label{fig:k-dblp}
\end{figure*}

\section{Additional Experiments}
\label{app:sec:exp}

\noindent\textbf{Effect of Error Parameter $\varepsilon$:} We present how the values of $f(S)$ and $g(S)$ of the solution $S$ returned by \textsc{BSM-Saturate} are affected by the error parameter $\varepsilon \in \{0.05, 0.1, \ldots, 0.5\}$ in Figure~\ref{fig:eps}.
In opposition to our theoretical analysis, we find that the value of $\varepsilon$ hardly affects the performance of \textsc{BSM-Saturate} until $\varepsilon \geq 0.5$ in practice.
Since the ratio between $\alpha_{max}$ and $\alpha_{min}$ (see Line~\ref{ln-stop} of Algorithm~\ref{alg1}) is used as the stop condition, \textsc{BSM-Saturate} will never terminate until $\alpha_{min} > 0$.
We observe that the absolute differences between $\alpha_{min}$'s at different iterations are often small since they are close to $0$. Thus, their corresponding solutions are close to each other in terms of $f(S)$ and $g(S)$.
Therefore, the values of $\alpha_{min}$, as well as $f(S)$ and $g(S)$, do not change so much for different $\varepsilon$'s smaller than $0.5$.
Nevertheless, to guarantee the good performance of \textsc{BSM-Saturate} in extreme cases, we use $\varepsilon = 0.05$ in all the experiments of Section~\ref{sec:exp}.

\smallskip
\noindent\textbf{More Results on Effects of $\tau$ and $k$:}
We provide more results on the effects of $\tau$ and $k$ in Figures~\ref{fig:tau-fb}--\ref{fig:k-dblp}, which further confirms the superiority of our proposed algorithms.

\begin{acks}
  This work was supported by the National Natural Science Foundation of China under grant number 62202169.
\end{acks}

\balance
\bibliographystyle{ACM-Reference-Format}
\bibliography{references}

\end{document}